\pgfplotsset{compat=1.17} 
\newtheorem{thm}{Theorem}       
\newtheorem*{thm*}{Theorem}
\newtheorem{proposition}{Proposition}
\newtheorem*{prop*}{Proposition} 
\newtheorem{cor}{Corollary}     
\newtheorem{lemma}{Lemma}
\newtheorem*{lem*}{Lemma}
\newtheorem{definition}{Definition}
\newtheorem*{defn*}{Definition}
\newtheorem{rem}{Remark} 
\newtheorem*{rem*}{Remark} 
\newtheorem*{ex*}{Example} 
\newtheorem*{ass*}{Assumption}
\definecolor{green}{HTML}{51A351}
\definecolor{blue}{HTML}{2F96b4}
\definecolor{gray}{HTML}{BCBCBC}
\definecolor{gray1}{rgb}{0.62,0.65,0.65} 
\definecolor{gray2}{rgb}{0.29,0.31,0.31} 
\definecolor{gray3}{rgb}{0.06,0.07,0.08} 
\definecolor{teal1}{rgb}{0.13,0.54,0.52} 
\definecolor{teal2}{rgb}{0.18,0.34,0.32} 
\definecolor{brown1}{rgb}{0.59,0.53,0.49} 
\definecolor{brown2}{rgb}{0.81,0.73,0.64} 
\definecolor{yellow1}{RGB}{228,182,19} 
\definecolor{blue_low}{RGB}{137,187,255} 
\definecolor{blue_high}{RGB}{74,141,255} 
\definecolor{red_low}{RGB}{255,123,123} 
\definecolor{red_high}{RGB}{255,0,0} 
\newcommand{\monthyeardate}{\ifcase \month \or January\or February\or March\or April\or May \or June\or July\or August\or September\or October\or November\or December\fi, \number \year}
\title{Persuasion in Networks: Can the Sender Do Better than Using Public Signals?}
\author{Yifan Zhang\footnote{ Zhang: Christ's College, University of Cambridge (email: \texttt{yz558@cam.ac.uk}). I am especially grateful to Matt Elliott for guidance and comments throughout the project. I'm also grateful to Alastair Langtry and participants at the Cambridge networks workshop, Warwick PhD conference and GAMES 2024 for helpful comments and suggestions. This work receives support from the Faculty of Economics and Keynes Fund. Any remaining errors are the sole responsibility of the author. }}
\date{\today \\ \vspace{7mm}} 
\begin{document}
\maketitle
\begin{abstract}

Political and advertising campaigns increasingly exploit social networks to spread information and persuade people. This paper studies a persuasion model to examine whether such a strategy is better than simply sending public signals. Receivers in the model have heterogeneous priors and will pass on a signal if they are persuaded by it to take sender's preferred action. I show that a risk-neutral or risk-loving sender prefers to use public signals, unless the more sceptical receivers are sufficiently more connected in the network. This result still holds when the network exhibits homophily. When the sender is risk averse or has a threshold function as payoff function, I characterise conditions under which the sender prefers to exploit the network. An important factor identified by these conditions is whether the less sceptical receivers are sufficiently connected.

\noindent \\

\noindent  
\emph{Keywords:} Bayesian persuasion, networks, random networks, diffusion \\ 
\end{abstract}

\maketitle
\newpage

\onehalfspacing
\normalsize
	People are increasingly relying on social networks to gather information. Many argue that information going viral on social media had significant influences on people's decisions related to elections and public health\footnote{See e.g. \citet{2005Are}, \citet{2017Social} and references therein }. Realising the power of such viral information, firms and politicians have spent a vast amount of resources on social media campaigns. Their spending on social media is also increasing relative to traditional mass media\footnote{See, for example, \citet{Nelson2021} for political advertisements and the 28th edition of CMO survey for commercial advertisements( https://cmosurvey.org/results/28th-edition-february-2022/ ) }. A natural question is how effective such social media campaigns are in influencing people's decisions. In particular, are they more effective than mass media?

    To see the trade-off between using mass media and social networks, consider an example in which a firm tries to advertise its product. Some consumers are sceptical about the quality of the product, while others have prior belief that it has high quality. In order to persuade the sceptics to buy, the firm can conduct quality tests or get pre-release reviews, and send sceptics this information. Ideally, the firm would like to send this information to the sceptics but keep them away from the believers, since negative results from the tests and negative reviews could dissuade the believers. If the information is broadcast publicly, everyone must receive the same information. If the firm spreads information on social networks, some people may not receive information due to structure of the network and the way people share information, and the firm may be able to exploit this. In particular, the firm could try to target information towards the sceptics. However, communication over the network endogenously imposes restrictions on correlations between what believers and sceptics observe, so it is unclear whether and when the firm can succeed in sending information more often to sceptics than to believers. 
    
    In this setup, does the firm want to publicly broadcast information or spread information on the network? What are the networks in which public broadcasts can and cannot be outperformed? How do the firm's preferences influence the comparison?

To address these questions, I propose the following persuasion model. There are two states and receivers have two actions $\{0,1\}$. 
There are infinitely many receivers, who want to match their action with the state. The sender wants receivers to take action 1 regardless of the state. Receivers have heterogeneous priors. If receivers make decisions based on priors, some of them (believers) would choose action 1 while others (sceptics) would choose action 0. Receivers are located on a network captured by a random network model. Receivers know their own degree. 

 The sender can use public signals or network signals. With public signals, all receivers must receive the same realised signal. If the sender uses network signals, then the sender can select a small proportion of receivers as seeds based on their priors and positions in the network. These seeds will receive signals from the sender. When receivers observe some signal, they decide whether to pass it on to their neighbours. In the main analysis, I assume that receivers pass on a signal if and only if they are persuaded by it to take the sender's preferred action. Intuitively, if people are persuaded to purchase a product or vote for a politician, they are willing to share its advertisement to their friends.

 I allow the sender to send no signal. This plays no role with public signals. If the sender uses network signals, the receivers may not receive any signals due to the network structure. When receivers observe no signal, they cannot distinguish whether it is due to network structure or sender choosing to send no signal. The sender could potentially exploit this in her benefit. 

The main research question is whether and when network signals can outperform public signals for the sender. The answer to this question in turn depends on whether the sender can use network signals and seeding to send information more often to sceptics than to believers. Also, some sceptics may not receive any signals if the sender uses network signals. This may be beneficial or costly to the sender, depending on the sender's preference.

The first result states that in the baseline network (i.e. one without homophily or correlations between degree and priors), if the sender is risk loving or risk neutral, network signals cannot outperform public signals. The main intuition underlying this result is that the sender needs signals to get viral in order to reach a large population, but it is difficult to control who are reached by a viral signal using seeding. Also, believers and sceptics will be ex ante symmetric in terms of their positions in such baseline networks. 
Therefore, there are no differences between how sceptics and believers are exposed to signals, and the sender cannot send information more often to sceptics than to believers.

One might suspect that homophily allows the sender to exploit seeding and provide information more often to sceptics than to believers. However, introducing homophily and/or making believers more connected than sceptics does not change the above result. Networks exhibiting these features do introduce asymmetry between believers and sceptics. However, the asymmetry makes believers more exposed to signals than sceptics, which is the opposite of what the sender desires. 

If sceptics are sufficiently more connected in the network than believers, the sender prefers network signals. Sceptics being more connected allows the sender to send information to them more often than to the believers. 

Next, I study the case where the sender is risk averse. When believers are sufficiently connected in the network, 
the sender can always persuade some sceptics using network signals, but it is in general impossible to persuade all sceptics at the same time. The reverse is true for public signals. Therefore, when believers are sufficiently connected, a sufficiently risk-averse sender prefers network signals. 

Finally, I study a voting game, in which sender's payoff function is a threshold function. Like in the case of a risk-averse sender, it is found that whether believers are sufficiently connected is an important factor influencing whether the sender prefers network signals. 

In summary, this paper characterises conditions on the network structure and the sender's preference under which network signals can and cannot outperform public signals. It shows that in many cases, public signals perform better even if receivers have heterogeneous priors. The conditions also illustrate that the sender's preference will influence which aspect of the network matters for the comparison between public signals and network signals. 


In practice, the sender in many situations has access to tools for publicly broadcasting information. The results imply that, in many cases, resources may be more efficiently spent on sending public signals, rather than on trying to understand network structures and picking the optimal seeds. In addition, persuasion on large networks is complicated while public signals have been well studied. The sender's value of using network signals can be very difficult to compute. From these comparisons, we can use sender's value of using public signals to obtain what sender could get at most or at least using network signals.

Relation and contribution to the literature are discussed in the next section. \Cref{sec:model} presents the model. \Cref{sec:results} presents the main results, and \Cref{sec:analysis} contains lemmas useful for understanding the main results. \Cref{sec:extension} discusses implications of relaxing some assumptions and possible extensions. The final section concludes.

\section{Literature Review} \label{sec:literature}
This paper is most related to the literature on persuasion with information spillovers. \citet{galperti2023spillover} studies an environment where a sender can directly send signals to some receivers (seeds), and receivers observe the same information as those connected to them (by a directed path). They show that the problem is equivalent to allowing the sender to directly send information to all receivers, by transforming the network appropriately. They provide a characterization of feasible outcomes and show how it changes with the network and seeds. \citet{babichenko2021multi} studies a similar problem, and also explores the algorithmic aspect of the problem. \citet{kerman2021persuading} studies a voting game in which receivers observe their neighbours' signals. They characterise sender's optimal signals in some stylised networks, and compare sender's payoff to the case when receivers do not communicate. \citet{liporace2021persuasion} assumes that receivers observe signals of their neighbours, and receivers have heterogeneous priors. She characterises the sender's optimal strategy, restricting attention to strategies that persuade everyone in the good state. 
In \citet{egorov2020persuasion}, information is shared between receivers with some probability, and receivers choose whether to subscribe to the sender which is costly. They characterise the sender's optimal strategy in some stylised networks. 

This strand of literature usually assumes that information will be transmitted mechanically to one's neighbours, while this paper studies more endogenous spread of information. Another key difference is that, the sender in most of these papers can always make sure that all receivers observe information from the sender. They do not consider the possibility that sender's information may not reach some receivers, which is very likely when considering large social networks. Also, in this paper, the sender and receivers do not have full information of the network, whereas the literature studies fixed and known networks. Uncertainties over network is a plausible assumption in large networks. 

Also related is the literature on strategic communication, which studies how players communicate by cheap talk when their payoffs depend on each other's actions. Most papers in this literature (e.g. \citet{galeotti2013strategic} and \citet{hagenbach2010strategic}) focus on conditions under which players truthfully communicate with each other, and do not study persuasion. The exception is \citet{squintani2024persuasion}. \citet{squintani2024persuasion} studies an environment where there is one sender and one receiver (randomly selected from all players). He characterises conditions for truthful communication and studies the optimal network and network formation. 

Some papers have looked at the comparison between public signals and private signals, but do not allow communication between receivers. \citet{wang2013bayesian} compares public signals and i.i.d. signals in a voting game, and finds that public signal is always better. \citet{inostroza2018persuasion} also compares public and private signals, imposing certain restrictions on private signals. \citet{candogan2020} studies a problem of a social media platform to signal the quality of content, and finds that public signals are sufficient if the goal is to minimize misinformation. \citet{arieli2019private} relates optimality of public signals to the core of a cooperative game. 
Communication between receivers puts endogenous restrictions on joint distributions over what receivers observe when sender sends personalised signals. Due to this and other more detailed differences in setup, the trade-offs and insights in this paper are different from existing papers in this literature. 

Finally, this paper uses a random network model. Random networks have been used in \citet{akbarpour2020just}, \citet{sadler2020} and \citet{campbell2013word} to model social networks among others. In particular, it is very often used to model diffusion of ideas and information, see e.g. \citet{golub2012homophily}, \citet{sadler2023influence} and \citet{merlino2023debunking}. The problems studied in these papers are very different from this paper. 
\section{Model}\label{sec:model}

\subsection{Agents and Network}
I consider an environment with one sender, and an infinite and countable set of receivers. There are two possible states of the world, $\omega \in \{0,1\}$, and each receiver has two possible actions $\{0,1\}$. The receivers want to match their action with the state. 
The receiver gets a payoff of 1 if their action matches the state, and gets 0 otherwise. The sender wants receivers to choose action 1 regardless of the state. If a proportion $x$ of the receivers choose action 1, then the sender gets payoff $v(x)$, where the function $v$ is continuous, weakly increasing and bounded. I normalise $v(0)=0$.

The receivers' payoff function implies that action 1 is optimal for receivers if and only if they believe the probability of state 1 is greater than 0.5. Throughout the paper, I break indifference in favour of the sender\footnote{Results can be extended if I break indifference in the other way, and proofs are essentially the same.}, i.e. I assume that receivers choose action 1 when they believe the probability of state 1 is exactly 0.5.

A proportion $\gamma_{h}$ of receivers are of type $h$ and have prior belief that state $\omega$ realises with probability $\mu_{h}(\omega) $. The rest of the receivers are of type $l$, and have a prior belief $\mu_{l}(\omega)$ for state $\omega$. I assume that $\mu_{h}(\omega=1) > 0.5 >\mu_{l}(\omega=1)$. I denote the proportion of type $l$ receivers by $\gamma_{l}$. Type $h$ receivers are often referred to as \textit{believers}, and type $l$ receivers as \textit{sceptics}. 

The sender has a prior belief that state $\omega$ realises with probability $\mu_{s}(\omega)$. The sender can send signals from the signal space $S \cup \{\emptyset \}$, where $S$ is finite and contains at least two elements\footnote{All results below easily extend to cases where $S$ has only one element or $S$ is empty. If $S$ is empty, the result is trivial since the sender essentially cannot do anything. If $S$ contains one element, payoff from network signals weakly decreases, while it can be easily shown that payoff using public signals does not change. For results where network signals perform better than public signals, in their proofs, sender only uses 1 signal from $S$ with positive probability.}. The element $\emptyset$ captures the sender's ability to send no signal. 
The sender designs a signal structure $\pi: \{0,1\} \rightarrow \Delta(S \cup \{\emptyset\})$. The probability of signal $s \in S \cup \{\emptyset \}$ being realised in state $\omega$ is $\pi(s|\omega)$, and $\pi$ must satisfy $\sum_{s \in S \cup \{\emptyset \}} \pi(s|\omega)=1$ for $\omega=0,1$. 
 If the sender uses network signals, the sender can also choose where to seed the signals. Precisely how seeding works and timing of the game are described in the next two subsections. 

Now we describe the random network model. Each receiver $i$ has \textit{ connectedness} $\lambda_{i}$. With $n$ receivers, if $i$ and $j$ have the same prior, the probability that they are linked is $min \{ \frac{\lambda_{i}\lambda_{j}}{n},1 \}$. If they have different priors, the probability that they are linked is $min \{ \frac{q \lambda_{i}\lambda_{j}}{n},1 \}$ where $q \in (0,1]$. Intuitively, $q$ controls homophily. Whether $i$ and $j$ are linked is independent of links between all other pairs (conditional on connectedness of $i$ and $j$). Conditional on having type $t$, a receiver has connectedness $\lambda$ with probability $f_{t}(\lambda)$, where $f_{t}(\cdot)$ is a probability mass function with finite support\footnote{The analysis can be extended to the case where distribution over $\lambda$ is continuous and  unbounded. This will make restrictions on these distributions and the proofs more tedious, without adding much insight. \citet{bollobas2007phase} states the precise restrictions on these distributions when they have continuous and unbounded support.}.

Note that the probability that two nodes are linked depends on the number of receivers $n$. There are infinitely many receivers in the model, so we look at the limiting network when $n$ goes to infinity, as in many other papers using random network models. 
The network formation process is common knowledge. Each receiver $i$ in addition knows her own degree, denoted by $d_{i}$ (degree is the number of neighbours a receiver has). When studying network signals, I use receivers and nodes interchangeably. 

The random network model is borrowed from \citet{bollobas2007phase}. As discussed in the literature review, random network models are widely applied to model large social networks. In particular, they have been used to capture networks over which people communicate information and opinions. 

Two important network statistics are degree distribution and the probability that a node's neighbour has the same type as the node. Theorem 3.13 in \citet{bollobas2007phase} shows that
\begin{rem} \label{rem:degree_dist}
    As $n \rightarrow \infty$, degree distribution of type $t$ receivers converges to $\{p_{d}^{t}\}_{d}$. Also, conditional on degree $d$ and type $t$ of a node, each of its neighbours has an independent probability $q_{t}$ of having the same type\footnote{More precisely, as $n \rightarrow \infty$, conditional on degree $d$ and type $t$ of a node, the probability that $d_{t}$ of its neighbours are of type $t$ converges to $Binom(d_{t};d,q_{t})$}, i.e. type $t$. The degree distribution $\{p_{d}^{t}\}_{d}$ and probability $q_{t}$ are 
    \[
    p_{d}^{t} =\sum_{\lambda'}f_{t}(\lambda') Poisson(d; \lambda' (E_{t}(\lambda) \gamma_{t}  + E_{t'}(\lambda)\gamma_{t'}q ) ) \qquad q_{t} = \frac{E_{t}(\lambda) \gamma_{t}  }{ E_{t}(\lambda) \gamma_{t}  + E_{t'}(\lambda)\gamma_{t'}q }
    \]
    where $E_{t}(\lambda) = \sum_{\lambda} f_{t}(\lambda)\lambda$ and $t' \ne t$
\end{rem}

\subsection{Network signals}
As described above, social network with infinitely many receivers is modelled as the limit of a sequence of finite random networks. Therefore, I will first describe the game with $n$ receivers. The sender's optimal payoff facing infinitely many receivers is captured as the limit of her optimal payoffs in these finite games when $n \rightarrow \infty$.

With $n$ receivers, the timing goes as follows. 
The sender moves first and chooses a signal structure $\pi$, and a seeding strategy $Z$. Given a network with $n$ receivers $g^{n}$ and a signal $s \in S$, 
$Z(g^{n},s)$ returns a subset of receivers. The subset selected by $Z$ must have a size smaller than $ n^{\alpha}$, where $\alpha<1$. Intuitively, $Z$ selects some seeds that will receive signals first, and seed selection can be conditioned on the network and the signal realised. 

After seeing the sender's strategy, the receivers choose their strategies, $a_{i}: (S \cup \emptyset) \times \mathbb{N}_{0} \rightarrow \{0,1\} $, where $a_{i}(s,d)$ describes what $i$ would do when she has degree $d$ and observes $s \in S \cup \{\emptyset\}$.

Then nature draws a network $g^{n}$ according to the network formation process described above. The nature draws a state, a signal $s$ according to $\pi$. If the signal is $\emptyset$, all receivers observe $\emptyset$. Intuitively, if the sender sends no signal, then all receivers observe no signal. 

If $s \ne \emptyset$, then the receivers $Z(g^{n},s)$ are selected as seeds. Given network $g^{n}$ and the seeds, a receiver $i$ observes $s$ if and only if at least one of the following conditions hold:
\begin{enumerate}
	\item $i$ is a seed. 
	\item $i$ has a neighbour $j$ who observes $s$, and $j$ chooses action 1 upon observing $s$, \\i.e. $a_{j}(s,d_{j})=1$.
\end{enumerate}
All the other receivers observe $\emptyset$.  In words, the condition here says that a receiver will pass on a signal to her neighbours, if and only if she is persuaded by the signal to take the sender's preferred action. 
Intuitively, people will pass on the advertisement of a product or a politician if they are convinced to buy the product or vote for the politician.

As in standard Bayesian persuasion, since receivers move after the sender, their strategies must constitute a (Nash) equilibrium in all subgames after the sender moves. Given receivers' equilibrium behaviour in the subgames, the sender chooses a strategy to maximise her expected payoff. Equilibrium in subgames and the sender's problem are described in more detail below. 

Let $\Pi \equiv (\pi,Z)$ denote the sender's network signal strategy and $\mathbf{a}_{-i}$ denote the strategy profile of receivers excluding $i$. Given $\Pi$ and $\mathbf{a}_{-i}$, receiver $i$ with type $t_{i}$ form a posterior upon observing each $s \in S \cup \{\emptyset\}$ and her degree $d \in \mathbb{N}_{0}$. The posterior belief for state 1 is denoted by $\psi_{net}(s,d;\mathbf{a}_{-i},\Pi,t_{i})$. Recall that receivers optimally choose action 1 if and only if they believe state 1 happens with probability higher than 0.5, so equilibrium in the subgame following $\Pi$ is defined as
\begin{definition}
   In a network signals game, receiver $i$'s strategy $a_{i}$ is a best response to $\Pi$ and $\mathbf{a}_{-i}$ if $a_{i}(s,d)=1 \iff \psi_{net}(s,d;\mathbf{a}_{-i},\Pi,t_{i}) \ge 0.5$ for all $s \in S \cup \{\emptyset\}$ and $d \in \mathbb{N}_{0}$. In the subgame following $\Pi$, receivers' strategy profile $\mathbf{a}$ is an equilibrium if $a_{i}$ is best responding to $\Pi$ and $\mathbf{a}_{-i}$ for all $i$.
\end{definition}
The next remark shows that, given any $\Pi$, the subgame following it has a unique equilibrium outcome.\footnote{This remark relies on breaking indifferences in favour of the sender as assumed above. Again, as stated above, results below still go through if we break indifferences against the sender.}.
\begin{rem} \label{rem:existence_of_eq}
    For any sender's strategy $\Pi$, all equilibria in the subgame following $\Pi$ induce the same distribution over receiver's action profile in each state. 
\end{rem}
The remark implies that the sender's expected payoff is the same in all equilibria in the subgame following $\Pi$. Let $V_{net}(\Pi,n)$ denote sender's expected payoff in the network signals game with $n$ receivers, when the sender chooses $\Pi$ and receivers play according to equilibrium in the subgame following $\Pi$. The sender's problem is to choose $\Pi$ and maximise $V_{net}(\Pi,n)$

For the game with $n$ receivers, we can find the sender's supremum payoff. Denote this payoff by $V^{*}_{net}(n)$. Sender's payoff facing infinitely many receivers is captured by $lim_{n \rightarrow \infty} V^{*}_{net}(n)$. However, this limit may not exist for some parameters\footnote{It can be shown that this only happens for a non-generic set of parameters, but the proof is very long and does not bring additional insight, so it is relegated to online appendix for the sake of space.}. Therefore, I compare $limsup_{n \rightarrow \infty} V^{*}_{net}(n)$ and $liminf_{n \rightarrow \infty} V^{*}_{net}(n)$ with the sender's payoff from public signals. 
\subsection{Public signals}
Again, I describe the game with $n$ receivers. The sender's optimal payoff facing infinitely many receivers is captured as the limit of her optimal payoffs in these finite games when $n \rightarrow \infty$\footnote{This is equivalent as just letting the sender face a continuum of receivers, of which $\gamma_{h}$ has type $h$. I describe sender's problem as the sequence of finite games to be consistent with the setup above for network signals}. 

First, the sender designs and commits to signal structure $\pi$. After seeing sender's strategy, receivers choose their strategies $a_{i}: S \cup \emptyset \rightarrow \{0,1\}$, where $a_{i}(s)$ describes receiver $i$'s action upon observing $s \in S \cup \emptyset$. 
The nature draws a state, and then a signal is realised according to $\pi$. All receivers observe the realised signal.

In the subgame following $\pi$, receivers best respond to $\pi$ and to each other. With public signals, all receivers directly receive the same signal from the sender, so the network and $\mathbf{a}_{-i}$ are irrelevant for forming posterior. Let $\psi_{pub}(s;\pi,t_{i})$ denote posterior of receiver $i$ with type $t_{i}$.
Equilibrium in subgames following sender's strategies are similarly defined
\begin{definition}
  In a public signals game, receiver $i$'s strategy $a_{i}$ is a best response to $\pi$ and $\mathbf{a}_{-i}$ if $a_{i}(s)=1 \iff \psi_{pub}(s;\Pi,t_{i}) \ge 0.5$ for all $s \in S \cup \{\emptyset\}$. In the subgame following $\pi$, receivers' strategy profile $\mathbf{a}$ is an equilibrium if $a_{i}$ is best responding to $\pi$ and $\mathbf{a}_{-i}$ for all $i$.
\end{definition}
A result analogous to remark \ref{rem:existence_of_eq} holds for public signals. Given receivers' equilibrium behaviour in subgames, the sender chooses $\pi$ to maximise expected payoff. We can find the sender's optimal payoff for the game with $n$ receivers, denoted by $V^{*}_{pub}(n)$. The sender's optimal payoff facing infinitely many receivers is captured by $V^{*}_{pub} \equiv lim_{n \rightarrow \infty} V^{*}_{pub}(n)$.

The main research question of this paper is to compare $V^{*}_{pub}$ and limiting payoff using network signals defined in the previous subsection.



It is worth re-emphasising that receivers' actions are conditioned on the signal (which can be empty) they receive in the public signals game. Similarly, their actions are conditioned on the signal they receive and their degree in the network signals game. As can be seen in the definitions of equilibrium in subgames, receivers optimise their action given their signal and degree if relevant. Therefore, although receivers choose strategies before realisation of signal and network, it is equivalent to letting them move after observing their signal and degree. This arrangement of timing is standard in Bayesian persuasion and when analysing Bayesian games in general. 
\subsection{Discussion of Setup}
\textbf{Rule of sharing signals:} The justification for the rule of sharing signals is that, after people purchase a certain product, they are more likely to talk about the product and thus pass information about it to their friends. \citet{banerjee2013diffusion} studies diffusion of micro-finance and finds evidence that people who use micro-finance are more likely to transmit information about it to their neighbours, which provides indirect support for the rule used in this paper. Sharing behaviour used here is similar to those used in \citet{campbell2013word} and \citet{sadler2020}. 

Also, main results are robust to other rules. In particular, another plausible rule of sharing is to share when the receivers' action agrees with the meaning carried by the signal. This is discussed in more detail in section \ref{sec:extension}.

\textbf{Seeds:} I assume that seeds are a vanishing fraction of the population, mainly for tractability. This is nonetheless plausible on large social networks. Targeted advertising services on social media usually use auctions to select which advertisement to present to a user, and the auction takes into account the relevance of the advertisement to the user. Intending to target a wide audience using these services will result in losing more often to other competing ads, and it is difficult to actually reach a wide audience directly. Results can still go through if seeds as a fraction of population is non-vanishing but small enough. I discuss this and how results could change if the seeds is a large fraction of the population in section \ref{sec:extension}.

I am also allowing the sender to use information about the whole network when selecting seeds, and in practice they can use much less information. Results on network signals being worse than public signals are stronger if we restrict the information used in seeding. Results on network signals being better do not rely on sender's ability to use so much information in seeding. 

\textbf{Empty Signal:} From description of the public signals game, the empty signal's role is not different from non-empty ones, so it is actually WLOG to assume that sender does not use empty signals in public signals game. 
For the network signals game, as we will see below, even if sender cannot use $\emptyset$, she can still essentially send no signal to almost everyone. Intuitively, this is done by choosing the seeds to be very isolated nodes. Allowing the sender to send $\emptyset$ directly is mainly for exposition. This also implies that one can think of $\emptyset$ as the sender choosing to conceal some signals. This interpretation requires sender to commit to concealment of signals before signal realisation. Relaxing the ability to commit to concealment plan restricts what sender can do using network signals, and thus strengthens results on public signals being better. For the results on network signals being better, in their proofs, the sender uses strategies where non-empty signals give the sender a weakly higher payoff than $\emptyset$, so the sender has no incentive to conceal them. 

\section{Analysis of the Network and Receivers' Posterior}\label{sec:analysis}

In order to compare sender's payoff from public and network signals, we first need to understand equilibrium in subgames given sender's strategies. This is straightforward for public signals, which is essentially the same as standard Bayesian persuasion. For network signals, analysis is more complicated since signals spread endogenously on networks. I present some lemmas that can be used to characterise limiting probabilities with which receivers observe each signal and $\emptyset$ when $n$ gets large. These lemmas can then be used to approximate and bound sender's payoffs using network signals in large networks. 

The next lemma pins down the receivers' posterior upon observing some non-empty signal $ s \ne \emptyset$. This lemma says if a receiver observes some non-empty signal $s$, then her posterior is the same as if all signals are sent publicly. 
\begin{lemma} \label{lemma:non_empty_posterior}
	In the network signals game with $n$ receivers, suppose that sender's strategy is $\Pi=(\pi,Z)$ and other receivers' strategy profile is $\mathbf{a}_{-i}$ in the subgame following $\Pi$. Conditional on having degree $d$ and observing $s \ne \emptyset$, receiver $i$ with type $t$ believes state $\omega=1$ realises with posterior probability
	\[
	\frac{\pi(s|\omega=1)\mu_{t}(\omega=1)}{\pi(s|\omega=1)\mu_{t}(\omega=1) + \pi(s|\omega=0)\mu_{t}(\omega=0)}
	\]
\end{lemma}

The main idea of the proof is that network formation process and seeding strategy do not depend on the state, so the posterior upon observing some nonempty signal is completely determined by the signal structure $\pi$ and prior. 

Some important observations follow from this lemma. If some signal $s$ persuades both types when sent publicly, then all receivers will share it if they receive it. If such a signal is realised, then a receiver will observe it if and only if she is connected to one of the seeds. In other words, all receivers on components\footnote{A component in a network is a set of nodes such that all nodes in this set are connected to each other, and all nodes not in this set are not connected to nodes in the set.} that contain at least one seed will observe the signal. If $s$ persuades only the believers but not the sceptics when sent publicly, then a receiver $i$ observes the signal if and only if there exists a path from $i$ to some seed such that all receivers on that path excluding $i$ are believers. 
In other words, if a component in the subnetwork of believers\footnote{The subnetwork of believers is the network formed by deleting all sceptics and links of sceptics from $g^{n}$} contains a seed, then all believers on this component observe the signal. A sceptic observes it if she has a neighbour who is a believer and observes the signal. 

Receivers' behaviour upon observing non-empty signals is directly implied by this lemma. We still need to understand receivers' posterior upon observing $\emptyset$. To obtain these, we need the probabilities of observing $\emptyset$ in each state. I apply results from random networks to study probabilities of observing non-empty signals. The probability of observing $\emptyset$ can then be obtained as the complementary probability. 

Let $L_{i}(g^{n})$ denote the $i$-th largest component in realised network $g^{n}$, and $|L_{i}(g^{n})|$ denote the number of nodes it contains. Theorem 3.1 and 3.6 in \citet{bollobas2007phase} show that
\begin{thm}[Bollobas et al. (2007)] \label{sadlermain}
    There exists $c \in [0,1]$ such that $|L_{1}(g^{n})|/n$ converges in probability to $c$, and $|L_{i}(g^{n})|/n$ converges in probability to 0 for $i > 1$.
\end{thm}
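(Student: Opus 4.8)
The plan is to recognise this model as a finite-type special case of the Bollob\'as--Janson--Riordan inhomogeneous random graph and to prove existence and uniqueness of a giant component by a branching-process approximation together with a sprinkling argument. First I would set up the type space as $\mathcal{X} = \{h,l\} \times \operatorname{supp}(f)$, endow it with the vertex-type measure $\mu$ assigning mass $\gamma_t f_t(\lambda)$ to the pair $(t,\lambda)$, and define the kernel $\kappa((t,\lambda),(t',\lambda')) = \lambda\lambda'$ when $t=t'$ and $\kappa = q\lambda\lambda'$ when $t \ne t'$, so that two vertices of types $x,y$ are linked independently with probability $\min\{\kappa(x,y)/n,1\}$. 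Because each $f_t$ has finite support, $\kappa$ is bounded and $\mathcal{X}$ is finite, so the relevant integral operator reduces to a finite matrix and the integrability hypotheses of the general theorem hold trivially.

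Second, for the first claim I would explore the component of a uniformly chosen vertex by breadth-first search and couple this exploration with a multi-type Poisson Galton--Watson process: a type-$(t,\lambda)$ vertex produces, in the $n\to\infty$ limit, an independent $\mathrm{Poisson}(\kappa(x,y)\mu(y))$ number of type-$y$ children, so its total offspring is $\mathrm{Poisson}(\lambda(\gamma_t E_t(\lambda) + q\gamma_{t'}E_{t'}(\lambda)))$, matching the degree law of Remark~\ref{rem:degree_dist}. Writing $M_{x,y} = \kappa(x,y)\mu(y)$ for the mean offspring matrix and $\rho(x)$ for the survival probability started from $x$, the function $\rho$ is the maximal solution of the fixed-point system $\rho(x) = 1 - \exp\bigl(-\sum_y \kappa(x,y)\mu(y)\rho(y)\bigr)$. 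Setting $c = \sum_{t,\lambda}\gamma_t f_t(\lambda)\rho(t,\lambda)$, a first- and second-moment computation shows that for each fixed $K$ the fraction of vertices in components of size at least $K$ concentrates around the probability that the branching process reaches size $K$; letting $K\to\infty$ identifies this limit with $c$, and when the Perron--Frobenius eigenvalue of $M$ is at most $1$ the process dies out, $c=0$, and no giant exists.

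Third, and this is where I expect the main difficulty, I would establish uniqueness, i.e. $|L_i(g^n)|/n \to 0$ for $i>1$, which together with the above forces $|L_1(g^n)|/n \to c$. The tool is sprinkling: generate $g^n$ in two independent rounds, including each edge with probability $(1-\delta)$ times its nominal value in the first round and reserving the rest. After the first round one still has $\Theta(n)$ vertices in large components, while between any two disjoint vertex sets of size at least $\varepsilon n$ there are $\Theta(n^2)$ candidate pairs, each joined in the second round with probability of order $1/n$; hence the probability that no sprinkled edge connects them is at most $\exp(-\Theta(\varepsilon^2 n))$. A union bound over the boundedly many large components shows that with high probability all components of size at least $\varepsilon n$ merge into one, so at most one component can be of size comparable to $n$. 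A crucial input is irreducibility of $\kappa$: because $q \in (0,1]$ is strictly positive, the two types are linked and the kernel does not decompose, so a single giant component emerges rather than two type-segregated ones.

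Assembling the pieces, the branching-process concentration supplies the candidate mass $c$ and shows a linear-sized component exists exactly when $c>0$, while sprinkling excludes a second linear-sized component, yielding $|L_1|/n \to_p c$ and $|L_i|/n \to_p 0$ for $i>1$. The verification that the finite-support specification meets the regularity conditions of the cited theorem is routine given boundedness of $\kappa$, so the genuine content lies in the branching-process coupling and, above all, the sprinkling estimate.
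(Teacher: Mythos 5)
Your proposal is correct: the paper does not prove this statement itself but imports it directly as Theorems 3.1 and 3.6 of \citet{bollobas2007phase}, and your sketch — casting the model as a finite-type bounded kernel, coupling the breadth-first exploration with a multi-type Poisson branching process whose survival probability yields $c$, and using sprinkling together with irreducibility (from $q>0$, with $\lambda$ bounded below on non-isolated vertices by finite support) to exclude a second linear-sized component — reconstructs essentially the same argument used in that cited source. Nothing further is needed.
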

This theorem says that there is at most one component that is huge, in the sense that its size grows with $n$ linearly. The largest component $L_{1}$ is usually called the \textit{giant component}. 

As argued above, if a signal is only passed on by believers, we need to look at the subnetwork of believers, i.e. the network formed by deleting all sceptics and their links from $g^{n}$. I use $\hat{L}_{i}(g^{n})$ to denote the $i$-th largest component in the subnetwork of believers. The subnetwork of believers depends only on links between believers. Also, conditional on connectedness of a pair of believers, whether they are linked is independent of how sceptics are linked to each other and how sceptics are linked to believers. Therefore, the distribution over links between believers, and thus the subnetwork of believers, will not change if we set $\lambda_{i}$ to be 0 for all sceptics. Applying \Cref{sadlermain} then gives
\begin{cor} \label{cor:giant_in_h_subnetwork}
    There exists $\hat{c} \in [0,1]$ such that $|\hat{L}_{1}(g^{n})|/n$ converges in probability to $\hat{c}$, and $|\hat{L}_{i}(g^{n})|/n$ converges in probability to 0 for $i > 1$.
\end{cor}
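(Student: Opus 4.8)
The plan is to reduce the claim to the giant-component theorem for the \emph{original} model, \Cref{sadlermain}, applied to a modified network in which the sceptics are made isolated. Concretely, I would define $\tilde g^{n}$ to be the random network generated by exactly the same process as $g^{n}$, except that every sceptic is assigned connectedness $\lambda_{i}=0$ (equivalently, the connectedness law conditional on type $l$ is replaced by the point mass at $0$), while every believer keeps its connectedness drawn from $f_{h}$. Under the linking rule a sceptic is then linked to no one: a sceptic–sceptic pair is linked with probability $\min\{\lambda_{i}\lambda_{j}/n,1\}=0$ and a believer–sceptic pair with probability $\min\{q\lambda_{i}\lambda_{j}/n,1\}=0$. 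The believer–believer linking probabilities are untouched, and since (conditional on connectedness) the event that two believers are linked is independent of all links involving sceptics, the joint law of the believer–believer link configuration is identical under $g^{n}$ and $\tilde g^{n}$. Hence the subnetwork of believers of $g^{n}$ is equal in distribution to the subnetwork of believers of $\tilde g^{n}$, and the latter is exactly $\tilde g^{n}$ with its $\approx \gamma_{l} n$ isolated sceptics stripped off.

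Next I would observe that $\tilde g^{n}$ is itself an instance of the Bollob\'as et al. model (the same inhomogeneous random graph, now with a connectedness law placing an atom at $0$ on one type), so \Cref{sadlermain} applies verbatim to $\tilde g^{n}$: there is a constant $\tilde c\in[0,1]$ with $|L_{1}(\tilde g^{n})|/n \to_{p} \tilde c$ and $|L_{i}(\tilde g^{n})|/n \to_{p} 0$ for $i>1$. The components of $\tilde g^{n}$ are precisely the components of its believer subnetwork together with one singleton component per sceptic. Since each sceptic singleton has size $1$ while a nonempty believer component has size at least $1$, in every realisation the largest component of $\tilde g^{n}$ coincides with the largest believer component, so $|\hat L_{1}(g^{n})|$ and $|L_{1}(\tilde g^{n})|$ are equal in distribution and $|\hat L_{1}(g^{n})|/n\to_{p}\tilde c$.

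For the smaller components I would work inside a single realisation of $\tilde g^{n}$ and then invoke the distributional identity. In any realisation, for $i\ge 2$ the two largest believer components are two distinct components of $\tilde g^{n}$, each at least as large as the $i$-th largest believer component, so the second-largest component $L_{2}(\tilde g^{n})$ is at least as large as the $i$-th largest believer component. Translating back through the distributional equality, $|\hat L_{i}(g^{n})|$ is stochastically dominated by $|L_{2}(\tilde g^{n})|$ for every $i\ge 2$; since $|L_{2}(\tilde g^{n})|/n\to_{p}0$, this forces $|\hat L_{i}(g^{n})|/n\to_{p}0$. Setting $\hat c := \tilde c$ then delivers both assertions of the corollary.

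The hard part will be the reduction in the first paragraph rather than the bookkeeping: one must check that zeroing out the sceptics' connectedness keeps the model within the class covered by \Cref{sadlermain} (an atom at $0$ in the connectedness law is admissible, and isolated nodes do not disturb the kernel governing the believers), and that the conditional independence of believer–believer links from all sceptic links genuinely yields \emph{exact} equality in distribution of the two believer subnetworks. Everything afterwards is elementary comparison of ordered component sizes, where the only care needed is that the $\Theta(n)$ sceptic singletons, each of size $1$, are negligible on the $n$-scale and never overtake a genuine believer component in the size ranking.
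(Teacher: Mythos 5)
Your proposal is correct and follows essentially the same route as the paper: the paper's own (very brief) argument is precisely to note that, by conditional independence of believer--believer links from all links involving sceptics, the believer subnetwork's distribution is unchanged if every sceptic's connectedness is set to $0$, and then to apply \Cref{sadlermain} to the resulting network. Your additional bookkeeping about the sceptic singleton components and the ranking of component sizes just makes explicit what the paper leaves implicit.
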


Using these results on giant component and \Cref{lemma:non_empty_posterior}, we can specify some conditions under which the receivers observe $\emptyset$ with very high probability in large social networks. The following lemma shows that in large networks, if the seeds are not on the relevant giant component, then the proportion of receivers who observe the signal is very small.
\begin{lemma} \label{lemma:signal_need_to_be_viral}
Fix any $\epsilon>0$.
For $n$ large enough, given any $s \in S$ and any seeding strategy $Z$ such that $Z(g^{n},s) \cap L_{1}(g^{n}) = \emptyset$ for all possible $g^{n}$, the proportion of receivers who observe $s$ is larger than $\epsilon$ with probability at most $\epsilon$. \\ Also, if $s$ is only passed on by believers and $Z(g^{n},s) \cap \hat{L}_{1}(g^{n}) = \emptyset$ for all possible $g^{n}$, the proportion of receivers who observe $s$ is larger than $\epsilon$ with probability at most $\epsilon$. 
\end{lemma}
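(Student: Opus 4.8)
The plan is to show that, however the adversarial seeding $Z$ places its at most $n^{\alpha}$ seeds, the receivers who eventually observe $s$ are confined to a union of at most $n^{\alpha}$ \emph{non-giant} components, whose total size is negligible relative to $n$. First I would reduce ``who observes $s$'' to a reachability statement. By the sharing rule, if a non-seed $i$ observes $s$ then it has a neighbour $j$ who observes $s$; iterating, there is a path in $g^{n}$ from $i$ to some seed along which every receiver observes $s$. Hence every observer of $s$ lies in a component of $g^{n}$ containing a seed. Since $|Z(g^{n},s)|\le n^{\alpha}$, at most $n^{\alpha}$ components contain a seed, and by hypothesis none of them is $L_{1}(g^{n})$. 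Therefore the number of observers is at most $\sum_{i=2}^{n^{\alpha}+1}|L_{i}(g^{n})|\le n^{\alpha}\,|L_{2}(g^{n})|$, a bound that holds deterministically for every realised $g^{n}$ and every admissible $Z$; this is precisely what lets me treat the adversarial, graph-dependent seeding uniformly. Note this argument is agnostic to the sharing behaviour, so it covers arbitrary $s\in S$ in the first statement.

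For the second statement I would run the same argument inside the subnetwork of believers. By \Cref{lemma:non_empty_posterior} a believer passes $s$ on while a sceptic does not, so the believers who observe $s$ are exactly those lying in a component of the believers' subnetwork that contains a seed; as no seed lies on $\hat{L}_{1}(g^{n})$, these number at most $n^{\alpha}\,|\hat{L}_{2}(g^{n})|$. The remaining observers are the seeds themselves ($\le n^{\alpha}$) and the sceptics adjacent to an observing believer, of which there are at most $n^{\alpha}\,|\hat{L}_{2}(g^{n})|\,\Delta_{\max}(g^{n})$, where $\Delta_{\max}$ denotes the maximum degree in $g^{n}$.

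The crux is the probabilistic input: I must show $n^{\alpha}|L_{2}(g^{n})|/n\to 0$ and $n^{\alpha}|\hat{L}_{2}(g^{n})|/n\to 0$ in probability, i.e. that $|L_{2}|$ and $|\hat{L}_{2}|$ are of order $o(n^{1-\alpha})$. \Cref{sadlermain} and \Cref{cor:giant_in_h_subnetwork} give $|L_{i}|/n\to 0$ only for each \emph{fixed} $i$, which is insufficient once one sums over a growing number $n^{\alpha}$ of components. To close this gap I would invoke the stronger fact, standard for this model, that outside the giant the component sizes have an exponential tail, so that the largest non-giant component is $O(\log n)$ with high probability (the ``dual'' subcritical regime; cf. the component-size estimates in \citet{bollobas2007phase}). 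Since $\alpha<1$ we have $O(\log n)=o(n^{1-\alpha})$, whence $n^{\alpha}|L_{2}|=O(n^{\alpha}\log n)=o(n)$, and likewise for the believers' subnetwork, which is itself an inhomogeneous random graph. A union bound over the Poisson-type degree tails gives $\Delta_{\max}=O(\log n)$ with high probability, so the sceptic-neighbour count in the second statement is $O(n^{\alpha}\log^{2}n)=o(n)$.

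Combining, on an event of probability tending to $1$ the proportion of observers of $s$ is $o(1)$, uniformly over all admissible $Z$ because every bound above is a worst-case bound given $g^{n}$; hence for $n$ large this proportion exceeds $\epsilon$ with probability at most $\epsilon$. The main obstacle is exactly this second-largest-component estimate: one must rule out that the adversary concentrates its seeds on an atypically large family of mid-sized components, which requires the polylogarithmic control of $|L_{2}|$ (and $|\hat{L}_{2}|$) rather than the mere fixed-index convergence of \Cref{sadlermain}, together with some care that the network is not exactly at criticality (a non-generic case, where $|L_{2}|$ could be polynomially large).
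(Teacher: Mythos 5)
Your reduction of ``who observes $s$'' to components containing seeds is exactly the paper's starting point, but from there you take a genuinely different route, and it has a real gap. You bound the observers by $n^{\alpha}|L_{2}(g^{n})|$ and therefore need $|L_{2}(g^{n})|=o(n^{1-\alpha})$, which you obtain from the claim that the largest non-giant component is $O(\log n)$ with high probability. Such an estimate is indeed available for this model (the kernel is bounded because the $f_{t}$ have finite support, and with $q>0$ the full network's kernel is irreducible) --- but only strictly away from criticality, i.e.\ when the underlying branching process is strictly sub- or supercritical. You flag the critical case yourself and set it aside as ``non-generic'', but the lemma is stated for all admissible parameters and nothing in its hypotheses excludes the knife-edge critical configuration; there, $|L_{2}(g^{n})|$ is of polynomial order (on the scale of $n^{2/3}$), so for $\alpha$ close to $1$ the bound $n^{\alpha}|L_{2}(g^{n})|$ is not $o(n)$ and your argument genuinely fails. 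Since repairing it at criticality cannot be done with any bound on $|L_{2}|$ alone, this is a missing piece rather than a technicality; the same issue propagates to $|\hat{L}_{2}|$ in your second statement.

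The paper's own proof is built precisely to avoid needing second-largest-component estimates. It fixes a size threshold $\overline{m}$ and bounds the observers deterministically, for every $g^{n}$ and every admissible $Z$, by the number of nodes on non-giant components of size at least $\overline{m}$ plus $n^{\alpha}\overline{m}$ (the worst case in which each seed lands on a separate component of size below $\overline{m}$). By the local convergence to the branching process (Theorem 9.1 of \citet{bollobas2007phase}), $N_{<\overline{m}}(g^{n})/n$ converges in probability to $\rho(<\overline{m})$ and $N_{L_{1}}(g^{n})/n$ to $\rho(\infty)$; since $\rho(<\overline{m})+\rho(\infty)\rightarrow 1$ as $\overline{m}\rightarrow\infty$, the first term is below $\epsilon n$ with high probability once $\overline{m}$ is large, while $n^{\alpha}\overline{m}=o(n)$ for fixed $\overline{m}$. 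This works uniformly in the subcritical, supercritical and critical regimes and uses no input beyond \Cref{sadlermain}, \Cref{cor:giant_in_h_subnetwork} and the cited local limit. For the sceptics in the second statement the paper likewise avoids your $\Delta_{\max}=O(\log n)$ step (which inherits the dependence on $|\hat{L}_{2}|$): it truncates believer degrees at some $\overline{d}$ with $\sum_{d\ge\overline{d}}p^{h}_{d}d<\epsilon$, notes that the number of neighbours of low-degree observing believers off $\hat{L}_{1}$ is $o_{p}(n)$, and bounds the remaining contribution by $\sum_{d\ge\overline{d}}N_{h,d}d$, whose proportion converges to $\sum_{d\ge\overline{d}}p^{h}_{d}d<\epsilon$. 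If you replace your $|L_{2}|$ and $\Delta_{\max}$ inputs with this threshold-truncation device, your argument becomes the paper's.
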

Intuitively, the lemma says that for a signal to reach many people, it must become viral, in the sense that it spreads on the giant component and thus is shared by a large fraction of the population. The main intuition of the proof is obvious from \Cref{sadlermain}.
Since all components other than the giant component contain a vanishing fraction of nodes, if no seeds hit the giant component, all components hit by the seeds have sizes that are vanishing as a proportion of the total population.

\begin{lemma} \label{lemma:only_observe_viral}
Fix any $\epsilon>0$. 
For $n$ large enough, given any $s \in S$ and any seeding strategy $Z$, the proportion of receivers who are not on $L_{1}(g^{n})$ and observe $s$ is larger than $\epsilon$ with probability at most $\epsilon$. \\Also, if $s$ is only passed on by believers, the proportion of receivers who have no neighbour on $\hat{L}_{1}(g^{n})$ and observe $s$ is larger than $\epsilon$ with probability at most $\epsilon$.
\end{lemma}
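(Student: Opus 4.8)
The plan is to reduce both statements to \Cref{lemma:signal_need_to_be_viral} by the simple device of deleting, from the seeding strategy, exactly those seeds that land on the relevant giant component. The key structural fact I would use is that the signal propagates only along edges of $g^n$: by induction on the definition of observing, every receiver who observes $s$ is joined to some seed by a path all of whose nodes observe $s$, and such a path stays inside a single component of $g^n$. Hence a seed placed on $L_1(g^n)$ can only make nodes of $L_1(g^n)$ observe $s$, and any observer lying off $L_1(g^n)$ must have been reached from a seed that also lies off $L_1(g^n)$. This localisation holds for any sharing rule, which is why the first statement is stated for arbitrary $s$.

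For the first statement, given an arbitrary seeding $Z$, I would define a new seeding $Z'$ by $Z'(g^n,s) = Z(g^n,s)\setminus L_1(g^n)$, i.e.\ I throw away the seeds on the giant. By construction $Z'(g^n,s)\cap L_1(g^n)=\emptyset$ for every realisation $g^n$, and $|Z'(g^n,s)|\le |Z(g^n,s)|< n^\alpha$, so $Z'$ is an admissible seeding of the type to which \Cref{lemma:signal_need_to_be_viral} applies. By the localisation fact above, the set of receivers that observe $s$ under $Z$ and lie off $L_1(g^n)$ is exactly the set of receivers that observe $s$ under $Z'$, since seeds on the giant contribute only observers on the giant and their removal does not affect reach off the giant. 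The first part of \Cref{lemma:signal_need_to_be_viral} applied to $Z'$ then bounds the proportion of these receivers by $\epsilon$ with probability at least $1-\epsilon$, which is the claim.

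For the second statement I would run the analogous reduction in the believer subnetwork, now obtaining an inclusion rather than an equality (which still suffices, since I only want an upper bound). When $s$ is passed on only by believers, the discussion following \Cref{lemma:non_empty_posterior} identifies the observers as the believers lying in believer-subnetwork components that contain a seed, together with the sceptics adjacent to such believers and any sceptic seeds. Setting $Z'(g^n,s)=Z(g^n,s)\setminus \hat L_1(g^n)$, I would verify that every observer with no neighbour on $\hat L_1(g^n)$ also observes $s$ under $Z'$: a believer observer with no neighbour on $\hat L_1(g^n)$ cannot lie on $\hat L_1(g^n)$, so its non-giant believer-component is seeded off $\hat L_1(g^n)$ and that seed survives in $Z'$; a sceptic observer with no neighbour on $\hat L_1(g^n)$ is either a seed off $\hat L_1(g^n)$, or is adjacent to an observing believer which, not lying on $\hat L_1(g^n)$, is itself reached by an off-$\hat L_1(g^n)$ seed. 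Applying the second part of \Cref{lemma:signal_need_to_be_viral} to $Z'$ bounds the proportion of $Z'$-observers, hence a fortiori the proportion of the receivers in question.

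The main work, and the only place that requires care, is verifying the set inclusion in the second part. I would need to handle three boundary cases cleanly: sceptic seeds (there are at most $n^\alpha$ of them, so they are negligible even when counted separately); sceptics adjacent both to an off-giant observing believer and to a believer on $\hat L_1(g^n)$ (these are counted by $Z'$ but excluded from the target set, so the inclusion loosens the bound only in the safe direction); and the degenerate regimes $c=0$ or $\hat c=0$, where the relevant giant is itself $o(n)$ and the statement becomes only easier. I would also note that $Z'$ is genuinely implementable, since the sender observes $g^n$ and can therefore compute $L_1(g^n)$ and $\hat L_1(g^n)$ and delete the offending seeds.
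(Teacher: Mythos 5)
Your proof is correct, but it is organised quite differently from the paper's. The paper proves \Cref{lemma:signal_need_to_be_viral} and \Cref{lemma:only_observe_viral} \emph{together} by a single direct counting argument: for an arbitrary seeding, the off-$L_{1}$ observers are split into nodes on non-giant components of size at least $\overline{m}$, whose proportion converges in probability to $1-\rho(<\overline{m})-\rho(\infty)$ and is made below $\epsilon$ by choosing $\overline{m}$ large (Theorem 9.1 of Bollob\'{a}s et al.), and nodes on components of size below $\overline{m}$, of which at most $n^{\alpha}\overline{m}$ can ever be reached by the seeds; for the second statement the paper additionally controls sceptics by bounding the total degree of off-$\hat{L}_{1}$ believer observers with the tail bound $\sum_{d \ge \overline{d}} p_{d}^{h} d < \epsilon$. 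You instead treat \Cref{lemma:signal_need_to_be_viral} as a black box and reduce the present lemma to it by deleting seeds on the relevant giant component; your localisation facts and the set identity (first part) and inclusion (second part) are sound, and the step you leave implicit --- that the sharer set does not change when $Z$ is replaced by $Z'$ --- is exactly what \Cref{lemma:non_empty_posterior} delivers, since who is persuaded by $s$ depends only on $\pi$ and the prior, not on the seeding. What your route buys: it is deterministic and combinatorial, quarantining all random-graph estimates inside \Cref{lemma:signal_need_to_be_viral}, it avoids re-deriving the degree-tail argument for sceptics, and it makes transparent that, modulo localisation, the two lemmas are essentially equivalent. What it costs: it is parasitic on \Cref{lemma:signal_need_to_be_viral}, and the paper's proof of that lemma consists precisely of establishing the uniform off-giant bound you are reducing away from, so your argument inverts the paper's logical order (the paper proves the off-giant bound for arbitrary seedings and obtains \Cref{lemma:signal_need_to_be_viral} as the case where localisation forces all observers off the giant). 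This is not circular, since \Cref{lemma:signal_need_to_be_viral} is independently established, but unfolding your proof essentially reproduces the paper's. One residual degenerate case you gesture at but could state explicitly: if $\hat{L}_{1}(g^{n})$ is a singleton, a believer seed on it belongs to your target set yet is deleted from $Z'$, so the inclusion fails for that node; this affects at most one receiver, hence a vanishing proportion, and does not harm the bound.
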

Intuitively, this lemma says that receivers can only observe the signal if they are on parts of the network where the signal has gone viral, i.e. the giant component. The main idea of proof is again intuitive. If some receiver is not on the giant component, then its component has vanishing size and is very unlikely to be hit by any seed.

Combining \Cref{lemma:only_observe_viral} and \Cref{lemma:signal_need_to_be_viral}, the probability that a receiver can observe $s$ is approximately the probability that $s$ hits the appropriate giant component, times the probability that the receiver is on (or has a neighbour on) that giant component. The next lemma describes probabilities of being linked to the giant component, conditional on type and degree of a node.

\begin{lemma} \label{lemmaprobongiant}
There exist functions $\zeta(t,d)$ and $\hat{\zeta}(t,d)$ both increasing in $d$ such that, among receivers with degree $d$ and type $t$: \begin{enumerate}
    \item the proportion of those who are on $L_{1}(g^{n})$ converges in probability to $\zeta(t,d)$.
    \item 
 the proportion of those who have a neighbour on $\hat{L}_{1}(g^{n})$ converges in probability to $\hat{\zeta}(t,d)$.
\end{enumerate}  
\end{lemma}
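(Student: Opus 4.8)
The plan is to establish the existence of these two limiting functions by exploiting the local structure of the Bollobás–Janson–Riordan inhomogeneous random graph, namely the fact (from \Cref{rem:degree_dist} and the branching-process heuristic underlying \Cref{sadlermain}) that a neighbourhood of a fixed node converges locally to a multi-type Galton–Watson tree. First I would fix a node $i$ of type $t$ and degree $d$. Whether $i$ lies on the giant component $L_1(g^n)$ is, as $n\to\infty$, determined by whether at least one of its $d$ neighbours initiates an infinite line of descent in the associated branching process. Because the $d$ incident edges lead to neighbours whose types are drawn independently (type $t$ with probability $q_t$, type $t'$ with probability $1-q_t$, by \Cref{rem:degree_dist}) and whose onward connectedness is governed by the offspring kernel of the branching process, the event that node $i$ fails to connect to the giant through any particular neighbour has a well-defined limiting probability; call it $1-\beta(t)$, where $\beta(t)$ is the probability that a neighbour of a type-$t$ node, reached along an edge, is itself in the giant (the ``survival from one edge'' probability). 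Conditional independence of the $d$ edges then gives
\[
\zeta(t,d) = 1-\bigl(1-\beta(t)\bigr)^{d},
\]
which is manifestly increasing in $d$. Making this rigorous amounts to invoking the local-weak-convergence / exploration results already packaged in \citet{bollobas2007phase} to justify that the finite-$n$ proportion of degree-$d$, type-$t$ nodes on $L_1$ concentrates around this limiting probability (a second-moment or Azuma-type concentration argument gives convergence in probability).

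For the second function I would repeat the argument in the \emph{subnetwork of believers}, using \Cref{cor:giant_in_h_subnetwork} in place of \Cref{sadlermain}. Here the relevant event for node $i$ (of arbitrary type $t$ and degree $d$) is that at least one of its neighbours is a believer lying on $\hat L_1(g^n)$. Each of the $d$ edges independently leads to a believer with the appropriate limiting probability, and conditional on the neighbour being a believer it lies on the believer-giant with a limiting edge-survival probability $\hat\beta$; writing $\rho(t)$ for the limiting probability that a given neighbour of a type-$t$ node is a believer on $\hat L_1$, conditional independence again yields
\[
\hat\zeta(t,d) = 1-\bigl(1-\rho(t)\bigr)^{d},
\]
increasing in $d$. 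The monotonicity in both cases is immediate once the product form is established, since adding an independent edge can only increase the chance of hitting the giant; indeed one can even argue monotonicity directly by a coupling that adds one neighbour, avoiding the explicit formula.

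The main obstacle is the passage from the branching-process (local) picture to the genuine finite-$n$ proportion. Two technical points need care. First, local weak convergence describes the neighbourhood of a single node, whereas the lemma asserts convergence \emph{in probability of a proportion}; bridging this requires a concentration step showing that the empirical fraction of degree-$d$ type-$t$ nodes on the giant does not fluctuate, which follows because membership events for distinct nodes are asymptotically independent (edge-disjoint local neighbourhoods) and one can bound the covariance. Second, and more delicate, is justifying that ``being on the giant component'' coincides in the limit with ``the local exploration survives,'' i.e. that the finite largest component is exactly the set of nodes whose exploration would be infinite in the limit; this is precisely the content of the duality/no-middle-ground results of \citet{bollobas2007phase} (Theorems 3.1 and 3.6 cited as \Cref{sadlermain}), so I would lean on those rather than reprove them. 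For the believer subnetwork the same results apply verbatim after the type-restriction argument preceding \Cref{cor:giant_in_h_subnetwork}, with the only extra bookkeeping being that for $\hat\zeta$ the node $i$ itself need not be a believer—only that it has a believing neighbour on $\hat L_1$—which is why $\hat\zeta$ is defined for all types $t$.
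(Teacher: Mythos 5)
Your proposal is correct and takes essentially the same route as the paper's proof: a multi-type branching-process (local weak limit) approximation of neighbourhoods, Azuma-type concentration to pass from a single node to the empirical proportion, the no-middle-ground/duality results of \citet{bollobas2007phase} to identify giant-component membership with survival of the local exploration, and the per-edge product form to get monotonicity in $d$. Your closed forms $\zeta(t,d)=1-\bigl(1-\beta(t)\bigr)^{d}$ and $\hat\zeta(t,d)=1-\bigl(1-\rho(t)\bigr)^{d}$ coincide with the paper's expressions in \Cref{rem:what_is_rho} and the Binomial mixtures $\sum_{d_{h}} Binom(d_{h};d,\cdot)\,\hat{\rho}(\infty|d_{h},h)$ of \Cref{lemma:probability_on_hatL}, which collapse algebraically to exactly those product forms.
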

The functions $\zeta$ and $\hat{\zeta}$ characterise limiting conditional probabilities of being on or having a neighbour on the giant components $L_{1}(g^{n})$ and $\hat{L}_{1}(g^{n})$. The equations which characterise functions $\zeta$ and $\hat{\zeta}$ are not key to the results below, so are only presented in the appendix.

\paragraph{Sender's optimal strategy:} Characterisation of optimal strategies is not required for the main results below, and also does not provide much insight and intuition for them. Therefore, a brief discussion of them is relegated to \Cref{sec:extension} and the formal details to the online appendix.

\section{Results}\label{sec:results}

 \subsection{Baseline network}
 In this subsection, I study the simplest case where the network formation process ignores the types of receivers. Formally, I study the case where distribution of connectedness $f_{t}$ is independent of $t$ and $q=1$.  
 The next lemma provides part of the main intuition underlying this subsection. 
 \begin{lemma} \label{lemma:probability_on_giant_baseline}
 	If $f_{l}(\lambda)=f_{h}(\lambda)$ for all $\lambda$ and $q=1$, then we have $\zeta(h,d)=\zeta(l,d)$ and $\hat{\zeta}(h,d)=\hat{\zeta}(l,d)$ for all $d$.
 \end{lemma}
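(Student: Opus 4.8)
The plan is to exploit the fact that, when $q=1$ and $f_{l}=f_{h}=f$, the type of a node plays essentially no role in how the network is formed. First I would record the key reduction: since $q=1$, the probability that $i$ and $j$ are linked is $\min\{\lambda_{i}\lambda_{j}/n,1\}$ irrespective of their types, and since $f_{l}=f_{h}$, each node's connectedness is an i.i.d.\ draw from $f$ that is independent of its type. Hence one can generate the model in two independent stages: (i) draw connectedness $\lambda_{1},\dots,\lambda_{n}$ i.i.d.\ from $f$ and form the full network $g^{n}$ from these, with the type labels playing no part; (ii) independently label each node a believer with probability $\gamma_{h}$ and a sceptic otherwise. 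In particular, the full network $g^{n}$, and every quantity measurable with respect to it, is independent of the type profile. As a consistency check, \Cref{rem:degree_dist} gives $q_{h}=\gamma_{h}$ and $1-q_{l}=\gamma_{h}$, so a neighbour of a node is a believer with probability $\gamma_{h}$ regardless of the node's own type.

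For the claim $\zeta(h,d)=\zeta(l,d)$ this reduction is essentially all that is needed. Both the events ``$i$ has degree $d$'' and ``$i\in L_{1}(g^{n})$'' are measurable with respect to $g^{n}$ alone, hence independent of $t_{i}$ by stage (ii). Consequently the conditional probability $\Pr(i\in L_{1}\mid d_{i}=d,\,t_{i}=t)$ does not depend on $t$. Since by \Cref{lemmaprobongiant} the empirical proportion of degree-$d$, type-$t$ nodes on $L_{1}$ concentrates on this single-node probability, the two limits coincide and $\zeta(h,d)=\zeta(l,d)$.

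The substantive part is $\hat{\zeta}(h,d)=\hat{\zeta}(l,d)$, where the believers' subnetwork and hence $\hat{L}_{1}(g^{n})$ genuinely depends on types. The idea is to remove the only channel through which $i$'s own type can enter. Fix a node $i$ of degree $d$ and let $\hat{L}_{1}^{-i}$ denote the giant component of the believers' subnetwork on all nodes other than $i$; let $E$ be the event that $i$ has a believer-neighbour lying in $\hat{L}_{1}^{-i}$. By stages (i)--(ii), $E$ is a function of $g^{n}$ and of the types of nodes other than $i$, hence independent of $t_{i}$. I would then argue that ``$i$ has a neighbour on $\hat{L}_{1}(g^{n})$'' coincides with $E$ with probability tending to $1$, for either value of $t_{i}$. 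If $i$ is a sceptic, $i$ is absent from the believers' subnetwork, so $\hat{L}_{1}=\hat{L}_{1}^{-i}$ and the two events are literally identical. If $i$ is a believer, observe first that, because $\hat{L}_{1}$ is macroscopic, a believer has a neighbour in $\hat{L}_{1}$ if and only if it lies in $\hat{L}_{1}$ itself (any neighbour in $\hat{L}_{1}$ is a believer joined to $i$ by a believer edge, so $i$ is in the same component), which in turn holds iff some believer-neighbour of $i$ has a giant believer-component reached without passing through $i$ --- which, up to the insertion of the single vertex $i$, is exactly the event $E$.

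The main obstacle is this last point: the stability of the giant component under addition or deletion of a single vertex. I would establish it using the uniqueness of the giant from \Cref{cor:giant_in_h_subnetwork}. When $E$ holds, adding the believer $i$ only enlarges the giant, so $i$ and the relevant neighbour remain in $\hat{L}_{1}$; when $E$ fails, $i$'s believer-neighbours all lie in components of size $o(n)$, and adding $i$ merges at most $d$ such components into a single component of size $o(n)$, which cannot be the unique linear-sized giant, so $\hat{L}_{1}=\hat{L}_{1}^{-i}$ and $i$ still has no neighbour on it. Thus the discrepancy between ``$i$ has a neighbour on $\hat{L}_{1}$'' and $E$ has vanishing probability in both directions and for both types. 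Since $E$ is independent of $t_{i}$, the conditional probabilities given $(d_{i}=d,\,t_{i}=t)$ share the same limit, and the concentration statement in \Cref{lemmaprobongiant} then yields $\hat{\zeta}(h,d)=\hat{\zeta}(l,d)$.
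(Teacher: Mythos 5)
Your proposal is correct in substance, but it takes a genuinely different route from the paper. The paper obtains this lemma as the equality case of \Cref{lemma:probability_on_giant_homophily}: for $\hat{\zeta}$ it uses the explicit characterisation from the proof of \Cref{lemma:probability_on_hatL}, namely $\hat{\zeta}(h,d)=\sum_{d_{h}}Binom(d_{h};d,q_{h})\hat{\rho}(\infty|d_{h},h)$ and $\hat{\zeta}(l,d)=\sum_{d_{h}}Binom(d_{h};d,1-q_{l})\hat{\rho}(\infty|d_{h},h)$, which coincide because $q=1$ forces $q_{h}=1-q_{l}$; for $\zeta$ it works with the branching-process fixed-point system of \Cref{rem:what_is_rho} and shows the ratio $\alpha=(1-\rho(\infty|h,1))/(1-\rho(\infty|l,1))$ can be neither larger nor smaller than $1$ when the degree and forward distributions agree. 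You replace all of this analytic machinery with exchangeability: since $q=1$ and $f_{h}=f_{l}$, the graph is generated independently of the type labels, so type-conditional probabilities of any event measurable with respect to $g^{n}$ agree exactly, and the only real work is decoupling the event ``$i$ has a neighbour on $\hat{L}_{1}$'' from $i$'s own type, which you do correctly via the event $E$ and the one-vertex stability of the giant component (your case analysis, believer versus sceptic and $E$ holding versus failing, together with uniqueness of the linear-size giant, is exactly what is needed). Your approach buys a more conceptual and elementary proof of the symmetric case; what it gives up is generality --- the paper's fixed-point formulation is what delivers the \emph{inequalities} of \Cref{lemma:probability_on_giant_homophily} under homophily and FOSD shifts, where pure symmetry has no traction.

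Two points should be patched. First, your stability argument explicitly assumes $\hat{L}_{1}$ is macroscopic; when $\hat{c}=0$ there is no linear-size component and adding the believer $i$ can in principle change the identity of the largest component, so that case must be disposed of separately (it is trivial, since then $\hat{\zeta}(h,d)=\hat{\zeta}(l,d)=0$). Second, you pass from equality (up to $o(1)$) of the tagged-node probabilities $\Pr(\cdot\mid d_{i}=d,\,t_{i}=t)$ to equality of the limits $\zeta(t,d)$ and $\hat{\zeta}(t,d)$, but \Cref{lemmaprobongiant} is a statement about empirical proportions, not about a tagged node. The bridge is standard and worth stating: by exchangeability, $\Pr(i\in L_{1}\mid d_{i}=d,\,t_{i}=t)=E[N_{t,d,L_{1}}]/E[N_{t,d}]$, and since $N_{t,d,L_{1}}/n$ and $N_{t,d}/n$ are bounded and converge in probability to $\gamma_{t}p_{d}^{t}\zeta(t,d)$ and $\gamma_{t}p_{d}^{t}$ respectively, bounded convergence gives that the tagged-node probability converges to $\zeta(t,d)$ whenever $\gamma_{t}p_{d}^{t}>0$ (and likewise for $\hat{\zeta}$). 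With these two additions your argument is complete.
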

 This follows from the fact that, in this baseline network, the network formation process essentially ignores types of receivers. Intuitively, this says that if a signal gets viral, then believers and sceptics have roughly the same probability of receiving it when $n$ is large. 
\begin{proposition} \label{baselineprop}
In networks without homophily or correlation between degree and type, a risk-neutral or risk-loving sender prefers public signals. \\
Formally, in networks where $f_{l}(\lambda)=f_{h}(\lambda)$ for all $\lambda$ and $q=1$, if the sender's payoff function $v$ is linear or convex, then $\limsup_{n \rightarrow \infty} V^{*}_{net}(n) \le V^{*}_{pub}$.
\end{proposition}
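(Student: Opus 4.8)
My plan is to reduce the sender's network payoff, for an arbitrary strategy $\Pi=(\pi,Z)$ and large $n$, to an expectation $E_{\mu_s}[v(X)]$ where the aggregate action-$1$ share $X$ takes only finitely many values. \Cref{lemma:non_empty_posterior} pins down each receiver's behaviour upon any non-empty signal, so I can partition $S$ according to whether an observed $s$ persuades both types, only believers, or neither. By \Cref{lemma:signal_need_to_be_viral} and \Cref{lemma:only_observe_viral}, a non-vanishing share of receivers observes a given $s$ only if $s$ reaches the relevant giant component ($L_1$ for a both-persuading $s$, $\hat L_1$ for a believer-only $s$); since the sender sees $g^n$ before seeding, she can place a seed on that giant component whenever it exists, so in the limit each realised passable signal goes viral and reaches essentially all of its giant component. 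Combining this with \Cref{lemmaprobongiant} and a law-of-large-numbers over the network randomness, $X$ concentrates on a deterministic value conditional on the realised signal, so $X$ is a lottery over finitely many outcomes with the signal probabilities $\pi(\cdot\mid\cdot)$ evaluated under $\mu_s$.

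Next I exploit baseline symmetry. By \Cref{lemma:probability_on_giant_baseline}, $\zeta(h,d)=\zeta(l,d)$ and $\hat\zeta(h,d)=\hat\zeta(l,d)$, and under $f_l=f_h$, $q=1$ the two types also share the same degree distribution (\Cref{rem:degree_dist}); hence the population reached by any viral signal is \emph{representative}, containing believers and sceptics in the ratio $\gamma_h:\gamma_l$. Writing $x_\emptyset$ for the action-$1$ share produced when everyone observes $\emptyset$ (the endogenous baseline, determined by whether the equilibrium $\emptyset$-posterior clears $0.5$ for each type), this symmetry lets me write the outcome of a both-persuading viral signal as the \emph{contraction} $X = c\cdot 1+(1-c)\,x_\emptyset$, a convex combination of full persuasion and the baseline, where $c$ is the giant-component share; a believer-only viral signal gives the analogous contraction using $\hat c$. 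Intuitively, because the sender cannot reach sceptics without reaching the same proportion of believers, every network outcome is pulled in from the extreme value $1$ towards $x_\emptyset$.

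I then build a public strategy that dominates. The idea is to replace each contracted outcome $c\cdot1+(1-c)x_\emptyset$ by the mean-preserving two-point lottery placing probability $c$ on full persuasion ($X=1$) and $1-c$ on the baseline $x_\emptyset$, realised through genuine public signals: a signal persuading both types delivers $X=1$, while a believer-only (Type-B) signal paired with a revealing signal reproduces $x_\emptyset$. The critical check is Bayesian plausibility. Because the network's formation and seeding are state-independent given the realised signal, a sceptic's observation is a garbling of the state, so the probability the network persuades a given sceptic cannot exceed the maximal sceptic-persuasion probability attainable by a public experiment; hence the public mimic's demand for both-persuading mass is feasible. The remaining state-$1$ mass is assigned to the Type-B signal so that believers are persuaded exactly when the network baseline persuaded them. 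This is where the baseline structure is essential: the same restriction that lets the network's $\emptyset$ persuade believers (the viral signal cannot be too informative) leaves enough state-$1$ mass for the public Type-B signal, making the mimic feasible.

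Finally, convexity closes the argument: on each realisation Jensen's inequality gives $v\big(c\cdot1+(1-c)x_\emptyset\big)\le c\,v(1)+(1-c)\,v(x_\emptyset)$, so the constructed public lottery yields at least the network outcome realisation-by-realisation; summing and passing to the supremum gives $\limsup_{n\to\infty}V^*_{net}(n)\le V^*_{pub}$ (equivalently, the network lottery is dominated by the public lottery in the increasing convex order). I expect the main obstacle to be making the construction rigorous: verifying that the public mimic is Bayes-feasible for \emph{every} network strategy — in particular that the network's joint achievement of sceptic persuasion at rate $c\rho$ and believer persuasion through $\emptyset$ can always be reproduced publicly — while folding in the believer-only viral signals and the endogenous $\emptyset$-posterior (which differs between the network and its public mimic), and handling $\limsup_n$ together with the supremum over strategies uniformly so that the $\epsilon$-approximations from \Cref{lemma:signal_need_to_be_viral} and \Cref{lemma:only_observe_viral} do not accumulate.
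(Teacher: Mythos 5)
Your reduction to a finite lottery and the appeal to Jensen's inequality are in the right spirit, but the proof breaks at the ``contraction'' step: the identity $X = c\cdot 1 + (1-c)x_\emptyset$ for a viral both-persuading signal is false in general. The receivers who fail to observe a viral signal are not a representative sample of the population: the probability of being on the giant component, $\zeta(d)$, is increasing in degree (\Cref{lemmaprobongiant}), and the equilibrium action upon $\emptyset$, $a^{*}(\emptyset;t,d,\Pi)$, is itself degree-dependent because the $\emptyset$-posterior involves $\zeta(t,d)$ and $\hat{\zeta}(t,d)$. The correct expression is $X = \sum_{t,d}\gamma_{t} p_{d}^{t}\left[\zeta(d) + (1-\zeta(d))a^{*}(\emptyset;t,d,\Pi)\right]$, which equals your $c + (1-c)x_\emptyset$ only when $a^{*}(\emptyset;t,d,\Pi)$ is constant across $d$ --- and it need not be: depending on the signal structure, low-degree sceptics can choose $0$ on $\emptyset$ while high-degree sceptics choose $1$ (this is exactly the effect behind $d_{vote}$ in the voting-game analysis), or the reverse. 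Baseline symmetry (\Cref{lemma:probability_on_giant_baseline}) removes the type asymmetry but does nothing about the degree asymmetry.

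This is not a technicality to be ``folded in'' at the end; it is the reason a single public mimic cannot work, and it is the core of the paper's argument. Under the network strategy a degree-$d$ receiver observes $\emptyset$ in state $\omega$ with probability roughly $1 - \pi(s|\omega)\zeta(d) - \pi(s'|\omega)\hat{\zeta}(d)$, which varies with $d$; no single public $\emptyset$-probability matches all degrees at once, so ``reproduces $x_\emptyset$'' is not even well defined when $\emptyset$-actions differ across degrees, and your garbling remark does not address the joint constraint of matching sceptic-persuasion rates while reproducing degree-heterogeneous believer behaviour on $\emptyset$. The paper resolves this by applying convexity in a different place than you do: Jensen's inequality is first used to split the sender's payoff \emph{across degrees}, bounding it by $\sum_{d} p_{d} V(d,\Pi)$ where $V(d,\Pi)$ is a fictitious payoff in which the entire population behaves like the degree-$d$ cohort; then each $V(d,\Pi)$ is separately bounded by $V_{pub}^{*}+\epsilon$ using a degree-specific public mimic (\Cref{lemmaboundvhat}) that copies the exact finite-$n$ observation probabilities of one type at that degree --- exact matching also sidesteps the discontinuity of actions at posterior $0.5$, which your limit-based construction would otherwise face. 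To repair your argument you would have to carry out the mean-preserving spread and the mimic construction degree by degree, at which point it becomes the paper's proof; as stated, the aggregate two-point lottery you propose is not attainable by any public signal structure.
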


As discussed in introduction, with heterogeneity of priors, the sender wants to provide information only to the sceptics but not to the believers. For instance, if a signal persuades both types of receivers, ideally the sender wants believers not to observe it, and sceptics to observe it very often. Is there a way of choosing seeds to achieve this using network signals?

By \Cref{lemma:signal_need_to_be_viral}, for a signal to reach a large fraction of the population, we need the signals to be viral, i.e. the seeds of the signal needs to be on the relevant giant component. However, once the signals become viral, it is difficult to control who the signal reaches. In particular, in the baseline network, \Cref{lemma:probability_on_giant_baseline} states that once a signal becomes viral, it must reach believers and sceptics with the same probability. Therefore, network signals cannot make sceptics observe informative signals more often than believers. 

\citet{arieli2019private} finds that when receivers are homogeneous, public signals are optimal for the sender. However, in sharp contrast with \Cref{baselineprop} above, they also find that when the sender's payoff function is strictly increasing\footnote{Their condition is actually weaker. They only need the sender to prefer 1 receiver choosing action 1 to no receiver choosing action 1}, public signals cannot be optimal if receivers are heterogeneous, while the sender prefers public signals even with heterogeneous priors in \Cref{baselineprop}. The difference is driven by \citet{arieli2019private} having no communication between receivers. Communication between receivers impose restrictions on correlations of observations between different receivers. \Cref{lemma:probability_on_giant_baseline} sheds light on these restrictions, and intuition above explains why these restrictions imply that network signals cannot outperform public signals. 

I provide a rough sketch of the proof for \Cref{baselineprop}. Suppose for simplicity that the sender's network signal structure $\pi$ uses only 2 non-empty signals $s$ and $s'$, where $s$ persuades both types of receivers to take action 1 and $s'$ persuades only the believers, i.e. $\frac{\pi(s|\omega=1) }{\pi(s|\omega=0) } > \frac{\mu_{l}(\omega=0) }{\mu_{l}(\omega=1) }  $, and $\frac{\mu_{h}(\omega=0) }{\mu_{h}(\omega=1) }< \frac{\pi(s'|\omega=1) }{\pi(s'|\omega=0) } < \frac{\mu_{l}(\omega=0) }{\mu_{l}(\omega=1) }  $.

By \Cref{lemma:signal_need_to_be_viral}, if no seeds are on the relevant giant component, then very few people will observe it. Therefore, for the sketch here, suppose that the seeds for $s$ are always on $L_{1}(g^{n})$ and the seeds for $s'$ are always on $\hat{L}_{1}(g^{n})$. For receivers with degree $d$ and type $t$, let $a^{*}(\emptyset;t,d,\Pi)$ denote their action upon observing $\emptyset$ in equilibrium of the subgame following sender's strategy $\Pi$.

When $n$ gets large, roughly a proportion $\gamma_{t}p_{d}^{t}$ of the population has type $t$ and degree $d$. Among them, roughly a proportion $\zeta(t,d)$ is in the giant component $L_{1}$. Therefore, when $s$ is realised, roughly $\zeta(t,d)$ of those with type $t$ and degree $d$ observe $s$, and others choose $a^{*}(\emptyset;t,d,\Pi)$. Also, roughly a proportion $\hat{\zeta}(t,d) $ of them have a neighbour on the giant component in the subnetwork of type $h$ nodes. Therefore, when $s'$ is realised, roughly $\hat{\zeta}(t,d)$ of those with type $t$ and degree $d$ observe $s'$, and others choose $a^{*}(\emptyset;t,d,\Pi)$. By \Cref{lemma:probability_on_giant_baseline}, $\zeta(t,d)$ and $\hat{\zeta}(t,d)$ do not depend on $t$, so I omit the argument $t$ in this sketch. Also, $p_{d}^{h}$ and $p_{d}^{l}$ are the same in the baseline network, so I omit the superscript. 
Using these, sender's payoff from $\Pi$ is close to
\begin{multline}
  \sum_{\omega}\mu_{s}(\omega) \Bigg[ \pi(s|\omega)v(\sum_{d}p_{d}\sum_{t}\gamma_{t}\big[ \zeta(d) +  (1-\zeta(d)) a^{*}(\emptyset;t,d,\Pi)   \big]    )  \\    +  \pi(s'|\omega)v(\sum_{d}p_{d}\big[ \gamma_{h}\hat{\zeta}(d) + (1-\hat{\zeta}(d))\sum_{t}\gamma_{t} a^{*}(\emptyset;t,d,\Pi)  \big]    )   + \pi(\emptyset|\omega)v( \sum_{d}p_{d} \sum_{t}\gamma_{t} a^{*}(\emptyset;t,d,\Pi) ) \Bigg]  
\end{multline}
If $v$ is convex, then this is smaller than $\sum_{d}p_{d} V(d,\Pi)$ defined as
\begin{multline}
   \sum_{d}p_{d} \sum_{\omega}\mu_{s}(\omega)\Bigg[ 
  \pi(s|\omega)\zeta(d)v(1) + \pi(s'|\omega)\hat{\zeta}(d)v(\gamma_{h}) \\ + (1- \pi(s|\omega)\zeta(d) - \pi(s'|\omega)\hat{\zeta}(d))v(\sum_{t}\gamma_{t} a^{*}(\emptyset;t,d,\Pi) ) 
  \Bigg]  \equiv \sum_{d}p_{d} V(d,\Pi)
\end{multline}
Therefore, to show that public signals are better, it's sufficient to show that we can imitate the payoffs $V(d,\Pi)$ for each $d$. Consider a public signal strategy that sends the signal $s$ with probability $\pi(s|\omega)\zeta(d)$ and $s'$ with probability $\pi(s'|\omega)\hat{\zeta}(d)$ in each state $\omega$, and sends $\emptyset$ with the remaining probability. Note that these are roughly the probabilities that agents with degree $d$ observe $s$, $s'$ and $\emptyset$ under the network signal strategy $\Pi$, therefore posteriors for each observation are also close to those under $\Pi$. A conjecture is that this implies that receiver's action given each observation will be the same as those with degree $d$ under the network signal strategy $\Pi$. A technical difficulty is that receiver's action changes discontinuously at posterior 0.5, so posterior being close does not directly imply that they choose the same action. 
This is dealt with in the proof, and here for the sake of illustration, let us take this conjecture to be true. Given this conjecture, one can check that the sender's payoff from this public signal structure is $V(d,\Pi)$.

Intuitively, given any $d$, the sender can use public signals to imitate the probabilities that receivers observe each signal under some network signal structure. This is possible because, conditional on degree, probabilities of observing signals are (roughly) the same for both types of receivers.

\subsection{Homophily and type-dependent degree distribution}
In this subsection, I study networks that could have homophily and correlation between degrees and types. The conditions for the results in this subsection will be stated in terms of $\{p_{d}^{t}\}_{d}$ and $q_{t}$, i.e. the degree distribution and expected proportion of neighbours of the same type, instead of the distribution over $\lambda$ and $q$. There are two reasons for this. Firstly, from data, it is relatively easy to calculate the degree distribution and $q_{t}$, but it is difficult if not impossible to estimate $q$ and the distribution of $\lambda$. Therefore, stating the results in terms of the degree distributions and $q_{t}$ makes it easier to check whether the conditions of the propositions are satisfied. Secondly, all the results would go through when using the more general random networks model from \citet{sadler2020}, which directly imposes arbitrary degree distributions and values of $q_{t}$, so the results are still relevant even if we face some degree distribution that could not be generated by the more restrictive model used here. I also discuss relations between degree distribution, $q_{t}$ and the distribution over $\lambda$ and $q$ later in this subsection. 

I need to introduce some definition to state the results. First, I define the forward (degree) distribution. The forward distribution captures the degree distribution of a randomly selected neighbour. 
\begin{definition}
The forward distribution $\{\tilde{p}_{d}^{t}\}_{d}$ for type $t$ is defined as $\tilde{p}_{d}^{t} \equiv  \frac{p_{d+1}^{t}(d+1) }{\sum_{d}d p_{d}^{t}  }$ for all $d$.  
\end{definition}
Next I define what is meant by one type being more connected than the other.
\begin{definition}
    Type $t$ receivers are weakly more connected than type $t'$ if one of the two following conditions holds \begin{enumerate}
        \item The degree distributions $\{p_{d}^{t}\}_{d}$ and $\{p_{d}^{t'}\}_{d}$ are the same.
        \item $\{p_{d}^{t}\}_{d}$ FOSD (first order stochastically dominates) $\{p_{d}^{t'}\}_{d}$. If $q_{t} \ne 1 - q_{t'}$, then we also require forward distribution of type $t$ to FOSD forward distribution of type $t'$. 
    \end{enumerate}
\end{definition}
In words, type $t$ is weakly more connected to type $t'$ if they have the same degree distribution, or the degree distribution and forward distribution of type $t$ FOSD those of type $t'$ respectively. 
Like in the last subsection, we first study the probability that receivers observe viral signals. 
\begin{lemma} \label{lemma:probability_on_giant_homophily}
	If $q_{h}$ is (strictly) larger than $1-q_{l}$, then $\hat{\zeta}(h,d)$ is (strictly) larger than $\hat{\zeta}(l,d) $ for all $d$. 
    If type $h$ receivers are weakly more connected than type $l$ receivers, then $\zeta(h,d) \ge \zeta (l,d)$ for all $d$.
\end{lemma}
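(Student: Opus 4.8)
The plan is to reduce each inequality to a scalar comparison using the local branching-process description of the giant components that underlies Lemma~\ref{lemmaprobongiant}, and then to settle that comparison with a monotone-iteration argument. First I would recall the fixed-point description. In the $n\to\infty$ local limit, a node reached along an edge has offspring generating function equal to the forward-degree (excess-degree) generating function $\tilde g_t(x)=\sum_d \tilde p_d^{\,t}x^d$, and each onward edge meets the same type with probability $q_t$ and the other type with probability $1-q_t$. Writing $u_t$ for the probability that the forward cluster of a type-$t$ node reached along an edge is finite, the relevant extinction probabilities solve $u_t=\tilde g_t(\rho_t)$ with $\rho_t=q_t u_t+(1-q_t)u_{t'}$, and $(u_h,u_l)$ is the minimal such solution, obtained as the increasing limit of the coordinatewise-monotone map $T(u_h,u_l)=(\tilde g_h(\rho_h),\tilde g_l(\rho_l))$ iterated from $(0,0)$. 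A degree-$d$ type-$t$ node lies on $L_1$ iff at least one of its $d$ edges opens an infinite forward cluster, so $\zeta(t,d)=1-\rho_t^{\,d}$. Hence the second claim is equivalent to the single inequality $\rho_h\le\rho_l$.

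For that inequality the key identity is
\[
\rho_h-\rho_l=(q_h+q_l-1)(u_h-u_l),
\]
together with the model fact that $q_h+q_l\ge1$ always: substituting the expressions for $q_h,q_l$ from Remark~\ref{rem:degree_dist} gives $q_h-(1-q_l)=\frac{AB(1-q^2)}{(A+Bq)(B+Aq)}\ge0$ with $A=E_h(\lambda)\gamma_h$ and $B=E_l(\lambda)\gamma_l$, vanishing exactly when $q=1$. If $q=1$ the identity forces $\rho_h=\rho_l$ and there is nothing more to prove. If $q<1$ then $q_h+q_l>1$, and since type $h$ is weakly more connected than type $l$ with $q_h\neq1-q_l$, either the forward distributions coincide or that of $h$ first-order stochastically dominates that of $l$; because $x\mapsto x^d$ is nonincreasing on $[0,1]$, in both cases $\tilde g_h\le\tilde g_l$ pointwise. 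I would then carry the invariant $u_h^{(n)}\le u_l^{(n)}$ through the iteration $T^n(0,0)$: it holds at $(0,0)$, and if $u_h^{(n)}\le u_l^{(n)}$ then the identity gives $\rho_h^{(n)}\le\rho_l^{(n)}$, so $u_h^{(n+1)}=\tilde g_h(\rho_h^{(n)})\le\tilde g_l(\rho_h^{(n)})\le\tilde g_l(\rho_l^{(n)})=u_l^{(n+1)}$. Passing to the limit yields $u_h\le u_l$, and the identity then gives $\rho_h\le\rho_l$, i.e.\ $\zeta(h,d)\ge\zeta(l,d)$.

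For the first claim I would repeat the construction inside the believer subnetwork of Corollary~\ref{cor:giant_in_h_subnetwork}, where the extinction probability $\hat u$ solves $\hat u=\tilde g_h(1-q_h+q_h\hat u)$ (an onward edge is a believer edge with probability $q_h$ and then dies with probability $\hat u$, and is otherwise pruned). A sceptic is never on $\hat L_1$, and because a sceptic cannot join two believers through itself, the events ``edge $j$ reaches a believer with an infinite onward believer-cluster'' are independent across its $d$ edges, giving $\hat\zeta(l,d)=1-[1-(1-q_l)(1-\hat u)]^d$. The cleaner point is for believers: a believer $v$ has a neighbour on $\hat L_1$ if and only if $v\in\hat L_1$ (a believer neighbour on $\hat L_1$ is joined to $v$ by a believer edge and so pulls $v$ onto $\hat L_1$, and conversely), which sidesteps the back-edge correlation and gives $\hat\zeta(h,d)=1-[1-q_h(1-\hat u)]^d$. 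Comparing, $\hat\zeta(h,d)\ge\hat\zeta(l,d)$ reduces to $q_h(1-\hat u)\ge(1-q_l)(1-\hat u)$, which for $d\ge1$ and a nonempty believer giant ($\hat c>0$, so $1-\hat u>0$) is exactly $q_h\ge1-q_l$, strict precisely when the hypothesis is strict.

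The main obstacle throughout is that $u_h$ and $u_l$ are coupled through the mixing probabilities, so $\rho_h$ and $\rho_l$ cannot be compared term by term; the identity $\rho_h-\rho_l=(q_h+q_l-1)(u_h-u_l)$ combined with the structural fact $q_h+q_l\ge1$ is what fixes the sign, and the monotone invariant propagates it to the correct minimal fixed point. The remaining delicate step is the believer case of claim~1, where I rely on the equivalence ``$v\in\hat L_1$ iff $v$ has a neighbour on $\hat L_1$'' to avoid double-counting the edge back to $v$; there I would also flag explicitly that strictness requires $d\ge1$ and $\hat c>0$.
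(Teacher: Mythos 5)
Your proposal is correct, and on the second claim it takes a genuinely different route from the paper; on the first claim it is essentially the paper's argument in closed form. For the $\hat\zeta$ comparison, the paper's proof of Lemma~\ref{lemma:probability_on_hatL} derives $\hat\zeta(h,d)=\sum_{d_h}Binom(d_h;d,q_h)\,\hat\rho(\infty|d_h,h)$ and $\hat\zeta(l,d)=\sum_{d_h}Binom(d_h;d,1-q_l)\,\hat\rho(\infty|d_h,h)$ and concludes by FOSD of the binomial in its success probability together with monotonicity of $\hat\rho(\infty|\cdot,h)$; your expressions $1-[1-q_h(1-\hat u)]^d$ and $1-[1-(1-q_l)(1-\hat u)]^d$ are exactly these mixtures collapsed, your believer identity ($v\in\hat L_1$ iff $v$ has a neighbour on $\hat L_1$) is the same identification the paper uses implicitly, and your explicit caveat that strictness needs $d\ge1$ and $\hat c>0$ is a precision the lemma's statement glosses over. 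For the $\zeta$ comparison the two proofs genuinely diverge: the paper argues by contradiction, setting $\alpha=(1-\rho(\infty|h,1))/(1-\rho(\infty|l,1))>1$, writing the coupled fixed-point equations in terms of forward distributions, and ruling out any solution with $\alpha>1$ by a linear-versus-convex comparison of the two sides with boundary evaluations, treating $q_h<1-q_l$ and $q_h>1-q_l$ as separate cases; you instead isolate the identity $\rho_h-\rho_l=(q_h+q_l-1)(u_h-u_l)$, prove the structural fact $q_h+q_l\ge1$ (with equality iff $q=1$) directly from the model's formula for $q_t$, and propagate the invariant $u_h^{(n)}\le u_l^{(n)}$ through the monotone iteration that converges to the minimal fixed point. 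Your route is cleaner, and it makes explicit something the paper never states: in this parametric model $q_h\ge 1-q_l$ is automatic, so the paper's case $q_h<1-q_l$ is vacuous here. What the paper's argument buys in exchange is generality: it never uses $q_h+q_l\ge1$, so it survives in the more general setting of \citet{sadler2020} (to which the paper says these results extend), where $q_t$ are arbitrary and your structural fact can fail; it also rules out all fixed points with $\alpha>1$ rather than relying on the minimality characterization of extinction probabilities. One small slip to fix: you write that ``$x\mapsto x^d$ is nonincreasing on $[0,1]$'' where you mean that $d\mapsto x^d$ is nonincreasing for fixed $x\in[0,1]$; the use you make of it, namely $\tilde g_h\le\tilde g_l$ pointwise under forward FOSD, is correct as stated.
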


Before stating the result, I briefly comment on how $q_{t}$ is related to homophily. \citet{coleman1958relational}'s measure of homophily in this model is $\frac{q_{t}-\gamma_{t} }{1- \gamma_{t}}$, and there is homophily of type $t$ if this measure is positive. Therefore, the network exhibits homophily if $q_{t} > \gamma_{t}$ for both types. Note that if $q_{h} > \gamma_{h}$ and $q_{l} > \gamma_{l}$, then we must have $q_{h} > 1-q_{l}$ since $\gamma_{h}$ and $\gamma_{l}$ sum up to 1. Therefore, the condition in lemma \ref{lemma:probability_on_giant_homophily} and the proposition below are weaker than requiring the network to exhibit homophily. 

Using lemma \ref{lemma:probability_on_giant_homophily}, we can show that conclusions from \Cref{baselineprop} extend even when the network exhibits homophily, if believers are weakly more connected than sceptics.  
\begin{proposition} \label{prop:main_prop}
In networks where type $h$ receivers are weakly more connected than type $l$ receivers and $q_{h} \ge 1-q_{l}>0$, a risk-loving or risk neutral sender prefers public signals. \\
Formally, in such networks, if $v$ is linear or convex, then $\limsup_{n \rightarrow \infty} V^{*}_{net}(n) \le V^{*}_{pub}$.
\end{proposition}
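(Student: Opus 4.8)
The plan is to follow the architecture of the proof of \Cref{baselineprop}, replacing the symmetry $\zeta(h,d)=\zeta(l,d)$, $\hat{\zeta}(h,d)=\hat{\zeta}(l,d)$ used there by the one-sided inequalities supplied by \Cref{lemma:probability_on_giant_homophily}. First I would reduce an arbitrary network strategy to a canonical two-signal form. By \Cref{lemma:non_empty_posterior} the posterior after any non-empty signal equals its public posterior, so every non-empty signal either persuades both types or only believers (a signal that persuades a sceptic also persuades a believer, whose prior is higher). Since $\zeta$ and $\hat{\zeta}$ in \Cref{lemmaprobongiant} depend only on $(t,d)$ and not on the signal's label, I can aggregate all persuade-both realizations into one signal $s$ and all persuade-believers realizations into one $s'$, with state-dependent probabilities $\pi_s(\omega),\pi_{s'}(\omega)$ and $\emptyset$ otherwise. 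Then \Cref{lemma:signal_need_to_be_viral} and \Cref{lemma:only_observe_viral} let me assume, up to vanishing terms, that $s$ is seeded on $L_1(g^n)$ and $s'$ on $\hat{L}_1(g^n)$, so receivers observe a signal only through the relevant giant component.

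Second, I would write the limiting payoff. Using \Cref{lemmaprobongiant} and concentration, conditional on the realized signal the fraction choosing action $1$ converges to a deterministic value: when $s$ is realized it is $X_s=\sum_{t,d}\gamma_t p_d^t[\zeta(t,d)+(1-\zeta(t,d))a^{\ast}(\emptyset;t,d)]$; when $s'$ is realized, reached believers take $1$ and reached sceptics take $0$, yielding an analogous $X_{s'}$; and $X_\emptyset=\sum_{t,d}\gamma_t p_d^t\,a^{\ast}(\emptyset;t,d)$, where $a^{\ast}(\emptyset;t,d)\in\{0,1\}$ is the equilibrium action on $\emptyset$. Thus $\limsup_n V^{\ast}_{net}(n)$ is bounded above by the supremum over canonical strategies of $\sum_\omega\mu_s(\omega)[\pi_s(\omega)v(X_s)+\pi_{s'}(\omega)v(X_{s'})+\pi_\emptyset(\omega)v(X_\emptyset)]$.

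Third comes the comparison. The key facts are that $V^{\ast}_{pub}$ is network-independent and that a persuade-both (resp.\ persuade-believers) public signal makes the population fraction $1$ (resp.\ $\gamma_h$) choose action $1$. I would bound $v(X_s)$ and $v(X_{s'})$ by convexity together with $v(0)=0$: the mass persuaded outright by $s$ is at most $\Theta_s:=\sum_{t,d}\gamma_t p_d^t\zeta(t,d)\le 1$, and the mass persuaded by $s'$ is a subset of the believers, hence at most $\gamma_h$. Splitting each $v(X_\sigma)$ into the vertices $v(1)$, $v(\gamma_h)$ and $v(X_\emptyset)$, I would then replace the network probabilities by the public probabilities $\rho_{both}(\omega)=\pi_s(\omega)\Theta_s$ and $\rho_{bel}(\omega)=\pi_{s'}(\omega)\Theta^h_{s'}/\gamma_h$, where $\Theta^h_{s'}$ is the believer reach of $s'$. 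Because these multiply $\pi_\sigma(1)$ and $\pi_\sigma(0)$ by the same state-independent reach, the persuade/not-persuade likelihood-ratio inequalities of $s$ and $s'$ are inherited verbatim, so the construction is a feasible public strategy; by convexity its payoff dominates the corresponding network terms, giving $\le V^{\ast}_{pub}$ in the case where $\emptyset$ persuades no one.

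The hard part is the $\emptyset$-persuaded mass, i.e.\ the terms carrying $a^{\ast}(\emptyset;t,d)$, which couple the three outcomes: conditioning on ``did not observe $s$'' re-weights the survivors toward sceptics, precisely because, by \Cref{lemma:probability_on_giant_homophily}, believers are reached more ($\zeta(h,d)\ge\zeta(l,d)$ and $\hat{\zeta}(h,d)\ge\hat{\zeta}(l,d)$). This is exactly where the proposition's hypotheses enter: believers being more connected makes their no-signal event more pessimistic, so $\emptyset$ can persuade only a controlled, sceptic-weighted set, and this asymmetry works \emph{for} the public benchmark, not against it. I would carry the $X_\emptyset$ terms through the convex splitting and use the monotonicity to sign the comparison, verifying that the induced public $\emptyset$-event (whose posterior is type-specific but degree-independent) persuades at least the matched fraction; the discontinuity of the best response at posterior $0.5$ would be handled as in \Cref{baselineprop}, through a vanishing perturbation of probabilities and the sender-favourable tie-breaking rule. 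I expect the genuine difficulty to be reconciling the degree-$d$ believer share with the aggregate $\gamma_h$ while keeping these $\emptyset$-actions consistent across types, since it is here that the clean per-degree imitation of the baseline breaks down and the inequalities of \Cref{lemma:probability_on_giant_homophily} must do the work.
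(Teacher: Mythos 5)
Your reduction steps (posterior pinned down by \Cref{lemma:non_empty_posterior}, seeding on the relevant giant components via \Cref{lemma:signal_need_to_be_viral,lemma:only_observe_viral}, convex splitting of the payoff at the vertices $v(1)$, $v(\gamma_h)$, $v(X_\emptyset)$) match the paper's architecture, but the step you defer --- ``carry the $X_\emptyset$ terms through the convex splitting'' --- is exactly where the paper's proof does its real work, and your aggregate-first construction cannot complete it. The obstruction is structural: under a public strategy each type's action on $\emptyset$ is all-or-nothing, whereas under the network strategy $a^{*}(\emptyset;t,d)$ genuinely varies with $d$ (since $\hat{\zeta}(l,\cdot)$ is increasing, the sceptics persuaded on $\emptyset$ form an upper tail $d \ge d_{min}$ of the degree distribution, and believers' $\emptyset$-actions can also be degree-dependent). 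A single public strategy built from the aggregate reaches $\Theta_s=\sum_{t,d}\gamma_t p_d^t\zeta(t,d)$ produces one $\emptyset$-posterior per type, and whether it persuades is governed by the aggregate, not by the per-degree conditions that determined the network equilibrium; there is no reason its $\emptyset$-event ``persuades at least the matched fraction,'' and the claimed dominance fails to follow even in your ``easy case'' where no sceptic is persuaded on $\emptyset$ but believers of some degrees are.

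The paper avoids aggregating reaches altogether. It proves a \emph{per-degree} bound (Lemma \ref{lemmaboundvhat}): for each $d$ it imitates, with a public strategy, the observation probabilities of a \emph{single} (type, degree) cell --- type $h$'s reaches when degree-$d$ sceptics choose $0$ on $\emptyset$ (strategy $\pi_2$), and type $l$'s reaches at $d_{min}$ when they choose $1$ (strategy $\pi_3$, under which \emph{all} receivers choose $1$ on $\emptyset$ since believers have higher priors) --- yielding $\hat{V}(d,\Pi^n)\le V^{*}_{pub}+\epsilon$ and $\hat{V}^{l}(d_{min},\Pi^n)\le V^{*}_{pub}$. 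Only then does it aggregate over degrees, and this is where the hypothesis you never invoke enters: the FOSD of the type-$h$ degree distribution over the type-$l$ one (part of ``weakly more connected''), combined with monotonicity of $\zeta,\hat{\zeta}$ in $d$ and convexity, lets the true payoff be bounded by $\sum_{d<d_{min}}p_d^{h}\hat{V}(d,\Pi^n)+\sum_{d\ge d_{min}}p_d^{h}\hat{V}^{l}(d_{min},\Pi^n)$ plus vanishing terms. Your proposal uses only the $\zeta(h,d)\ge\zeta(l,d)$ and $\hat{\zeta}(h,d)\ge\hat{\zeta}(l,d)$ inequalities of \Cref{lemma:probability_on_giant_homophily}; without the degree-distribution FOSD and the per-degree (rather than aggregate) imitation, the comparison you need cannot be signed, so the proof as proposed has a genuine gap.
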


It should be emphasized that when setting up the random network model, complete segregation is ruled out by setting $q>0$. This implies that the requirement $1-q_{l}>0$ in the proposition always holds. Note that there can still be very high levels of homophily, i.e. $q_{h}$ and $q_{l}$ can both be arbitrarily close to 1. When there is complete segregation, the sender is obviously in the ideal situation where she could send information only to the sceptics and not to the believers. 

Homophily is often cited as one of the reasons why advertising and political campaigns exploiting social networks are more effective than traditional forms of advertisement, because it influences what information people could observe. 
\Cref{lemma:probability_on_giant_homophily} indeed shows that homophily and type-degree correlation introduce asymmetries between the two groups. Roughly speaking, the sender would like to send signals more often to sceptics and keep information away from believers. From \Cref{lemma:probability_on_giant_homophily}, when there is homophily and believers are weakly more connected in the network, the asymmetry introduced is the opposite of what the sender desires.

To discuss the mechanism in more detail, I again use the simplified setup for the sketch proof of \Cref{baselineprop}, where the sender uses only two signals $s$ and $s'$, with $s$ persuading both types and $s'$ persuading believers only. 
 From \Cref{lemma:probability_on_giant_homophily} and focussing on the case where the two types are equally connected, homophily makes believers observe signal $s'$ more often than sceptics. This asymmetry means that the imitation argument from last subsection does not directly work. Nonetheless, we could still imitate the probabilities of observing each signal for receivers of a given type and degree $d$. For any degree $d$, there are two cases. If sceptics with degree $d$ choose action 0 upon observing $\emptyset$, then they only choose action 1 when they see $s$. Therefore, if we let them observe signals with the same probabilities as believers with degree $d$, then their probabilities of choosing action 1 will not change. In the second case, suppose that sceptics choose action 1 upon observing $\emptyset$, then believers must always choose action 1 when they observe signals with the same probabilities as sceptics since they have higher priors. Therefore, when the sender has weakly convex payoff functions, imitating probabilities for one of the types gives the sender higher payoff than her 'payoff' from type $d$ receivers (i.e. counterpart here for $V(d,\Pi)$ in last subsection). 

Now, if believers are in addition more connected, then they will also observe $s$ with higher probabilities. This is exactly the opposite of what the sender desires. The sender would like sceptics to observe $s$ since they need convincing, and keep $s$ away from believers since observing $s$ too often can make $\emptyset$ too informative of the bad state.

This result does not say that homophily is not useful for the sender when she has to use network signals. 
Homophily makes believers more clustered together, and signals that only persuade believers can spread more widely. Compared to a network without homophily, sender could send these signals less often while still maintaining the same probability that a believer with degree $d$ observes them. These signals are realised less often now, but are observed by more receivers when realised, which increases risk and potentially makes the sender better off.

The next remark relates the degree distributions and $q_{t}$ to the distributions of $\lambda$ and $q$.
\begin{rem} \label{rem:conditions}
\begin{enumerate}
    \item $q_{h} > 1- q_{l}$ if and only if $q<1$, and $q_{h} = 1-q_{l}$ if and only if $q=1$.
    \item If $q=1$, type $t$ is weakly more connected than type $t'$ if distribution of $\lambda$ does not depend on type or if $f_{t}$ FOSD $f_{t'}$ 
    \item Given any $f_{t'}$, we can find some $r$ large enough such that type $t$ is more connected than type $t'$ if $f_{t}(r \lambda) = f_{t'}(\lambda)$ for all $\lambda$.\\ A special case is the island model often used in literature where all receivers have the same $\lambda$. In these models, the larger group is more connected. 
\end{enumerate}
\end{rem}

Next, I present conditions under which information can be sent to sceptics more often than believers, and therefore network signals are better than public signals for the sender. 
\begin{proposition} \label{prop:sceptics_more_connected}
If expected degree of sceptics is sufficiently large relative to expected degree of believers, the sender prefers network signals.\\
   Formally, suppose that there's no type $l$ receiver with $\lambda$ smaller than some arbitrary $\lambda' >0$. We can find some $\overline{x}$ large enough such that if expected degree of type $l$ receivers is larger than $\overline{x}$ and expected degree of type $h$ receivers is smaller than $1 / \overline{x}$, then $\liminf_{n \rightarrow \infty} V_{net}^{*}(n) > V_{pub}^{*}$.
\end{proposition}
Intuitively, believers are less connected than sceptics, and therefore they receive information less often than sceptics, which is desirable for the sender. As an extreme example, consider a network where all believers are isolated and all sceptics are connected together. This allows the sender to achieve the ideal outcome of sending information only to sceptics and keeping believers completely uninformed. 
\paragraph{Discussion}
One high level intuition underlying this and the last subsection is that, it is difficult to control who sees viral signals using seeding. It is the network structures and how people share signals that determine who observe viral signals. Therefore, whether a risk-loving or risk-neutral sender prefers network signals depends, roughly speaking, on whether the network structure causes believers to receive viral signals less often than sceptics.

As discussed in the literature review, most existing papers on persuasion do not consider endogenous communication between receivers on a network. 
Communication and the networks impose endogenous restrictions on correlations between what the receivers observe. Lemmas \ref{lemma:probability_on_giant_baseline} and \ref{lemma:probability_on_giant_homophily} shed light on these restrictions, and propositions \ref{baselineprop} to \ref{prop:sceptics_more_connected} characterise networks in which public signals can and cannot be outperformed for a risk-loving or risk-neutral sender.

In particular, propositions \ref{baselineprop} and \ref{prop:main_prop} show that these restrictions can severely constrain the sender's payoff using network signals compared to public signals. If the sender can use public signals and knows that network satisfy conditions in these propositions. then more detailed information about the network is not very valuable to the sender. In practice, sending public signals may be more costly than using network signals. However, it is also costly to collect and analyse detailed information about network structure, especially in large and complex networks. Therefore, these results imply that it may be more efficient for the sender to spend resources on sending public signals rather than collecting information about the network. 

When sceptics are more connected than believers, \Cref{prop:sceptics_more_connected} implies that there is a stronger case for understanding and exploiting the network. In this case, the sender has incentive to know how to seed signals in order to make it go viral (i.e. to identify the giant component). Such information allows the sender to get higher payoff than using public signals. 

\subsection{Risk aversion} \label{subsec:risk_aver}
Results in the last two subsections are mainly about risk-loving and risk-neutral senders, except for \Cref{prop:sceptics_more_connected} which puts no condition on the sender's preference in addition to those in model setup. If the sender is not very risk-averse (i.e. payoff function $v$ is concave but very close to linear), results above still apply. This subsection considers a sufficiently risk-averse sender. 
\begin{proposition} \label{prop:risk_aversion}
    If $\hat{c}>0$ (i.e. the giant component in the subnetwork of believers is non-vanishing), there exists a concave payoff function $v$ such that $\liminf_{n \rightarrow \infty} V^{*}_{net}(n) > V^{*}_{pub}$.\\ In particular\footnote{I thank an anonymous referee who suggested this example as a conjecture}, suppose that $v(x) = (x^{b}-1)/b$, which has constant relative risk aversion. If $\hat{c}>0$, the sender prefers network signals for sufficiently small $b$.
\end{proposition}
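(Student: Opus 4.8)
The plan is to compare the two regimes through the distribution they induce over the mass $x$ of receivers who end up playing action $1$, and to exploit the fact that a sufficiently risk-averse sender cares almost entirely about the worst realisation of $x$. First I would pin down the public benchmark. Since all receivers of a given type share a prior and observe the same public signal, at any signal realisation either all sceptics play $1$ or none do, and likewise for believers; hence $x$ takes values in $\{0,\gamma_l,\gamma_h,1\}$. For a sufficiently concave $v$ any strategy placing positive probability on $x<\gamma_h$ is dominated by one that keeps the believers persuaded in every realisation, so the optimal public strategy keeps $x\in\{\gamma_h,1\}$ and maximises the probability $P$ of the event $x=1$. A short linear-programming argument (persuade the sceptics with a signal whose likelihood ratio equals $\mu_l(0)/\mu_l(1)$, and reserve the complementary mass for a signal that keeps the believers above $0.5$) shows that this maximal probability, call it $P_{pub}$, is strictly below the unconstrained persuasion probability $\mu_s(1)+\mu_s(0)\mu_l(1)/\mu_l(0)$: in the bad state the mass spent keeping believers persuaded cannot simultaneously persuade sceptics. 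Thus for sufficiently concave $v$ one has $V^{*}_{pub}=P_{pub}\,v(1)+(1-P_{pub})\,v(\gamma_h)$.

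Next I would construct a network strategy that decouples these two tasks. Use a signal $s$ persuading both types (so everyone who receives it forwards it) with $\pi(s\mid 1)$ large and $\pi(s\mid 0)$ set so a reached sceptic's posterior is exactly $0.5$, seeded on the giant component. By \Cref{lemma:signal_need_to_be_viral}, \Cref{lemma:only_observe_viral} and \Cref{lemmaprobongiant}, whenever $s$ is sent a deterministic positive fraction of sceptics receive it and, breaking ties for the sender, play $1$, so the good event yields a deterministic $x_g\in(\gamma_h,1)$. The role of $\hat c>0$ is to keep every believer persuaded even when $s$ is absent: a second signal $s'$ that persuades only believers, seeded on the believer giant component $\hat{L}_{1}$ (which has non-vanishing size by \Cref{cor:giant_in_h_subnetwork}), reaches a deterministic positive mass of believers through believer-only paths, and by \Cref{lemma:non_empty_posterior} a believer who receives $s'$ plays $1$. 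Choosing the rates of $s$ and $s'$ so that no believer's posterior after observing $\emptyset$ falls below $0.5$ makes the worst realisation of the network strategy exactly $x=\gamma_h$ while allowing $\pi(s\mid 1)$ to be pushed up toward $1$. The good event then occurs with probability $Q$ that can be taken strictly above $P_{pub}$, precisely because propping up the believers no longer competes with persuading the sceptics for probability mass in the bad state.

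Finally I would carry out the risk comparison for $v(x)=(x^{b}-1)/b$. Here $v(1)=0$ and, for $b<0$, $v(y)\sim y^{b}/b\to-\infty$ for every $y<1$, with $v(\gamma_h)$ dominating $v(x_g)$ since $\gamma_h<x_g$. The constructed strategy gives, in the limit, $\liminf_{n\to\infty}V^{*}_{net}(n)\ge Q\,v(x_g)+(1-Q)\,v(\gamma_h)$, so
\[
\liminf_{n\to\infty}V^{*}_{net}(n)-V^{*}_{pub}\ \ge\ Q\,v(x_g)+(P_{pub}-Q)\,v(\gamma_h).
\]
Since $Q>P_{pub}$, the coefficient $P_{pub}-Q<0$ multiplies $v(\gamma_h)\to-\infty$, so the right-hand side tends to $+\infty$ as $b\to-\infty$ and is positive for all sufficiently small $b$. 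This yields both the existence claim and the constant-relative-risk-aversion specialisation.

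The main obstacle is the middle step: guaranteeing that the worst-case network outcome does not dip below $\gamma_h$. With $\pi(s\mid 1)$ large, a believer who observes $\emptyset$ infers the bad state and is tempted to switch to action $0$, and the high-degree believers most likely to be reached are exactly the ones most prone to switching. Making the argument precise requires showing that $\hat c>0$ lets $s'$ catch all potentially-defecting believers (or, alternatively, choosing the rates of $s$ and $s'$ to equalise each believer's probability of receiving a persuasive signal across states, rendering $\emptyset$ uninformative for them), and combining this with the concentration statements behind \Cref{lemmaprobongiant} so that the induced fractions can be treated as deterministic in the $n\to\infty$ limit. That verification, rather than the risk-aversion bookkeeping, is where the real work lies.
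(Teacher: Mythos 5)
Your public benchmark is essentially correct, but the network construction at the core of your argument cannot deliver the claimed inequality $Q>P_{pub}$; in fact it forces $Q\le P_{pub}$, so the comparison collapses for \emph{every} concave $v$. The obstacle you flag at the end is not a technical verification issue but a fatal one. Write out the constraint: for $s$ to persuade sceptics you need $\pi(s|0)\le \frac{\mu_l(1)}{\mu_l(0)}\pi(s|1)$, and for a degree-$d$ believer observing $\emptyset$ to keep posterior at least $0.5$ you need
\[
\zeta(h,d)\bigl[\mu_h(1)\pi(s|1)-\mu_h(0)\pi(s|0)\bigr]+\hat\zeta(h,d)\bigl[\mu_h(1)\pi(s'|1)-\mu_h(0)\pi(s'|0)\bigr]\;\le\;\mu_h(1)-\mu_h(0).
\]
The $s'$-term is non-negative whenever $s'$ (weakly) persuades believers, with equality only at exact indifference, so $s'$ can never relax the constraint on $\pi(s|1)$: neither of your proposed fixes works ("catching" defectors with $s'$ is irrelevant because the binding receivers are precisely those who observe nothing, and equalising observation probabilities across states for every $d$ simultaneously is impossible because $\zeta(h,\cdot)/\hat\zeta(h,\cdot)$ is not constant in $d$). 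Since $\hat c>0$ implies the whole-network giant component is non-vanishing, $\zeta(h,d)\to 1$ as $d\to\infty$, so keeping all believers persuaded forces $\pi(s|1)\le\kappa\equiv\frac{\mu_h(1)-\mu_h(0)}{\mu_h(1)-\mu_h(0)\mu_l(1)/\mu_l(0)}$. But your own public LP has value exactly $P_{pub}=\kappa\bigl(\mu_s(1)+\mu_s(0)\mu_l(1)/\mu_l(0)\bigr)$, hence $Q=\mu_s(1)\pi(s|1)+\mu_s(0)\pi(s|0)\le P_{pub}$. Worse, your good event yields only $x_g<1$ against the public $x=1$, so your network strategy is weakly dominated by the public optimum for any concave $v$; no choice of $b$ rescues it, and letting some high-degree believers defect (to push $\pi(s|1)$ above $\kappa$) drops the network floor strictly below $\gamma_h$, which is catastrophic as $b\to-\infty$.

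The paper's proof runs on a mechanism your construction omits entirely. It uses \emph{no} signal that persuades sceptics: the only non-empty signal is a believer-indifferent $s'$ with $\pi(s'|0)=\frac{\mu_h(1)}{\mu_h(0)}\pi(s'|1)>\pi(s'|1)$, seeded on $\hat{L}_{1}$. Because $s'$ is more likely in state $0$, and because a high-degree sceptic would almost surely observe it if it were sent ($\hat\zeta(l,d)\to 1$, which is exactly where $\hat c>0$ enters via \Cref{lemmaprobongiant}), observing $\emptyset$ is \emph{good} news for such sceptics, and for the paper's $r>1$ they choose action $1$ upon $\emptyset$. Hence in every realisation the mass of action-$1$ players is at least $\gamma_h+\gamma_l\sum_{d\ge\underline d}p_d^l\bigl(1-\hat\zeta(l,d)\bigr)>\gamma_h$: network signals raise the worst case strictly above $\gamma_h$, something public signals can never do since no public signal persuades sceptics with probability one. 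A capped-linear $v$, or CRRA with $b$ sufficiently negative, then makes this floor decisive. Your proposal instead keeps the floor at $\gamma_h$ and tries to win on the probability of the good event — precisely the margin on which, by the imitation logic behind \Cref{baselineprop} and \Cref{prop:main_prop}, network signals cannot beat public ones.
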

A necessary and sufficient condition for $\hat{c}>0$ can be found in \citet{bollobas2007phase}. A sufficient condition is that $\lambda_{min}^{2}\gamma_{h}\sum_{\lambda>0}f_{h}(\lambda) >1 $, where $\lambda_{min}$ is the minimum non-zero value $\lambda$ could take. Intuitively, this condition says that believers are sufficiently connected. 

With network signals, receivers of the same type can have different observations. When $\hat{c}>0$, by sending signals that persuade only believers to the giant component in the subnetwork of believers, the sender can persuade all believers and at least some sceptics all the time. However, it is in general impossible to persuade all sceptics at the same time using network signals. With public signals, all receivers of the same type must choose the same action, so it is impossible to persuade some sceptics all the time. A sender who is very risk-averse prefers network signals, since it involves less risk.

\subsection{Voting game} \label{subsec:vote}
An often studied persuasion game is the voting game. In voting games, the sender only cares about whether the proportion of receivers taking sender's preferred action reaches a certain threshold. 
Let the sender's payoff function be defined in the following way: There exists some $\overline{x} < 1$ such that $v(x)=1 $ if $x \ge \overline{x}$ and $v(x)=0$ otherwise. To make the problem non-trivial, I assume that $\overline{x} > \gamma_{h}$. 

To state the result, I introduce the following definition. Define $d_{vote}$ to be 
	\[
	d_{vote} \equiv \min\{ d \in \mathbb{N}_{0}: \frac{\mu_{l}(\omega=1)}{\mu_{l}(\omega=0)} \frac{ 1 - \frac{\mu_{h}(\omega=0)}{\mu_{h}(\omega=1) }\hat{\zeta}(l,d) }{1 - \hat{\zeta}(l,d) } \ge 1\}
	\]
	Let $d_{vote}$ be $\infty$ if there exists no such $d$. As shown in proof of the next proposition, if a type $l$ receiver has degree lower than $d_{vote}$, then she cannot choose action 1 upon observing $\emptyset$. Using this definition, we can provide a necessary condition and a sufficient condition for network signals to outperform public signals. 
	\begin{proposition} \label{propvotinggame}
		A necessary condition for $\limsup_{n \rightarrow \infty} V_{net}^{*}(n) > V_{pub}^{*}$ is
		\[
		\sum_{d \ge d_{vote}} \gamma_{l} p_{d}^{l} \ge \overline{x} - \gamma_{h}
		\]
		A sufficient condition for $\liminf_{n \rightarrow \infty} V_{net}^{*}(n) > V_{pub}^{*}$ is
		\[
		\sum_{d \ge d_{vote}+1} \gamma_{l} p_{d}^{l}(1-\hat{\zeta}(l,d)) > \overline{x} - \gamma_{h}
		\]
	\end{proposition}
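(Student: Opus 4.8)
The plan is to first pin down $V^{*}_{pub}$, then characterise $d_{vote}$, and finally read the two inequalities as matching upper and lower bounds on what network signals can achieve. Because $\overline{x}>\gamma_{h}$ and public signals force all receivers of a given type to act alike, the realised action-$1$ fraction lies in $\{0,\gamma_{h},1\}$, so the threshold is met only when \emph{every} sceptic is persuaded. Persuading the sceptic is the standard single-receiver problem with receiver prior $\mu_{l}$, so I would record
\[
V^{*}_{pub} = \mu_{s}(\omega=1) + \mu_{s}(\omega=0)\frac{\mu_{l}(\omega=1)}{\mu_{l}(\omega=0)} < 1 ,
\]
attained by $\pi(s|\omega=1)=1,\ \pi(s|\omega=0)=\mu_{l}(\omega=1)/\mu_{l}(\omega=0)$; the same likelihood ratio reappears below.

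\textbf{Characterising $d_{vote}$ uniformly in $\Pi$.} By \Cref{lemma:non_empty_posterior} a sceptic who sees a non-empty signal holds her public posterior, so she acts $1$ on a non-empty signal only if it persuades both types (``persuade-all''); the only other route to action $1$ is observing $\emptyset$. Since network formation and seeding are state-independent, the probability a degree-$d$ sceptic sees $\emptyset$ in state $\omega$ equals $1-\sum_{s}\pi(s|\omega)\rho_{s}(d)$, with $\rho_{s}(d)\in\{0,\hat{\zeta}(l,d),\zeta(l,d)\}$ according to whether $s$ persuades nobody, only believers, or both (\Cref{lemmaprobongiant}). Her $\emptyset$-posterior is increasing in the state-$1$/state-$0$ likelihood ratio of these two quantities, which I maximise over feasible $\pi$: persuade-all signals obey $\pi(s|\omega=1)\ge\frac{\mu_{l}(\omega=0)}{\mu_{l}(\omega=1)}\pi(s|\omega=0)>\pi(s|\omega=0)$ and so push the ratio the wrong way, leaving the optimum at a single believer-only signal with the most extreme admissible ratio, $\pi(s'|\omega=0)=1$ and $\pi(s'|\omega=1)=\mu_{h}(\omega=0)/\mu_{h}(\omega=1)$. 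Substituting $\rho_{s'}=\hat{\zeta}(l,d)$ reduces ``persuadable at $\emptyset$'' exactly to the defining inequality of $d_{vote}$; as $\hat{\zeta}(l,d)$ is increasing in $d$ the inequality is monotone, so no sceptic with $d<d_{vote}$ is ever persuaded at $\emptyset$, \emph{for any} $\Pi$.

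\textbf{Necessary condition.} I would prove the contrapositive: if $\sum_{d\ge d_{vote}}\gamma_{l}p_{d}^{l}<\overline{x}-\gamma_{h}$ then $\limsup_{n}V^{*}_{net}(n)\le V^{*}_{pub}$. Each draw realises one signal. If it is not persuade-all, the previous step bounds the action-$1$ takers by believers ($\le\gamma_{h}$) plus degree-$\ge d_{vote}$ sceptics ($\le\sum_{d\ge d_{vote}}\gamma_{l}p_{d}^{l}$), whose combined empirical frequency converges, uniformly in $\Pi$, to a constant strictly below $\overline{x}$; hence $\Pr(x\ge\overline{x}\mid\text{non persuade-all})\to0$ uniformly. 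Summing the likelihood-ratio constraints over persuade-all signals gives $\sum\pi(s|\omega=0)\le\frac{\mu_{l}(\omega=1)}{\mu_{l}(\omega=0)}\sum\pi(s|\omega=1)$, whence $\Pr_{\mu_{s}}(\text{persuade-all realised})\le V^{*}_{pub}$. Combining, $V_{net}(\Pi,n)\le V^{*}_{pub}+o(1)$ uniformly, which gives the claim.

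\textbf{Sufficient condition and main obstacle.} For the lower bound I would run the believer-only strategy above, perturbed to $\pi(s'|\omega=1)=\mu_{h}(\omega=0)/\mu_{h}(\omega=1)+\delta$ for small $\delta>0$ and seeded on $\hat{L}_{1}(g^{n})$ (non-vacuous, since the sufficient inequality forces $\hat{c}>0$). Every believer then takes action $1$ (on $s'$ by persuasion, on $\emptyset$ because $\emptyset$ is good news), while a degree-$d$ sceptic takes action $1$ precisely when she sees $\emptyset$, which for $d\ge d_{vote}+1$ she does unless reached, an event of frequency $\hat{\zeta}(l,d)$. Thus in the $s'$-realisation the action-$1$ frequency converges to $\gamma_{h}+\sum_{d\ge d_{vote}+1}\gamma_{l}p_{d}^{l}(1-\hat{\zeta}(l,d))>\overline{x}$, and in the $\emptyset$-realisation it is even larger, so by \Cref{lemma:only_observe_viral} the threshold is met with probability $\to1$ in every realisation and the strategy's value $\to1>V^{*}_{pub}$, giving $\liminf_{n}V^{*}_{net}(n)\ge1>V^{*}_{pub}$. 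The cut to $d\ge d_{vote}+1$ with strict inequality absorbs two knife-edges: the perturbation $\delta$ needed to make believers and sceptics \emph{strictly} prefer action $1$, and the fluctuation of the empirical fraction around its limit. The hard part will be the uniformity in the necessary step, bounding $\Pr(x\ge\overline{x})$ simultaneously over all signal structures and all seedings $Z$, which is exactly where the state-independence of seeding and the uniform-in-$Z$ forms of \Cref{lemma:signal_need_to_be_viral} and \Cref{lemma:only_observe_viral} carry the argument.
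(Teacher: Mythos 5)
Your proposal is correct and follows essentially the same route as the paper's proof: the same public benchmark $V_{pub}^{*}=\mu_{s}(\omega=1)+\mu_{s}(\omega=0)\frac{\mu_{l}(\omega=1)}{\mu_{l}(\omega=0)}$ via reduction to a single sceptic, the same central step showing that no sceptic with $d<d_{vote}$ can be persuaded at $\emptyset$ under any $\Pi$ (the paper establishes this by the same likelihood-ratio bound, with explicit $\beta$-error terms and the high-probability event $\mathbf{A}^{n}(\beta,\overline{d})$ supplying exactly the uniformity in $Z$ and $\pi$ that you flag as the hard part), the same decomposition bounding the network payoff by the probability of persuade-all signals, and the same believer-only extreme signal seeded on $\hat{L}_{1}$ for sufficiency. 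The only differences are cosmetic: the paper exploits the tie-breaking assumption at posterior $0.5$ rather than your $\delta$-perturbation, and it truncates the degree sums at a finite $\overline{d}$ before applying the convergence lemmas, which is the technical device your sketch implicitly needs.
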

    The following corollary provides conditions on the network that are more intuitive than those in \Cref{propvotinggame}. It highlights the importance of the giant component in the subnetwork of believers. In particular, it provides a condition on the network such that the sender prefers public signals regardless of the threshold $\overline{x}$
    \begin{cor} \label{cor:votinggame}
        If $\hat{c}=0$ (i.e. the giant component in the subnetwork of believers is vanishing), then the sender prefers public signals. If $\hat{c}>0$, then the sender prefers network signals if $\overline{x}-\gamma_{h}$ is sufficiently small.
    \end{cor}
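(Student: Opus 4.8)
The plan is to derive both statements directly from \Cref{propvotinggame}, reducing everything to the behaviour of the threshold degree $d_{vote}$, which in turn is governed by how large $\hat{\zeta}(l,d)$ can become. Write $A \equiv \mu_{l}(\omega=1)/\mu_{l}(\omega=0)$ and $B \equiv \mu_{h}(\omega=0)/\mu_{h}(\omega=1)$; the prior assumptions $\mu_{l}(\omega=1)<0.5<\mu_{h}(\omega=1)$ give $A\in(0,1)$ and $B\in(0,1)$. A one-line rearrangement shows the defining inequality of $d_{vote}$ is equivalent to $\hat{\zeta}(l,d)\ge \kappa$, where $\kappa \equiv \frac{1-A}{1-AB}$, and since $B<1$ one checks $1-A<1-AB$, so $\kappa\in(0,1)$. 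Because $\hat{\zeta}(l,\cdot)$ is increasing in $d$ by \Cref{lemmaprobongiant}, it follows that $d_{vote}$ is finite if and only if $\sup_{d}\hat{\zeta}(l,d)\ge\kappa$. This reduction is the backbone of both cases.

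For the first statement, suppose $\hat{c}=0$. By \Cref{cor:giant_in_h_subnetwork} the largest component of the believer subnetwork is vanishing, so it contains a vanishing fraction of nodes and edges; hence the limiting probability that a fixed type $l$ node has a neighbour on $\hat{L}_{1}(g^{n})$ is zero, i.e.\ $\hat{\zeta}(l,d)=0<\kappa$ for every $d$. Therefore $d_{vote}=\infty$, and the sum $\sum_{d\ge d_{vote}}\gamma_{l}p_{d}^{l}$ appearing in the necessary condition of \Cref{propvotinggame} is an empty sum equal to $0$. Since $\overline{x}>\gamma_{h}$ forces $\overline{x}-\gamma_{h}>0$, the necessary condition $\sum_{d\ge d_{vote}}\gamma_{l}p_{d}^{l}\ge \overline{x}-\gamma_{h}$ fails, so $\limsup_{n\to\infty}V_{net}^{*}(n)\le V_{pub}^{*}$ and the sender (weakly) prefers public signals.

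For the second statement, suppose $\hat{c}>0$. I would first show $\hat{\zeta}(l,d)\to 1$ as $d\to\infty$. By \Cref{rem:degree_dist} each neighbour of a degree-$d$ type $l$ node is independently of type $h$ with probability $1-q_{l}$, and $\hat{c}>0$ forces believers to be present and connected with $q>0$, so $1-q_{l}>0$. Moreover, since a positive fraction of believers sit on $\hat{L}_{1}(g^{n})$, a type $h$ neighbour lies on $\hat{L}_{1}$ with probability bounded below by some $p_{0}>0$, (asymptotically) independently across neighbours by the local tree structure underlying \Cref{cor:giant_in_h_subnetwork}. Hence the probability that none of the node's type $h$ neighbours lies on $\hat{L}_{1}$ is at most $(1-(1-q_{l})p_{0})^{d}\to 0$, giving $\sup_{d}\hat{\zeta}(l,d)=1>\kappa$ and thus $d_{vote}<\infty$. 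Next, $p_{d}^{l}>0$ for every $d$ because the degree law is a Poisson mixture with full support on $\mathbb{N}_{0}$, while $\hat{\zeta}(l,d)<1$ for every finite $d$ (a node with finitely many links always has positive probability of no neighbour on $\hat{L}_{1}$). Consequently the $d=d_{vote}+1$ term alone makes $\delta \equiv \sum_{d\ge d_{vote}+1}\gamma_{l}p_{d}^{l}(1-\hat{\zeta}(l,d))$ strictly positive. Since $\delta$ does not depend on $\overline{x}$, whenever $\gamma_{h}<\overline{x}$ with $\overline{x}-\gamma_{h}<\delta$ the sufficient condition of \Cref{propvotinggame} holds, yielding $\liminf_{n\to\infty}V_{net}^{*}(n)>V_{pub}^{*}$; i.e.\ the sender prefers network signals once $\overline{x}-\gamma_{h}$ is small enough.

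The main obstacle is the two limiting claims about $\hat{\zeta}(l,d)$: that it vanishes identically when $\hat{c}=0$ and that it tends to $1$ as $d\to\infty$ when $\hat{c}>0$. Both require the explicit fixed-point characterisation of $\hat{\zeta}$ from the appendix together with a giant-component connectivity argument, in particular formalising ``a positive fraction of believer neighbours lies on $\hat{L}_{1}$'' via the forward distribution and the branching/second-moment estimates behind \Cref{cor:giant_in_h_subnetwork}, and justifying the asymptotic independence used in the $(1-(1-q_{l})p_{0})^{d}$ bound. Everything else is an elementary rearrangement of the $d_{vote}$ inequality and a direct appeal to the necessary and sufficient conditions already established in \Cref{propvotinggame}.
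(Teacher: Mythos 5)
Your proposal is correct and follows essentially the same route as the paper: the corollary is obtained directly from \Cref{propvotinggame} by showing that $\hat{c}=0$ forces $\hat{\zeta}(l,d)=0$ for all $d$ (hence $d_{vote}=\infty$ and the necessary condition fails), while $\hat{c}>0$ forces $\hat{\zeta}(l,d)\to 1$ (hence $d_{vote}<\infty$ and the sufficient condition's left-hand side is a positive constant independent of $\overline{x}$). The two limiting facts about $\hat{\zeta}(l,d)$ that you sketch from scratch are exactly what the paper establishes via \Cref{lemma:probability_on_hatL} and the fixed-point characterisation in \Cref{rem:what_is_rho}, so your argument matches the paper's in both structure and substance.
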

As explained in the last section, whether $\hat{c}$ is positive or not intuitively depends on whether the believers are sufficiently connected in the network. 
    
The intuition and driving forces underlying these results are actually similar to those for \Cref{prop:risk_aversion} on a risk-averse sender. Both conditions in \Cref{propvotinggame} require $\overline{x} - \gamma_{h}$ to be small. When this difference is small, the sender is quite risk averse in the sense that her payoff increases sharply by getting a bit more support in addition to that from believers, but does not increase beyond that. The left hand sides for both conditions in \Cref{propvotinggame} capture the proportion of sceptics that can be made to always choose action 1. Therefore, both conditions in \Cref{propvotinggame} say that when the sender is very 'risk-averse', the sender benefits from the ability to always persuade some sceptics using network signals. This is also what drives \Cref{prop:risk_aversion}. Finally, \Cref{prop:risk_aversion} requires $\hat{c}>0$ for a risk-averse sender to prefer network signals, and the corollary above shows that $\hat{c}=0$ is sufficient for network signals to perform worse in voting games. 

When the sender is very risk-averse or when $\overline{x} - \gamma_{h}$ is small, a main factor influencing whether the sender prefers network signals is whether the sender can always persuade all believers and some sceptics. To achieve this, the sender needs to let sceptics observe signals that persuade only the believers. How well the sender can do this in turn depends on whether such signals can get viral among believers (i.e. whether $\hat{c}>0$). This is why $\hat{c}$ is a key measure of the network in this and the last subsection. 

Recall that when the sender is risk-loving or risk neutral, what matters is whether sceptics are more connected than believers in the network. We can see that the sender's preference can have important influence on the aspect of the network that matters for comparison between public signals and network signals.

\section{Extensions}\label{sec:extension}
\paragraph{Alternative sharing rules:}
As discussed above in detail, the intuition behind propositions \ref{baselineprop} and \ref{prop:main_prop} is that believers observe signals more often than sceptics. These results would still go through with any other rule where (conditional on degree) believers share each signal $s$ more often than sceptics. Such sharing rules ensure that, under the conditions stated in the propositions \ref{baselineprop} and \ref{prop:main_prop}, believers observe each signal more often than sceptics. 

I now discuss one sharing rule that does not fall into this class, but the results still go through. Suppose that now receivers share a signal $s$ if their action agrees with the content of the signal. Formally, they share the signal if their action is 1 and the signal is more likely in state 1, or if their action is 0 and the signal is more likely in state 0. All receivers pass on signals that persuade both types or neither type. Only sceptics pass on signals that persuade only the believers, so sceptics can observe these signals more often than believers even if believers are weakly more connected.

However, this cannot make network signals better than public signals if the believers are more connected. To see this, one can go back to the discussion after proposition \ref{prop:main_prop}, explaining why public signals are still preferable when network exhibits homophily. The argument there does not use the fact that signals persuading only believers are observed more often by believers. More formally, \Cref{lemmaboundvhat} in the appendix, which presents the formal version of imitation arguments in the main text, does not require signals persuading only believers to be observed more often by believers. Therefore, propositions \ref{baselineprop} and \ref{prop:main_prop} go through in essentially the same way with this alternative sharing rule. 

The driving force for proposition \ref{prop:sceptics_more_connected} is unchanged by these alternative sharing rules. 
For the case of a risk-averse sender and voting, the exact network measures that matter can be different from the ones in subsections \ref{subsec:risk_aver} and \ref{subsec:vote}. Nonetheless, the high level intuition that the sender cares about always persuading all believers and some sceptics still holds. Therefore, results analogous to those in subsections \ref{subsec:risk_aver} and \ref{subsec:vote} can be obtained. 

\paragraph{Public signals do not reach all receivers:}
In the analysis above, it is assumed that public signals will reach everyone. In practice, public broadcasts may not reach all the audience.  

Suppose that now the realised public signal will be observed by a proportion $y$ of uniformly randomly selected receivers, and the other receivers observe $\emptyset$. Propositions \ref{baselineprop} through \ref{prop:sceptics_more_connected} still hold if $y$ is large relative to size of the giant component.
Detailed proofs follow very similar steps, so will be omitted for the sake of space. I illustrate the main idea by discussing when and why the sketch proof of proposition \ref{baselineprop} still goes through. Recall that the sketch proof works by using the public signals to imitate probabilities that receivers with degree $d$ observe $s$ and $s'$. The public signal should also send $s$ and $s'$ with probabilities smaller than those of sending $s$ and $s'$ in the network signal structure\footnote{This also implies that each of these signals will reach more than $\zeta(d)$ and $\hat{\zeta}(d)$ proportions of receivers with degree $d$}. 
Using network signals, these receivers observe $s$ with probability $\pi(s|\omega)\zeta(d)$ and $s'$ with probability $\pi(s'|\omega)\hat{\zeta}(d)$. In the current setup, we need the public signal structure to send $s$ and $s'$ with probabilities $\pi(s|\omega)\zeta(d)/y$ and $\pi(s'|\omega)\hat{\zeta}(d)/y$ respectively. For the imitation argument to work, we need $\zeta(d)/y$ and $\hat{\zeta}(d)/y$ to be less than 1. 

Note that $\zeta(d)$ and $\hat{\zeta}(d)$ go to 1 as $d$ goes to infinity, so the imitation argument does not work for high degrees. However, the proportion of nodes with high degrees is small, so their actions do not matter much for the sender's payoff. 

The argument above shows why propositions \ref{baselineprop} and \ref{prop:main_prop} still hold for a risk-loving or risk-neutral sender. The intuition for why the sender can prefer network signals when she is risk averse (and $y$ is large enough) or when sceptics are more connected are essentially the same as in the main text. 


\paragraph{Non-vanishing fraction of seeds:}
The assumption of seeds being a vanishing fraction of population can be relaxed as long as the seeds only take a fraction of population that is small enough. An upper bound on sender's payoff can be obtained by assuming that all the seeds and others on the same component with them, excluding those on the giant component, always choose action 1 and remove them from the network. We can then analyse the remaining network and the results above will go through. This implies that results would go through as long as the fraction of seeds is small enough even though it is non-vanishing. 

On the other hand, if the fraction of seeds is larger than $\gamma_{l}$, i.e. the fraction of sceptics. Then the sender could always just directly send signals that only realise in state 0 to all the sceptics, to persuade them to take action 1 upon observing $\emptyset$. Believers will receive no information. Sender achieves her highest possible payoff (when $v$ is linear or convex) by doing this. 

For intermediate sizes of seed set, we need to consider components other than the giant component. We need to know the probabilities that a node with type $t$ and degree $d$ is on components of all possible sizes and compositions. As $n$ grows large, the number of probabilities that we need to keep track of for each $t$ and $d$ also goes to infinity. This makes analysis a lot more complicated. It can be shown that, if the sender chooses seeds uniformly randomly, then proposition \ref{baselineprop} still goes through. In general, it is difficult to get tractable results.

\paragraph{Sender's optimal strategy:}
\citet{innocenti2022can} characterises optimal public signals with heterogeneous receivers, where the sender's payoff function is linear. It can be shown that the characterisation still holds for payoff functions used in this paper. 

Optimal network signal strategies are very complex mainly for two reasons. First, even though the probabilities that receivers are on giant component (i.e. functions $\zeta$ and $\hat{\zeta}$) can be characterised, there exist no closed-form solutions for them. Secondly, receivers' optimal actions and thus sender's payoff are discontinuous at the point where receivers have a posterior of 0.5. Therefore, when receivers' posterior converge to 0.5 as $n$ gets large, their actions and the sender's payoff may not converge.

In the online appendix, I show that under certain parameter conditions, the problem of finding the sender's limiting optimal payoff can be reduced to a problem where the sender is choosing probabilities of two non-empty signals. However, outside these parameters, the sender's problem can be much more complex. Lemma \ref{lemma:signal_need_to_be_viral} says that seeds not on the relevant giant component will have little influence. It is also confirmed by the results in section \ref{sec:results} that they do not matter for comparison with public signals. Nevertheless, the online appendix illustrates that these seeds with little influence could be very important for the sender's optimal payoff using network signals. This is due to the second point in the last paragraph.

\section{Conclusion}\label{sec:conclusion}
There has been a lot of discussion on exploiting social networks to spread information and persuade people in political and advertising campaigns. This paper studies whether and when network signals outperform public signals. This paper finds that a risk neutral or risk-loving sender prefers public signals if believers are more connected than sceptics. This holds even if the network exhibits homophily. The sender prefers network signals if sceptics are sufficiently more connected than believers. When believers are sufficiently connected, a very risk averse sender prefers network signals. If the sender's payoff function is a threshold function, connectedness of believers is also important for the comparison. 

The main intuition underlying results in this paper is that, the sender needs the signal to get viral in order to have an impact. However, it is difficult to control who can observe viral signals using seeding. The results specify how different networks spread viral signals, and whether the sender can exploit it to outperform public signals. 

In many situations, the campaigner has access to tools that can send public signals. The results in this paper imply that, in many cases, resources may be more efficiently spent on sending public signals, rather than on trying to understand network structures and picking the optimal seeds. Conditions under which network signals perform better are also specified, and there is a stronger case to collect more information about the network under these conditions. 
In addition, persuasion on large networks is complicated, while public signals have been well studied. From these comparisons, we can use sender's value from public signals to obtain what sender could get at most or at least using network signals. Even if the sender has no access to public signals, the results could be useful to judge whether a campaign using network signals will be profitable for the sender.

\appendix
\numberwithin{equation}{section}

\section{Appendix: Proofs}\label{sec:proofs}
I use $PO(d,x)$ to denote the probability that a Poisson random variable with intensity $x$ is equal to $d$. I define $D(t,\lambda) \equiv \lambda (\gamma_{t}E_{t}(\lambda) +q \gamma_{t'} E_{t'}(\lambda)) $ where $t' \ne t$. This captures the expected degree of node with type $t$ and connectedness $\lambda$. The next subsection proves the main results, and the subsection after that proves the technical lemmas in \Cref{sec:analysis}.
    \subsection{Proof of main results}
	Using the lemmas \ref{lemma:signal_need_to_be_viral}, \ref{lemma:only_observe_viral}  and \ref{lemmaprobongiant}, I can now approximate probabilities that receivers observe an non-empty signal $s \ne \emptyset$. 
	Let $C(.)$ and $\hat{C}(.)$ be functions over the set of seeds and networks defined in the following way
	\[
	C(z,g^{n}) = \mathbf{1}(z \cup L_{1}(g^{n}) \ne \emptyset  ) 
	 \qquad
	\hat{C}(z,g^{n}) = \mathbf{1}(z \cup \hat{L}_{1}(g^{n}) \ne \emptyset) 
	\]
    In words, $C(z,g^{n})$ is 1 if seeds are on giant component, and $\hat{C}(z,g^{n})$ is 1 if seeds are on giant component after deleting type $l$ nodes. 
	Define $\mathbf{A}^{n}(\epsilon, \overline{d})$ to be the set of networks with $n$ receivers that satisfy the following properties: Given any sender's strategy $\Pi^{n}$ in the network signal game with $n$ receivers, we have
	\begin{enumerate}
		\item Let $s \ne \emptyset$ be a signal that persuades both types of receivers according to $\Pi^{n}$. Conditional on $g^{n} \in \mathbf{A}^{n}(\epsilon,\overline{d})$, signal $s$ and seeds $z$ being realised, the proportion of receivers with type $t$ and degree $d \le \overline{d}$ who observe $s$ is at most $\epsilon$ from $C(z,g^{n})\zeta(t,d)$.
		\item Let $s'$ be a signal that persuades only type $h$ receivers according to $\Pi^{n}$. Conditional on $g^{n} \in \mathbf{A}^{n}(\epsilon,\overline{d})$, signal $s'$ and seeds $z$ being realised, the proportion of receivers with type $t$ and degree $d \le \overline{d}$ who observe $s'$ is at most $\epsilon$ from $\hat{C}(z,g^{n})\hat{\zeta}(t,d)$.
		\item If $s''$ is a signal that persuades no receiver according to $\Pi^{n}$, the proportion of receivers who observe $s''$ is at most $\epsilon$.
	\end{enumerate}
	\begin{lemma} \label{lemma:observesignal}
		Given any $\overline{d} \in \mathbb{N}$ and $\epsilon > 0$, there exists an $n_{1}$ large enough such that for any $n \ge n_{1}$, $Pr(g^{n} \in \mathbf{A}_{n}(\epsilon,\overline{d} )   ) > 1-\epsilon$
	\end{lemma}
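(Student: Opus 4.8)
The plan is to prove \Cref{lemma:observesignal} by establishing that each of the three defining properties of $\mathbf{A}^{n}(\epsilon,\overline{d})$ holds with probability at least $1-\epsilon/3$ for $n$ large, then intersecting. The crucial conceptual point, supplied by \Cref{lemma:non_empty_posterior}, is that whether a receiver observes a given non-empty signal $s$ is determined purely by the realised network $g^{n}$, the seeds, and the sharing behaviour dictated by which types $s$ persuades when sent publicly; it does not depend on the state. So the whole argument is about random-graph percolation, not about beliefs.

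First I would handle property 3 (signals persuading nobody). If $s''$ persuades neither type, then no receiver who observes $s''$ passes it on, so only seeds observe $s''$. Since the seed set has size at most $n^{\alpha}$ with $\alpha<1$, the proportion observing $s''$ is at most $n^{\alpha-1}\to 0$ deterministically, which is below $\epsilon$ for large $n$; this property in fact holds with probability $1$. Next, for properties 1 and 2, the key observation is the case split conditioned on $C(z,g^{n})$ and $\hat{C}(z,g^{n})$. When $C(z,g^{n})=0$ (no seed on the relevant giant component), \Cref{lemma:signal_need_to_be_viral} says the proportion observing $s$ exceeds $\epsilon$ with probability at most $\epsilon$, matching the target value $C(z,g^{n})\zeta(t,d)=0$. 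When $C(z,g^{n})=1$, I want to show that the proportion of type-$t$, degree-$d$ receivers observing $s$ concentrates near $\zeta(t,d)$. The upper direction uses \Cref{lemma:only_observe_viral}: essentially all observers lie on $L_{1}(g^{n})$. The lower direction uses the fact that, once a seed is on the giant component, \emph{every} node on $L_{1}(g^{n})$ observes $s$ (for signals persuading both types, by \Cref{lemma:non_empty_posterior} everyone shares). Then \Cref{lemmaprobongiant} gives that the proportion of type-$t$, degree-$d$ nodes lying on $L_{1}(g^{n})$ converges in probability to $\zeta(t,d)$. Combining these two bounds pins the observed proportion to within $\epsilon$ of $\zeta(t,d)$. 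The argument for property 2 is identical with $L_{1}$ replaced by $\hat{L}_{1}$, $\zeta$ by $\hat{\zeta}$, and \Cref{cor:giant_in_h_subnetwork} in place of \Cref{sadlermain}: signals persuading only believers propagate through the believer subnetwork, so observers are exactly believers on $\hat{L}_{1}(g^{n})$ together with their sceptic neighbours, whence $\hat{\zeta}(t,d)$ counts nodes having a neighbour on $\hat{L}_{1}$.

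Assembling the pieces, I would fix $\overline{d}$ and note that there are only finitely many type-degree pairs $(t,d)$ with $d\le\overline{d}$. For each such pair, the convergence-in-probability statements from \Cref{lemmaprobongiant}, \Cref{lemma:signal_need_to_be_viral} and \Cref{lemma:only_observe_viral} each fail with probability tending to $0$; choosing $n$ large makes every one of these finitely many bad events have probability below a common small threshold. A union bound over the finitely many pairs and the (at most three) signal categories, with the threshold set to $\epsilon/(3\,|\{(t,d):d\le\overline d\}|)$ or similar, yields $Pr(g^{n}\in\mathbf{A}^{n}(\epsilon,\overline{d}))>1-\epsilon$ for all $n\ge n_{1}$. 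I would take $n_{1}$ to be the maximum of the finitely many thresholds.

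The main obstacle is that the defining properties of $\mathbf{A}^{n}(\epsilon,\overline{d})$ must hold \emph{uniformly over all sender strategies} $\Pi^{n}$ and all realised seed sets $z$, whereas the underlying random-graph lemmas (\Cref{sadlermain}, \Cref{lemmaprobongiant}) are statements about the network alone. The resolution, which I would make explicit, is that uniformity is automatic here: the event ``$g^{n}\in\mathbf{A}^{n}(\epsilon,\overline d)$'' is a property of the network geometry (sizes of components, the fraction of type-$t$, degree-$d$ nodes on the giant component, and so on), and conditional on that geometry the observed-proportion bounds hold for every strategy and every $z$ simultaneously, precisely because \Cref{lemma:non_empty_posterior} strips out any state- or belief-dependence from who observes a signal. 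The remaining care is purely bookkeeping: confirming that conditioning on the seed realisation $z$ and signal $s$ does not disturb the percolation estimates, since the network is drawn independently of the seeding rule's output distribution in the sense relevant to these counting arguments.
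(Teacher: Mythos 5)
Your proposal is correct and takes essentially the same route as the paper's own (very terse) proof: split on whether the seeds hit the relevant giant component, invoke \Cref{lemma:signal_need_to_be_viral} together with the definitions of $C$ and $\hat{C}$ when they do not, and combine \Cref{lemma:only_observe_viral} (observers off the giant component are negligible) with \Cref{lemmaprobongiant} (the proportion of type-$t$, degree-$d$ nodes on the giant component concentrates at $\zeta(t,d)$, resp.\ $\hat{\zeta}(t,d)$) for the two-sided bound when they do, finishing with a union bound over the finitely many $(t,d)$ pairs with $d \le \overline{d}$. The extra details you make explicit --- the deterministic $n^{\alpha-1}$ bound for signals persuading nobody, the lower bound via every giant-component node observing a both-types signal, and the remark that uniformity over strategies is automatic because observation is a state-free property of the network geometry --- are all implicit in the paper's two-sentence argument.
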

	Intuitively, this lemma states that as $n$ gets large, the probability that a receiver observes a certain signal is given by the probability that it hits and the receiver is on the relevant giant component. 
	\begin{proof}
		By lemmas \ref{lemma:only_observe_viral} and \ref{lemmaprobongiant}, for any $\epsilon>0$, we can find $n$ large enough such that with arbitrarily high probability, the proportion of receivers with type $t$ and degree $d$ who observe $s$ conditional on seeds being on relevant giant component is within
		$
		[ \zeta(t,d)+2\epsilon, \zeta(t,d)- 2\epsilon  ]
		$
		if $s$ persuades both types of receivers, and is within
		$
		[ \hat{\zeta}(t,d)+2\epsilon, \hat{\zeta}(t,d)- 2\epsilon   ]
		$
		if $s$ persuades only type $h$ receivers. 
		Result then follows from definition of functions $C(.)$ and $\hat{C}(.)$, and lemma \ref{lemma:signal_need_to_be_viral}.
	\end{proof}
    Lemma \ref{lemma:probability_on_giant_baseline} and proposition \ref{baselineprop} are special cases of lemma \ref{lemma:probability_on_giant_homophily} and proposition \ref{prop:main_prop}, so I only prove the latter two. 

    \textit{Proof of lemma} \ref{lemma:probability_on_giant_homophily}: \\
The first part of the result directly follows from the proof of lemma \ref{lemma:probability_on_hatL}. 

For the second part, by definition, we have that $\rho(\infty|t,\lambda) = \sum_{d} Poisson(d;D(t,\lambda)) \rho(\infty|t,d) $
From remark \ref{rem:what_is_rho}, we have
        \begin{equation} \label{eq1_lemma_prob_on_giant_homo}
            1- \rho(\infty|t,d) = (q_{t} \frac{\sum_{\lambda'}\lambda' f_{t}(\lambda')(1-\rho(\infty|t,\lambda')) }{\sum_{\lambda'}\lambda' f_{t}(\lambda') }  + (1-q_{t})\frac{\sum_{\lambda'}\lambda' f_{t'}(\lambda')(1-\rho(\infty|t',\lambda')) }{\sum_{\lambda'}\lambda' f_{t'}(\lambda') }  )^{d}
        \end{equation}
         
        This directly implies the result when $q_{h} = 1-q_{l}$. I now discuss the case with $q_{h} \ne 1-q_{l}$. The forward distribution for type $t$ is given by
        \[
        \frac{  (d+1)\sum_{\lambda}f_{t}(\lambda)Poisson(d+1; \lambda (\gamma_{t}E_{t}(\lambda) + q \gamma_{t'}E_{t'}(\lambda)   ) )   }{ \sum_{\lambda}  \lambda f_{t}(\lambda)(\gamma_{t}E_{t}(\lambda) + q \gamma_{t'}E_{t'}(\lambda)   )  }
        \]
        which by the probability mass function of Poisson distribution is equal to
\[
        \frac{  \sum_{\lambda}\lambda f_{t}(\lambda)Poisson(d; \lambda (\gamma_{t}E_{t}(\lambda) + q \gamma_{t'}E_{t'}(\lambda)   ) )   }{ \sum_{\lambda}  \lambda f_{t}(\lambda)  } = \frac{  \sum_{\lambda}\lambda f_{t}(\lambda)Poisson(d; D(t,\lambda) )   }{ \sum_{\lambda}  \lambda f_{t}(\lambda)  }
        \]
        We can substitute $\rho(\infty|t,\lambda) = \sum_{d} Poisson(d;D(t,\lambda)) \rho(\infty|t,d) $ into RHS of \Cref{eq1_lemma_prob_on_giant_homo}, which along with the expression for forward distribution gives
        \[
        1- \rho(\infty|t,d) = (q_{t} ( 1- \sum_{d} \tilde{p}_{d}^{t}\rho(\infty|t,d)  ) + (1-q_{t}) (1- \sum_{d} \tilde{p}_{d}^{t'}\rho(\infty|t',d) ) )^{d}
        \]
        Suppose that $1-\rho(\infty|h,d) > 1- \rho(\infty|l,d)$ for some $d$, then from the expression above, it must also be the case for all $d$.
        In particular, it must be true for $d=1$. Also, let $ \frac{1-\rho(\infty|h,1)}{1- \rho(\infty|l,1)} = \alpha  $. Notice that $1-\rho(\infty|t,d) = (1-\rho(\infty|t,1))^{d} $, and we have $ \frac{1-\rho(\infty|h,d)}{1- \rho(\infty|l,d)} = \alpha^{d}  $. Therefore, we can write
\[
        \alpha (1- \rho(\infty|l,1) )= q_{h} (  \sum_{d} \tilde{p}_{d}^{h}\alpha^{d}(1-\rho(\infty|l,1))^{d}  ) + (1-q_{h})  \sum_{d} \tilde{p}_{d}^{l} (1- \rho(\infty|l,1))^{d}  
        \]
        \[
        1- \rho(\infty|l,1) = q_{l} (  \sum_{d} \tilde{p}_{d}^{l}(1- \rho(\infty|l,1) )^{d} ) + (1-q_{l}) (\sum_{d} \tilde{p}_{d}^{h}\alpha^{d} (1- \rho(\infty|l,1))^{d} ) 
        \]
        I now show that there can exist no $\rho(\infty|l,1)$ and $\alpha > 1$ where the two equalities hold at the same time. I separate this into two case. If $q_{h} < 1-q_{l}$, then $\alpha>1$ implies that RHS of the first equation is smaller than that of the second, which contradicts with $\alpha>1$.  

        If $q_{h} > 1-q_{l}$,
        for $\alpha$ to be greater than 1, we must have $\sum_{d} \tilde{p}_{d}^{h}\alpha^{d}(1-\rho(\infty|l,1))^{d} > \sum_{d} \tilde{p}_{d}^{l}(1- \rho(\infty|l,1) )^{d} $. Therefore, we must choose some $1 - \rho(\infty|l,1) > \sum_{d} \tilde{p}_{d}^{l}(1- \rho(\infty|l,1) )^{d}$, otherwise the second equation can't hold. 

Given such $1 - \rho(\infty|l,1)$, if $\alpha=1$, the LHS of both equations must be larger than the RHS. The largest value of $\alpha$ is when $\alpha (1- \rho(\infty|l,1))=1$. WHen $\alpha$ takes this value, the LHS of the first equation is 1 and RHS is less than 1. Also, notice that the LHS of the first equation is linear in $\alpha$ while the RHS is convex. These facts imply that LHS of the first equation must always be larger than the RHS for all $\alpha \ge 1$. This concludes the proof for \Cref{lemma:probability_on_giant_homophily}. The analogous argument shows that $\alpha$ cannot be smaller than 1 when degree distribution and thus forward distributions are the same for both types. This shows \Cref{lemma:probability_on_giant_baseline}.
	\qed

For any network signal strategy $\Pi^{n}=(\pi,Z)$ in the game with $n$ receivers, let $a^{*}(s,t,d;\Pi^{n})$ denote optimal action of receivers with type $t$ and degree $d$ when they observe $s$, for $s \in S \cup \emptyset$. Let $N_{t,d,s}(g^{n},z)$ denote the number of receivers of type $t$ and degree $d$ who observe signal $s $ given network $g^{n}$ and seeds $z$. 
	Define function $\hat{V}(d,\Pi^{n})$ to be
	\begin{multline} \label{hatvdefinition}
		\hat{V}(d,\Pi^{n}) \equiv \sum_{\omega} \mu_{s}(\omega) \sum_{s \in S \cup \{\emptyset\} }\pi(s|\omega)  \bigg[\sum_{g^{n}}Pr(g^{n}) \\
		v\big(\sum_{t}\gamma_{t}\frac{N_{t,d,s}(g^{n},z)}{N_{t,d}}a^{*}(s,t,d;\Pi^{n}) + \sum_{t}\gamma_{t}(1-\frac{N_{t,d,s}(g^{n},z)}{N_{t,d}})a^{*}(\emptyset,t,d;\Pi^{n})\big)    
		\bigg] 
	\end{multline}
	Using this function, the next lemma formally presents (part of) the imitation argument.
	\begin{lemma} \label{lemmaboundvhat}
		Suppose that $\zeta(d,h) \ge \zeta(d,l)$. If $v$ is convex or linear, for any $\epsilon >0$ and $d$, there exists some $n_{1}$ large enough such that for all $n \ge n_{1}$, we have $\hat{V}(d,\Pi^{n})  \le V_{pub}^{*} + \epsilon$ for all possible network signal strategy $\Pi^{n}$.   
	\end{lemma}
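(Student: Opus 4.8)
The plan is to bound $\hat V(d,\Pi^n)$ by passing to its deterministic large-$n$ limit, using convexity of $v$ to rewrite that limit as a convex combination of the three payoff levels $v(1)$, $v(\gamma_h)$ and $v(Y_\emptyset)$ attainable by public signals, and then exhibiting a public signal strategy whose payoff is at least this combination; since every public strategy earns at most $V^*_{pub}$, the claim follows.

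First I would fix $\Pi^n=(\pi,Z)$ and $d$ and classify, via \Cref{lemma:non_empty_posterior}, the non-empty signals by the exact posterior they induce into those persuading both types, those persuading only believers, and those persuading nobody; since $\mu_h(\omega=1)>\mu_l(\omega=1)$ there is no group persuading sceptics alone, and the classification does not depend on $d$. Applying \Cref{lemma:observesignal} with a small tolerance $\epsilon'$ and a cutoff $\bar d\ge d$, for $n$ large the realized network lies in $\mathbf A^n(\epsilon',\bar d)$ with probability at least $1-\epsilon'$, and on that event the frequencies $N_{t,d,s}/N_{t,d}$ are within $\epsilon'$ of $C(z,g^n)\zeta(t,d)$, of $\hat C(z,g^n)\hat\zeta(t,d)$, or of $0$ according to the group of $s$ (signals persuading nobody reach only the $o(n)$ seeds). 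Using boundedness of $v$ on the complementary event and uniform continuity of $v$ on $[0,1]$ on the good event, I would replace $\hat V(d,\Pi^n)$ by a limiting expression in which each realized signal contributes $v$ at one of the aggregate actions $Y_s$, $Y_{s'}$ or $Y_\emptyset=\sum_t\gamma_t a^*(\emptyset;t,d)$, weighted by the state-independent viral probabilities.

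The next step is convexity. When a both-type signal goes viral the aggregate action equals $Y_\emptyset+\sum_t\gamma_t\zeta(t,d)(1-a^*(\emptyset;t,d))$, a convex combination of $1$ and $Y_\emptyset$; when a believer-only signal goes viral the aggregate action is a convex combination of $\gamma_h$ and $Y_\emptyset$. Convexity of $v$ therefore bounds the limiting expression by
\[
\sum_\omega\mu_s(\omega)\big[\Theta(\omega)v(1)+\hat\Theta(\omega)v(\gamma_h)+(1-\Theta(\omega)-\hat\Theta(\omega))v(Y_\emptyset)\big],
\]
where $Y_\emptyset\in\{0,\gamma_h,1\}$ records which types are persuaded upon $\emptyset$, and $\Theta,\hat\Theta\ge 0$ with $\Theta+\hat\Theta\le 1$. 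I would then construct a public signal structure sending a both-persuading signal, a believer-persuading signal, and a third signal playing the role of $\emptyset$; since the viral probabilities are state-independent, the lumped likelihood ratios stay in the correct ranges, so the first two signals persuade exactly the intended types.

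The main obstacle is calibrating this third signal so that it reproduces the correct action upon $\emptyset$: the network's conditional probability of observing $\emptyset$ is type-dependent through $\zeta$ and $\hat\zeta$, so the two types' $\emptyset$-posteriors cannot both be matched, and only the induced actions must agree. I would resolve this by a case split on the sceptic's action $a^*(\emptyset;l,d)$ and by reproducing exactly the $\emptyset$-observation probability of a single type. If $a^*(\emptyset;l,d)=0$ I reproduce the believers' frequencies: the believer's $\emptyset$-action is reproduced exactly, while the sceptic, who observes both-type signals with the weakly larger frequency $\zeta(h,d)\ge\zeta(l,d)$, contributes weakly more action-$1$ mass, so the public payoff weakly dominates the bound. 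If $a^*(\emptyset;l,d)=1$ then the believer is persuaded upon $\emptyset$ as well, by monotonicity of the posterior in the prior, and reproducing the sceptics' frequencies matches the bound exactly. In either case only the assumption $\zeta(h,d)\ge\zeta(l,d)$ is used, consistent with the lemma's hypothesis, and no ordering of $\hat\zeta$ is needed. Finally I would take the likelihood ratios strictly away from the value inducing posterior $0.5$ and absorb the perturbation using the continuity of $v$, which also handles the discontinuity of the optimal action at posterior $0.5$; this yields $\hat V(d,\Pi^n)\le V^*_{pub}+\epsilon$ for $n$ large.
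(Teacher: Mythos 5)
Your proposal is correct and follows essentially the same route as the paper's proof: classify signals into both-persuading, believer-only and bad groups, control observation frequencies via Lemma \ref{lemma:observesignal}, use monotonicity plus convexity of $v$ to reduce the bound to a combination of $v(1)$, $v(\gamma_h)$ and the $\emptyset$-payoff, then split on the sceptic's $\emptyset$-action and imitate the believers' (resp.\ sceptics') observation probabilities with a lumped public signal structure, using $\zeta(h,d)\ge\zeta(l,d)$ for the domination direction and the prior-monotonicity of posteriors in the second case. The only cosmetic difference is your final perturbation of likelihood ratios away from the posterior-$0.5$ knife edge, which the paper avoids because its public strategy matches the relevant type's $\emptyset$-observation probabilities exactly, so actions coincide without approximation.
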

	\textit{Proof.}\\   
			For any $ \beta'>0$, we can find a $\beta>0$ such that $sup( v(x+2\beta)-v(x)) < \beta'$. For this $\beta$, by lemma \ref{lemma:observesignal} we can find some $n_{1}$ such that for all $n>n_{1}$, we have $Pr(g^{n} \in \mathbf{A}_{n}(\beta,\overline{d} )  ) > 1-\beta$. Let $S_{good}$ denote set of signals in $S$ that persuade both types of receivers, $S_{int}$ denote set of signals in $S$ that persuade only type $h$ receivers and $S_{bad}$ denote those that persuade neither group. 
			 For $j \in \{good,int,bad\}$, let \[\hat{\pi}_{j} (t,\omega,\Pi^{n}) \equiv  \sum_{s \in S_{j}} \pi(s|\omega)\sum_{g^{n}}Pr(g^{n})\frac{N_{t,d,s}(g^{n},z)}{N_{t,d}}\]
			
			We consider 2 cases. First, suppose that $a^{*}(\emptyset,l,d;\Pi^{n})=0$. We know that $a^{*}(s,l,d;\Pi^{n}) \ge a^{*}(\emptyset,l,d;\Pi^{n})$ for all $s \in S$. By definition of $\mathbf{A}_{n}(\beta,\overline{d})$, we know that $\frac{N_{l,d,s}(g^{n},z)}{N_{l,d}} - \frac{N_{h,d,s}(g^{n},z)}{N_{h,d}}$ is smaller than $2\beta$ if $g^{n} \in \mathbf{A}_{n}(\beta,\overline{d})$ and $s \in S_{good}$. Using these facts, how $n_{1}$, $\beta$ are chosen and convexity of $v$, $\hat{V}(d,\Pi^{n})$ is smaller than 
		\begin{multline} \label{lemma13eq3}
			\sum_{\omega} \mu_{s}(\omega)\bigg[
			\bigg( 1-\sum_{j \in \{good,int,bad\}} \hat{\pi}_{j} (h,\omega,\Pi^{n})                          \bigg)v\big(\gamma_{h}a^{*}(\emptyset,t,d;\Pi^{n})\big) \\
			+\hat{\pi}_{good} (h,\omega,\Pi^{n})v(1)      
			+\hat{\pi}_{int} (h,\omega,\Pi^{n})v(\gamma_{h})            
			\bigg] + v(1) \beta+\beta'
		\end{multline}
		
		Consider sender's public signal strategy $\pi_{2}$ described as follows. 
		For $j \in \{good,int,bad\}$
		\[
		\pi_{2}(s_{j}|\omega) = \hat{\pi}_{j} (h,\omega,\Pi^{n}) \qquad \pi_{2}(\emptyset|\omega) = 1 -\sum_{j \in \{good,int,bad\}} \hat{\pi}_{j} (h,\omega,\Pi^{n})  
		\]
		By lemma \ref{lemma:non_empty_posterior} and construction of $\pi_{2}$, all receivers choose action 1 upon observing signal $s_{good}$ with the public signal structure $\pi_{2}$. Similarly, type $h$ receivers choose action 1 upon observing $s_{int}$ and no receiver choose action 1 upon observing $s_{bad}$. 
		
		Also, receivers observe $\emptyset$ in state $\omega$ with the same probability as receivers with type $h$ and degree $d$ under network signal structure $\Pi^{n}$. Therefore, type $h$ receivers choose $\alpha(\emptyset,h,d;\Pi^{n})$ upon observing $\emptyset$. Therefore, sender's payoff using $\pi_{2}$ (as $n$ goes to infinity) is expression \ref{lemma13eq3} minus $\beta v(1) + \beta'$. Since sender's limiting payoff from $\pi_{2}$ is smaller than $V_{pub}^{*}$ and $\beta,\beta'$ can be arbitrarily small, the result follows for this case.
		
		In second case, suppose $\alpha(\emptyset,l,d;\Pi^{n})=1$. We know that $\alpha(s,h,d;\Pi^{n}) \le 1$ for all $s$. Using this fact and convexity of $v$, expression $\hat{V}(d,\Pi^{n})$ is smaller than 
		\begin{multline} \label{lemmaeq5}	
		 \beta v(1)+\beta' + \sum_{\omega} \mu_{s}(\omega)\bigg[
			\bigg( 1-\sum_{j \in \{good,int,bad\}} \hat{\pi}_{j} (l,\omega,\Pi^{n})                             \bigg)v\big(1\big) \\
			+\hat{\pi}_{good} (l,\omega,\Pi^{n})v(1)      
			+\hat{\pi}_{int} (l,\omega,\Pi^{n})v(\gamma_{h})           
			\bigg]  \equiv \hat{V}^{l}(d,\Pi^{n}) + \beta v(1)+\beta'
		\end{multline}
		Consider sender's public signal strategy $\pi_{3}$ described as follows.
		For $j \in \{good,int,bad\}$
		\[
		\pi_{3}(s_{j}|\omega) = \hat{\pi}_{j} (l,\omega,\Pi^{n})  \qquad \pi_{3}(\emptyset|\omega) = 1 -\sum_{j \in \{good,int,bad\}} \hat{\pi}_{j} (h,\omega,\Pi^{n})  
		\]
		By essentially the same argument as in the previous case, all receivers choose action 1 upon observing signal $s_{good}$ with the public signal structure $\pi_{3}$. Similarly, type $h$ receivers choose action 1 upon observing $s_{int}$ and no receiver choose action 1 upon observing $s_{bad}$. 
		
		Also, receivers observe $\emptyset$ in state $\omega$ with the same probability that a receiver with type $l$ and degree $d$ observes $\emptyset$ under network signal structure $\Pi^{n}$. Therefore, type $l$ receivers choose action 1 upon observing $\emptyset$, and type $h$ receivers must also choose action 1 upon observing $\emptyset$. Sender's payoff from using public signals $\pi_{3}$ (as $n$ goes to infinity) is $ \hat{V}^{l}(d,\Pi^{n}) $, which is expression \ref{lemmaeq5} minus $\beta v(1)+\beta'$. Since sender's limiting payoff from $\pi_{3}$ is smaller than $V_{pub}^{*}$, and $\beta,\beta'$ can be arbitrarily small, the result follows.
		
		The proof above immediately implies the following result.
		\begin{cor} \label{cor:vhatl}
			If $\alpha(\emptyset,l,d;\Pi^{n})=1$, then $V_{pub}^{*} \ge \hat{V}^{l}(d,\Pi^{n})$, where $\hat{V}^{l}(d,\Pi^{n})$ is defined in expression \ref{lemmaeq5}.
		\end{cor}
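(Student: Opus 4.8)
The plan is to read the corollary off directly from Case 2 of the proof of Lemma \ref{lemmaboundvhat}, where the hypothesis $\alpha(\emptyset,l,d;\Pi^{n})=1$ is precisely the standing assumption. In that case we have already exhibited a particular public signal structure $\pi_{3}$, and the content of the corollary is simply that the sender's limiting payoff from $\pi_{3}$ equals $\hat{V}^{l}(d,\Pi^{n})$ as defined in expression \ref{lemmaeq5}. Since $\pi_{3}$ is just one feasible public signal strategy and $V_{pub}^{*}$ is the supremum over all such strategies, the inequality $V_{pub}^{*} \ge \hat{V}^{l}(d,\Pi^{n})$ is then immediate.

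First I would confirm that $\pi_{3}$ is a legitimate public signal structure: its signal probabilities $\hat{\pi}_{j}(l,\omega,\Pi^{n})$ for $j \in \{good,int,bad\}$ are nonnegative (each being a weighted average of ratios $N_{l,d,s}(g^{n},z)/N_{l,d} \in [0,1]$), and together with the residual $\emptyset$-probability they sum to one in each state $\omega$. Second, I would invoke the equilibrium behaviour already established in the proof. By \Cref{lemma:non_empty_posterior}, the posterior induced by any non-empty signal under $\pi_{3}$ coincides with its public-signal posterior, so every receiver plays action $1$ on $s_{good}$, only believers play $1$ on $s_{int}$, and nobody plays $1$ on $s_{bad}$. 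Because $\pi_{3}$ makes a receiver observe $\emptyset$ with precisely the probability that a type $l$ receiver of degree $d$ observes $\emptyset$ under $\Pi^{n}$, the type $l$ posterior on $\emptyset$ matches that of $\alpha(\emptyset,l,d;\Pi^{n})=1$, so sceptics play action $1$ on $\emptyset$; believers, having a higher prior while facing the same conditional distribution of signals, must then also play $1$ on $\emptyset$.

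Third, I would tabulate the induced fractions of action-$1$ receivers --- namely $1$ on $\{s_{good},\emptyset\}$, $\gamma_{h}$ on $s_{int}$, and $0$ on $s_{bad}$ --- and substitute them into the sender's objective. Collecting the state-$\omega$ terms reproduces exactly expression \ref{lemmaeq5}, so the limiting payoff of $\pi_{3}$ is $\hat{V}^{l}(d,\Pi^{n})$, and optimality of $V_{pub}^{*}$ closes the argument.

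There is no substantive obstacle here, since the bulk of the work --- verifying the equilibrium responses and the feasibility of $\pi_{3}$ --- has already been carried out in Case 2 of the lemma. The one point requiring care is the bookkeeping: one must match each term of $\hat{V}^{l}(d,\Pi^{n})$, which is written in terms of the type $l$ weights $\hat{\pi}_{j}(l,\omega,\Pi^{n})$, to the corresponding signal realisation of $\pi_{3}$, so that one obtains the payoff \emph{identity} and not merely the inequality $\hat{V}(d,\Pi^{n}) \le \hat{V}^{l}(d,\Pi^{n}) + \beta v(1) + \beta'$ used inside the lemma.
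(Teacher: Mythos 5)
Your proposal is correct and takes essentially the same route as the paper: the paper states the corollary as an immediate consequence of Case 2 of the proof of Lemma \ref{lemmaboundvhat}, exactly as you do, since the public strategy $\pi_{3}$ constructed there (with all receivers playing action 1 on $s_{good}$ and $\emptyset$, believers alone on $s_{int}$, and nobody on $s_{bad}$) attains limiting payoff exactly $\hat{V}^{l}(d,\Pi^{n})$, and $V_{pub}^{*}$ dominates the payoff of any feasible public strategy. Your added care about feasibility of $\pi_{3}$ and about obtaining the payoff identity (rather than the $\beta$-approximate bound used inside the lemma) is exactly the right bookkeeping, and it also silently fixes the paper's typo in which $\pi_{3}(\emptyset|\omega)$ is written with the type-$h$ weights $\hat{\pi}_{j}(h,\omega,\Pi^{n})$ instead of the type-$l$ ones.
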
  \qed
    
    \textit{Proof of \Cref{prop:main_prop}:}\\
    For any $\beta'>0$, we can find a $\beta>0$ such that $sup( v(x+8\beta)-v(x)) < \beta'$. For this $\beta$, we can find some $\overline{d}$ such that $\sum_{d>\overline{d}} (p_{d}^{h}+p_{d}^{l}	) <\beta $. Fix this $\overline{d}$. By \Cref{lemma:observesignal}, \Cref{lemmaboundvhat} and \Cref{rem:degree_dist}, we can find some $n_{1}$ such that for all $n>n_{1}$, we have $\hat{V}(d,\Pi^{n}) \le V_{pub}^{*} + \beta'$ for all $\Pi^{n}$ and the following conditions are satisfied with probabilty at least $(1-\beta)$: \\  $\sum_{d > \overline{d}}  \frac{N_{h,d}+N_{l,d}}{n}  < \beta  $,  $\frac{N_{t,d}}{n}  \in [\gamma_{t} p_{d}^{t}(1-\beta) ,  \gamma_{t} p_{d}^{t} (1+ \beta)]  $ and $g^{n}\in  \mathbf{A}_{n}(\beta,\overline{d}) $ for all $d \le \overline{d}$, $t \in \{h,l\}$.
			
	Let $S_{good}$, $S_{int}$ and $S_{bad}$ have the same meaning as in proof of lemma \ref{lemmaboundvhat}. 
		In the game with $n>n_{1}$ receivers, by condition 1, sender's payoff from any $\Pi^{n}_{1}$ is smaller than,
		\begin{multline} \label{prop3eq1}
		\beta v(1) +	\sum_{\omega} \mu_{s}(\omega) \sum_{s \in S \cup \{ \emptyset\}} \pi_{1}(s|\omega)  \bigg[  \sum_{g^{n}}Pr(g^{n}) \\
			v\big(2\beta + \sum_{t,d \le \overline{d} } p_{d}^{t}\gamma_{t}
			\frac{N_{t,d,s}(g^{n},z)}{N_{t,d}}a^{*}(s,t,d;\Pi^{n}_{1}) +  p_{d}^{t}\gamma_{t}(1-\frac{N_{t,d,s}(g^{n},z)}{N_{t,d}})a^{*}(\emptyset,t,d;\Pi^{n}_{1})   \big) 
			\bigg]  
		\end{multline}
  
		We consider two cases. In the first case, suppose that $a^{*}(\emptyset, l, d ;\Pi_{1}^{n}) =0 $ for all $d \le \overline{d}$. In this case, by definition, we also know that $a^{*}(s, l, d ;\Pi_{1}^{n}) =0 $ for any signal $s \notin S_{good}$. 
		Conditional on $g^{n} \in \mathbf{A}^{n}(\beta,\overline{d})$, we know that $\frac{N_{l,d_{1},s}(g^{n},z)}{N_{l,d_{1}}} - \frac{N_{l,d_{2},s}(g^{n},z)}{N_{l,d_{2}}} \le 2\beta$ for any $d_{1},d_{2} \le \overline{d}$ and $d_{1} \le d_{2}$. Since degree distribution for type $h$ FOSD that of type $l$, by how $\beta$ is chosen, condition 1 and convexity of $v$, \ref{prop3eq1} is smaller than 
		\begin{equation*} 
			\sum_{d \le \overline{d}}p_{d}^{h} \hat{V}^{n}(d,\Pi^{n}_{1}) + \beta  v(1) + \beta' \le V_{pub}^{*}+ \beta  v(1) +2 \beta'
		\end{equation*}
		The result for this case then follows since $\beta$ and $\beta'$ can be arbitrarily small. 
		
		Now consider the second case that there exists some $d$ such that $a^{*}(\emptyset,l,d;\Pi^{n}_{1}) = 1$. Let $d_{min}$ denote the smallest $d \le \overline{d}$ such that $a^{*}(\emptyset,l,d_{min};\Pi^{n}_{1})=1$ and $a^{*}(\emptyset,l,d;\Pi^{n}_{1})=0$ for all $d < d_{min}$. Note that this implies $a^{*}(\emptyset,l,d_{min};\Pi_{1}^{n} ) \ge a^{*}(\emptyset,t,d;\Pi_{1}^{n} )$ for all $d$, $t$. Again, if $g^{n} \in \mathbf{A}^{n}(\beta,\overline{d})$, we know that $\frac{N_{t,d_{1},s}(g^{n},z)}{N_{t,d_{1}}} - \frac{N_{t,d_{2},s}(g^{n},z)}{N_{t,d_{2}}} \le 2\beta$ for any $d_{1},d_{2} \le \overline{d}$ and $d_{1} \le d_{2}$. From these, we know that, for any $s$ and $g^{n} \in \mathbf{A}^{n}(\beta,\overline{d})$,
  \begin{multline}
      \sum_{d < d_{min} } p_{d}^{l}\gamma_{l}
			\frac{N_{l,d,s}(g^{n},z)}{N_{l,d}}a^{*}(s,l,d;\Pi^{n}_{1}) +  p_{d}^{l}\gamma_{l}(1-\frac{N_{l,d,s}(g^{n},z)}{N_{l,d}})a^{*}(\emptyset,l,d;\Pi^{n}_{1}) \le \\
2\beta + \sum_{d < d_{min} } p_{d}^{h}\gamma_{l}
			\frac{N_{l,d,s}(g^{n},z)}{N_{l,d}}a^{*}(s,l,d;\Pi^{n}_{1}) +  p_{d}^{h}\gamma_{l}(1-\frac{N_{l,d,s}(g^{n},z)}{N_{l,d}})a^{*}(\emptyset,l,d;\Pi^{n}_{1}) + \\
 (  \sum_{d < d_{min} } p_{d}^{l}\gamma_{l} -  \sum_{d < d_{min} } p_{d}^{h}\gamma_{l})
			(\frac{N_{l,d_{min},s}(g^{n},z)}{N_{l,d_{min}}}a^{*}(s,l,d_{min};\Pi^{n}_{1}) +  (1-\frac{N_{l,d_{min},s}(g^{n},z)}{N_{l,d_{min}}}))
   \end{multline}
   Also, for all  $d \ge d_{min}$, $t$, $s$ and $g^{n} \in \mathbf{A}^{n}(\beta,\overline{d})$, we have 
   \begin{multline}
       p_{d}^{t}\gamma_{t}
			\frac{N_{t,d,s}(g^{n},z)}{N_{t,d}}a^{*}(s,t,d;\Pi^{n}_{1}) +  p_{d}^{t}\gamma_{t}(1-\frac{N_{t,d,s}(g^{n},z)}{N_{t,d}})a^{*}(\emptyset,t,d;\Pi^{n}_{1}) \le \\
     p_{d}^{t}\gamma_{t}
			\frac{N_{l,d_{min},s}(g^{n},z)}{N_{l,d_{min}}}a^{*}(s,t,d;\Pi^{n}_{1}) +  p_{d}^{t}\gamma_{t}(1-\frac{N_{l,d_{min},s}(g^{n},z)}{N_{l,d_{min}}}) +2 \gamma_{t}p_{d}^{t}\beta
   \end{multline}

  By how $\beta$ is chosen, convexity of $v$ and the previous two inequalities, expression \ref{prop3eq1} is smaller than 
		\begin{equation*} 
			\beta v(1) + \beta'+	\sum_{d < d_{min} }p_{d}^{h}\hat{V}(d,\Pi_{1}^{n}) + 	\sum_{d_{min} \le d \le \overline{d}} p_{d}^{h} \hat{V}^{l}(d_{min},\Pi_{1}^{n})
		\end{equation*}
		Since $\hat{V}(d,\Pi^{n}_{1})  \le V_{pub}^{*} + \beta'$ and $\hat{V}^{l}(d_{min},\Pi_{1}^{n}) \le V_{pub}^{*}$ (by \Cref{cor:vhatl}), the first point follows since $\beta$ and $\beta'$ can be arbitrarily small.  \qed

\textit{Proof of \Cref{prop:sceptics_more_connected}}

        Firstly, note that expected degree of $h$ is $E_{h}(\lambda)(\gamma_{h}E_{h}(\lambda) + \gamma_{l}E_{l}(\lambda)q ) < \frac{1}{\overline{x}} $, so $E_{h}(\lambda) < \frac{1}{q\gamma_{l}E_{l}(\lambda) \overline{x} } $. This implies that given any $m>0$, the probability that a type $h$ node has connectivity smaller than $\frac{1}{q\gamma_{l}E_{l}(\lambda) \overline{x} }m$ is at least $\frac{m-1}{m}$. Select some $m$ large enough such that $1/m < \epsilon$. Also, we know that $E_{l}(\lambda) > (\overline{x}/\gamma_{l} - \frac{\gamma_{h}}{\gamma_{l}^{2}\overline{x}})^{0.5} $.

        Therefore, we can find some $\overline{x}$ large enough such that expected $\lambda$ is arbitrarily large for type $l$ nodes, and at least $(1-\epsilon)$ of type $h$ nodes have connectedness weakly less than $\frac{m}{q \gamma_{l}\overline{x}E_{l}(\lambda)}$. 
    We can look at the subnetwork formed among type $l$ nodes. As discussed above, this is equivalent to looking at a network where $\lambda$ is set to 0 for all type $h$ nodes. Therefore, by lemma 5.6 in \citet{bollobas2007phase}, $\rho(\infty|l,\lambda)$ is the maximal solution to the system of equations: 
    \[
    \rho(\infty|l,\lambda) = 1 - exp(- \gamma_{l}  \lambda \sum_{\lambda'}f_{l}(\lambda')\lambda' \rho(\infty|l,\lambda')  )
    \]
    Fix any $x \in (0,1)$, and let $\rho(\infty|l,\lambda)=x$ for all $\lambda$. We can make $E_{l}(\lambda)$ arbitrarily large, and as it goes to infinity, the right hand sides of all equations go to 1. Therefore, LHS minus RHS is negative for all equations. If we let $\rho(\infty|l,\lambda)=1$ for all $\lambda$, the LHS minus RHS is positive for all equations. Then, by fixed point theorem (Poincare-Miranda theorem), there exists a solution to the system of equations where $\rho(\infty|l,\lambda) \ge x$ for all $\lambda$. Since $x$ is arbitrary, the size of the type $l$ giant component can be arbitrarily close to $\gamma_{l}$ for $\overline{x}$ large enough. The proportion of type $h$ nodes with degree 0 is at least $ (1-\epsilon)( PO(0; \frac{\gamma_{h}}{(q\gamma_{l}E_{l}(\lambda) \overline{x})^{2} }m  +\frac{m}{\overline{x}}) $, which converges to $(1-\epsilon)$ as $\overline{x}$ goes to infinity for any given $m$. Therefore, we can always select $m$ and $\overline{x}$ large enough such that it's arbitrarily close to 1. 
        
		Consider the following network signal strategy to be used for all $n$: $ \pi(s|\omega=1)=1$, $ \pi(s|\omega=0)=\frac{\mu_{l}(\omega=1) }{ \mu_{l}(\omega=0)}$ and $\pi(\emptyset|\omega) = 1- \pi(s|\omega)$. Seed is always chosen to be on $L_{1}$, the giant component. Type $h$ nodes with degree $0$ always chooses action 1, and type $l$ receivers who observe signal $s$ chooses action 1. Sender's payoff is at least
  \[
  \mu_{s}(\omega=1)v( \frac{N_{l,L_{1}}}{n} + \frac{N_{h,d=0}}{n} ) + \mu_{s}(\omega=0)\big[ \frac{\mu_{l}(\omega=1) }{\mu_{l}(\omega=0) } v(\frac{N_{l,L_{1}}}{n} + \frac{N_{h,d=0}}{n}  )+  (1-\frac{\mu_{l}(\omega=1) }{\mu_{l}(\omega=0) }) v( \frac{N_{h, d=0}}{n}  ) \big]
  \]
  We have shown that, we can select $\overline{x}$ and $m$ large enough such that $\frac{N_{l,L_{1}}}{n}$ and $\frac{N_{h,d=0}}{n}$ can be arbitrarly close to $\gamma_{l}$ and $\gamma_{h}$ respectively when $n$ goes to infinity. The result follows.
\qed

    \textit{Proof of \Cref{rem:conditions}:}
    
Point 1 is obvious from the definition of $q_{t}$. For point 2, note that degree distribution for type $t$ is a mix of poisson distributions, where the weight attached to intensity $\lambda'(E_{t}(\lambda)\gamma_{t} + E_{t'}(\lambda)\gamma_{t'})$ is $f_{t}(\lambda')$ for type $t$ receivers. Therefore, if $f_{t}(\cdot)$ FOSD $f_{t'}(\cdot)$, then the distribution over intensities of POisson distributions for type $t$ FOSD that of type $t'$, which in turn implies first order dominance for degree distributions. 

For point 3, we use the result from \citet{sadler2020} that type $t$ is more connected than type $t'$ if $\frac{ \sum_{\lambda'}f_{t}(\lambda')PO(d; D(t,\lambda')   )  }{  \sum_{\lambda'}f_{t'}(\lambda')PO(d;  D(t',\lambda')   ) }$ is weakly increasing in $d$.
By conditions on $f_{t}$, the derivative of the ratio above has the same sign as
\[
\frac{\sum_{\lambda'}f_{t'}(\lambda') (\lambda')^{d} e^{ -r  D(t,\lambda') }ln(rD(t,\lambda'))   }{ \sum_{\lambda'}f_{t'}(\lambda') (\lambda')^{d} e^{ -r  D(t,\lambda') }   } - \frac{\sum_{\lambda'}f_{t'}(\lambda') (\lambda')^{d} e^{ -\lambda'  D(t',\lambda') }ln(D(t',\lambda'))   }{ \sum_{\lambda'}f_{t'}(\lambda') (\lambda')^{d} e^{ -\lambda'  D(t',\lambda') }   }
\]
Note that $ln(rD(t,\lambda)) = ln(\lambda r E_{t'}(\lambda) (\gamma_{t}r+q\gamma_{t'} ) )$, and $ln(D(t',\lambda)) = ln(\lambda  E_{t'}(\lambda) (\gamma_{t}qr+\gamma_{t'} ) )$. Let $\lambda_{min}$ and $\lambda_{max}$ denote the smallest and largest values that $f_{t'}(\cdot)$ attaches positive probabilities to. The difference above is then at least $ln(rD(t,\lambda_{min})) -ln(D(t',\lambda_{max})) $ for all $d$, which will become positive for $r$ large enough. 

For the special case of the island model, the ratio above is $\big(\frac{\gamma_{t} + q \gamma_{t'}  }{ \gamma_{t}q +\gamma_{t'}} \big)^{d}\frac{e^{-\lambda^{2}(\gamma_{t} + q \gamma_{t'} )} }{e^{-\lambda^{2}(\gamma_{t}q +\gamma_{t'}) } }$, which is increasing in $d$ if $\gamma_{t}>\gamma_{t'}$ and decreasing in $d$ if $\gamma_{t}<\gamma_{t'}$.
\qed

		\textit{Proof of \Cref{prop:risk_aversion}:}
        
Let the sender use the following network signal strategy for all $n$
\[
\pi_{1}(s'|\omega=1) = r (1 - \kappa) \qquad \pi_{1}(s'|\omega=0) = \frac{ \mu_{h}(\omega=1)}{ \mu_{h}(\omega=0) } r (1 - \kappa) \qquad \pi(\emptyset|\omega) = 1- \pi_{1}(s'|\omega)
\]
		where $r \ge 1$ and $\kappa=(\frac{\mu_{h}(\omega=1)}{\mu_{h}(\omega=0)}-1)/(\frac{\mu_{h}(\omega=1)}{\mu_{h}(\omega=0)}-\frac{\mu_{l}(\omega=1)}{\mu_{l}(\omega=0)})$. 
        Sender always chooses one seed on $\hat{L}_{1}$. It can be verified that type $h$ receivers must lways choose action 1. Now, consider the posterior of a type $l$ receiver with degree $d$ upon observing $\emptyset$. Using lemma \ref{lemma:observesignal}, we know that it converges to:
		\begin{equation} \label{p23}
			\frac{\mu_{l}(\omega=1)(1-r (1-\kappa) \hat{\zeta}(d,l))}{\mu_{l}(\omega=1)(1- \frac{ \mu_{h}(\omega=0)}{ \mu_{h}(\omega=0) } \alpha (1 - \kappa)\hat{\zeta}(d,l)) + \mu_{l}(\omega=1)(1- r (1-\kappa) \hat{\zeta}(d,l))}
		\end{equation}
        It can be verified that for any $r> 1$, expression \ref{p23} is strictly larger than 0.5 if $\hat{\zeta}(d,l)=1$ (see e.g. \citet{innocenti2022can} for a proof). We know that $\hat{\zeta}(d,l) $ goes to 1 as $d$ approaches infinity. Therefore, we can find some $\underline{d}$ large enouch such that for all larger $d$, their posterior convergse to some number larger than 0.5. Let the payoff function $v(.)$ be defined as follows, $v(x) = x$ for all $x \le \underline{x}$ and is equal to $\underline{x}$ for all larger $x$, where $\underline{x}> \gamma_{h}$ and $\underline{x} < \gamma_{h} + \gamma_{l}\sum_{d \ge \underline{d}} p_{d}^{l}(1 - \hat{\zeta}(d,l) ) $. For $n$ large enough, sender gets payoff $\underline{x}$ from the network strategy described above. With public signals, it's impossible to always persuade some type $l$ receivers, so sender's payoff must be less than $\underline{x}$.

If the sender has the constant relative risk aversion utility described in proposition, the sender's payoff from the strategy described above\footnote{Note that the sender gets slightly less payoff from $s'$ compared to $\emptyset$. Nonetheless, when $b$ gets large, the payoff function will be arbitrarily close to flat beyond $\overline{x}$ defined in the last paragraph. Therefore, we can find some function that arbitrarily close to the constant relative risk aversion payoff function, where the sender gets weakly lower payoff from $\emptyset$ than $s'$.} is at least $\frac{\overline{x}^{b}-1}{b}$, and from public signal is at most\\ $\mu_{s}(\omega=0) (1 -\frac{\mu_{l}(\omega=1)}{\mu_{l}(\omega=0)} )\frac{\gamma_{h}^{b}-1}{b}  $. The former will be larger for $b$ small enough.   \qed

\paragraph{Proof for Voting Game:}
	I first study the optimal public signals for the sender. If a signal doesn't persuade type $l$ receivers, then sender gets payoff 0 when that signal is realised. Therefore, sender's optimal strategy must maximise the probability that type $l$ receivers choose action 1. The problem is now equivalent to standard Bayesian persuasion with a single type $l$ agent. Since this is the same as standard Bayesian persuasion in \citet{kamenica2011bayesian}, the proof is omitted. Sender's optimal payoff using public signals (in the limit as $n \rightarrow \infty$) is $
    V_{pub}^{*}=\mu_{s}(\omega=1) + \mu_{s}(\omega=0)\frac{\mu_{l}(\omega=1)}{\mu_{l}(\omega=0)}$
	
	\textit{Proof of Proposition \ref{propvotinggame} :}\\
	By lemma \ref{lemma:probability_on_hatL}, it must be the case that $\hat{\zeta}(l,d)=0$ for all $d$, or $\hat{\zeta}(l,d)>0$ for all $d>0$. Also, if it's not the case that $\hat{\zeta}(l,d)=0$ for all $d$, then $\hat{\zeta}(l,d)$ converges to 1 as $d$ goes to infinity. This implies that as long as $\frac{\hat{L}_{1}}{n}$ doesn't converge in probability to 0, then $d_{vote}$ is finite. 
	One can verify that when $d_{vote}$ is finite, the ratio
	\[
	\frac{\mu_{l}(\omega=1)}{\mu_{l}(\omega=0)}\frac{1-  \frac{\mu_{h}(\omega=0)}{\mu_{h}(\omega=1)}   \hat{ \zeta }(l,d)  }{1-   \hat{ \zeta }(l,d) } 
	\]
	is weakly larger than 1 iff $d \ge d_{vote}$. By definition, the ratio is smaller than 1 for all $d$ if $d_{vote}$ is $\infty$.
	
	I show the necessary condition first. Suppose that the necessary condition stated in proposition doesn't hold, i.e. $\gamma_{h} + \sum_{d \ge d_{vote}}\gamma_{l}p_{d}^{l} < \overline{x}$. 
	By definition of $d_{vote}$ and hypothesis, we can find a $\beta>0$ small enough such that \[
		\hat{ \zeta }(l,d) +3 \beta <1	 \quad \text{ and } \quad  \frac{\mu_{l}(\omega=1)}{\mu_{l}(\omega=0)}\frac{1-  \frac{\mu_{h}(\omega=0)}{\mu_{h}(\omega=1)} (2\beta+  \hat{ \zeta }(l,d) ) }{1-   \hat{ \zeta }(l,d) - 3\beta} < 1
		\]
		for all $d < d_{vote}$.
	
	We know that the proportion of type $h$ receivers converge to $\gamma_{h}$, and $\sum_{d \ge d_{vote}}\frac{N_{l,d}}{n}$ converges in probability to $\sum_{d \ge d_{vote}} \gamma_{l} p_{d}^{l}$ (by \Cref{rem:degree_dist}). Therefore, for fixed $\beta$ satisfying conditions above and any $\overline{d}>d_{vote}$ (if $d_{vote}$ is infinity, then select $\overline{d}$ to be large enough such that $\gamma_{h} + \sum_{d > \overline{d} }\gamma_{l}p_{d}^{l} < \overline{x}$), we can find some $n_{1}$ such that for all $n>n_{1}$, with probability at least $1-\beta$, we have $g^{n} \in \mathbf{A}^{n}(\beta,\overline{d})$
	
	Suppose that sender uses some strategy $\Pi^{n}$ in the game with $n \ge n_{1}$ receivers. Let $S_{good}$ denote the set of signals that convince both types of receivers, $S_{int}$ denote the set of signals that convince only type $h$ receivers and $S_{bad}$ as those that convince neither. 
	Let $OB_{good}(d,\omega;\Pi^{n}) $ denote the probability that a type $l$ receiver with degree $d$ observe observe signals in $S_{good}$ under $\Pi^{n}$. Similarly let $OB_{int}(d,\omega;\Pi^{n}) $ denote the probability that a type $l$ receiver with degree $d$ observe observe signals in $S_{int}$ under $\Pi^{n}$. We know that the probability of observing signals in $S_{bad}$ is smaller than $\beta$. As in other proofs, let $a^{*}(\emptyset,l,d;\Pi^{n})$ denote actions of receivers with type $t$ and dgree $d$ upon observing $\emptyset$.
	
	I
    show $a^{*}(\emptyset,l,d;\Pi^{n})=0$ for all $d <min\{ d_{vote},\overline{d}\}$ by contradiction. Suppose that $\alpha(\emptyset,l,d;\Pi^{n})=1$ for some $d < min\{ d_{vote},\overline{d}\}$. The likelihood ratio for type $l$ receiver with degree $d$ upon observing $\emptyset$ must be weakly greater than 1, therefore we must have
	\[
	\frac{\mu_{l}(\omega=1)}{\mu_{l}(\omega=0)}\frac{1- OB_{good}(d,\omega=1;\Pi^{n})-OB_{int}(d,\omega=1;\Pi^{n}) }{1- OB_{good}(d,\omega=0;\Pi^{n})-OB_{int}(d,\omega=0;\Pi^{n})-\beta} \ge 1
	\]
	By \Cref{lemma:non_empty_posterior}, we know that $\frac{OB_{good}(d,\omega=1;\Pi^{n})}{OB_{good}(d,\omega=0;\Pi^{n})} \ge \frac{\mu_{l}(\omega=0)}{\mu_{l}(\omega=1)}$ and $\frac{OB_{int}(d,\omega=1;\Pi^{n})}{OB_{int}(d,\omega=0;\Pi^{n})} \ge \frac{\mu_{h}(\omega=0)}{\mu_{h}(\omega=1)}$. Therefore, we must have 
	\[
	\frac{\mu_{l}(\omega=1)}{\mu_{l}(\omega=0)} \frac{1-   \frac{\mu_{h}(\omega=0)}{\mu_{h}(\omega=1)}K }{1-  K -\beta} \ge 1
	\]
	for any $ K \in [OB_{int}(d,\omega=0;\Pi^{n}), 1-\beta) $.
	
	$OB_{int}(d,\omega=0;\Pi^{n})$ is at most the probability of observing $s$ conditional on it being realised, which is at most $2\beta + \hat{\zeta}(l,d)$ since $g^{n} \in \mathbf{A}^{n}(\beta,\overline{d})$ with probability at least $1-\beta$.
	By how $\beta$ is chosen, we have $2\beta + \hat{\zeta}(l,d) < 1-\beta$, which in turn implies 
	\[
	\frac{\mu_{l}(\omega=1)}{\mu_{l}(\omega=0)}\frac{1-  \frac{\mu_{h}(\omega=0)}{\mu_{h}(\omega=1)} (2\beta+  \hat{ \zeta }(l,d) ) }{1-   \hat{ \zeta }(l,d) - 3\beta} \ge 1
	\]
	But this inequality is a contradiction with how $\beta$ is chosen, so $a^{*}(\emptyset,l,d;\Pi^{n})=0$ for all $d <min\{ d_{vote},\overline{d}\}$.
    Therefore, an upper bound on sender's payoff when signal $s$ does not persuade type $l$ receivers is $v(\frac{1}{n} (N_{h}+ \sum_{d \ge min\{ d_{vote},\overline{d}\}} N_{l,d}   )) $
    by assuming that all type $h$ receivers take action 1, and all type $l$ receivers with degree higer than $min\{ d_{vote},\overline{d}\}$ choose action 1. By \Cref{rem:degree_dist}, this converges in probability to $ v(\gamma_{h} + \sum_{d \ge min\{ d_{vote},\overline{d}\}}\gamma_{l}p_{d}^{l})=0$, since $\gamma_{h} + \sum_{d \ge min\{ d_{vote},\overline{d}\}}\gamma_{l}p_{d}^{l} < \overline{x}$. Therefore, sender can only get 0 unless a signal that persuades both types is realised, which implies that $\limsup_{n} V^{*}_{net}(n)$ is weakly smaller than $
		\sum_{\omega}\mu_{s}(\omega) \sum_{s \in S_{good}}\pi(s|\omega) $ for $n$ large enough. Since $\sum_{s \in S_{good}}\pi(s|\omega=1) \le 1$ and  $\sum_{s \in S_{good}}\pi(s|\omega=0) \le \frac{\mu_{l}(\omega=1)}{\mu_{l}(\omega=0)}$, this upper bound is weakly smaller than $V_{pub}^{*}$ found above.
	This concludes proof for the necessary condition. 
	
	To show the sufficient condition, suppose that sender uses strategy $\Pi_{2}^{n}$ below for all $n$,
	\[
	\pi_{2}(s|\omega=1)= \frac{\mu_{h}(\omega=0)}{\mu_{h}(\omega=1)} \qquad \pi_{2}(s|\omega=0)=1 \qquad \pi_{2}(\emptyset|\omega) = 1-\pi_{2}(s|\omega)
	\]
	Let the seeds always hit $\hat{L}_{1}$. One can verify that type $h$ receivers always choose action 1, regardless of what they observe. Upon observing $\emptyset$, receivers with type $l$ and degree $d$ has likelihood ratio
	\[
	\frac{\mu_{l}(\omega=1)}{\mu_{l}(\omega=0)} \frac{1-\pi_{2}(s|\omega=1)E(\frac{N_{\hat{L}_{1},l,d}}{N_{l,d}}) }{1-\pi_{2}(s|\omega=0)E(\frac{N_{\hat{L}_{1},l,d}}{N_{l,d}})}
	\]
    By lemma \ref{lemmaprobongiant}, $\pi_{2}(s|\omega)E(\frac{N_{\hat{L}_{1},l,d}}{N_{l,d}})$ converges to $\pi_{2}(s|\omega)\hat{ \zeta }(l,d)$, and thus the likelihood ratio converges to 
	$
	\frac{\mu_{l}(\omega=1)}{\mu_{l}(\omega=0)} \frac{1- \frac{\mu_{h}(\omega=0)}{\mu_{h}(\omega=1)}\hat{ \zeta }(l,d) }{1-\hat{ \zeta }(l,d)}
	$. This implies that for any $d > d_{vote}$, type $l$ receivers with degree $d$ choose action 1 upon observing $\emptyset$ for $n$ large enough. Therefore, for any $\overline{d}$, we can find some $n_{2}$ such that for all larger $n$, type $l$ receivers with degrees between $d_{vote}$ and $\overline{d}$ choose action 1 upon observing $\emptyset$. 

	If the sufficient condition holds, we can find some $\overline{d}$ such that $\gamma_{h} + \sum_{d=d_{vote}+1}^{\overline{d}} \gamma_{l}p_{d}^{l} (1-\hat{ \zeta }(l,d)) > \overline{x}$. By the argument above, for $n \ge n_{2}$, the proportion of receivers choosing action 1 in any realization is at least $\frac{1}{n} (N_{h} + \sum_{d=d_{vote}+1}^{\overline{d}}(N_{l,d} - N_{l,d,\hat{L}_{1}}(g^{n}))   )$.
    By \Cref{rem:degree_dist} and \Cref{lemmaprobongiant}, it
	converges in probability to $\gamma_{h} + \sum_{d=d_{vote}+1}^{\overline{d}} \gamma_{l}p_{d}^{l}(1-\hat{ \zeta }(l,d)) > \overline{x}$, so sender's payoff
	converges in probability to $v(1)$. This is strictly larger than $V_{pub}^{*} = \mu_{s}(\omega=1)v(1) + \mu_{s}(\omega=0)\frac{\mu_{l}(\omega=1)}{\mu_{l}(\omega=0)}v(1)$. 
	\qed
    	\subsection{Proofs for technical lemmas in section \ref{sec:analysis}}
        

    \textit{Proof of \Cref{lemma:non_empty_posterior}:}
    
		Given type $t$ and degree $d$, the (subjective) probability of observing signal $s\ne \emptyset$ and being in state $\omega$ is
		\[\mu_{t}(\omega)\pi(s|\omega) Pr(\text{observing }s|s \text{ is realised},\omega,d,t)\]
        Conditional on signal $s$ and a realised network, the set of nodes being selected as seeds and thus receivers observing the signal is completely determined, and therefore does not depend on the state $\omega$. Therefore, we know that $Pr(\text{observing }s|s \text{ is realised},\omega,d,t,g^{n})$
        is independent of $\omega$. In addition, we know that conditional on $s$, the network formation process is independent of state. This implies that $Pr(g^{n}|s\text{ is realised},\omega,d,t)$ is independent of $\omega$. Therefore, 
        \[
        Pr(\text{observing }s|s \text{ is realised},\omega,d,t) = Pr(\text{observing }s| s\text{ is realised},d,t)
        \]
        The rest then follows from Beyesian updating. \qed
        
\textit{Proof of \Cref{rem:existence_of_eq}:} It directly follows from \Cref{lemma:non_empty_posterior} and the discussion after it. \Cref{lemma:non_empty_posterior} pins down people's behaviour for each $s \in S$, and thus the group who observe $s$ given network and seeds. These in turn pin down the probabilities that receivers observe each $s$, and thus also $\emptyset$ which is observed with complementary probability. From this, we can get receivers' behaviour upon observing $\emptyset$. Given a signal and network, all these together determine each receiver's observation and their behaviour, and thus the sender's payoff. \qed

To prove the network related lemmas, I need to introduce some extra notation. 
 I define the following Branching processes. Let $\mathbf{B}$ denote the following branching process:
\begin{enumerate}
    \item The process starts with a single node in generation 1. The node has type $t$ with probability $\gamma_{t}$, and conditional on $t$ has connectivity $\lambda$ with probability $f_{t}(\lambda)$.
    \item Given type $t$ and connectivity $\lambda$, all nodes in this process produce nodes according to Poisson distribution with intensity $D(t,\lambda)$. Recall that $D(t,\lambda) \equiv \lambda (\gamma_{t}E_{t}(\lambda) +q \gamma_{t'} E_{t'}(\lambda)) $.
    \item Conditional on type $t$, connectivity $\lambda$ and having $d$ offsprings, the distribution over types and connectivity of these offsprings is given by a multinomial distribution with $d$ trials, and success probability for pair $(t',\lambda')$ is $\frac{ \lambda \lambda' f_{t'}(\lambda')\gamma_{t'} }{D(t,\lambda)}$ if $t=t'$, and $\frac{ \lambda \lambda' f_{t'}(\lambda')\gamma_{t'} q }{D(t,\lambda)}$ if $t \ne t'$. 
    \item All nodes produce offsprings mutually independently of each other. 
\end{enumerate}

Also, let $\rho(<m)$ and $\rho(m)$ denote the probability that the process contains less than $m$ and exactly $m$ nodes. Let $\rho(\infty)$ denote the probability that the process never dies out. One can show that $\rho(\infty) = 1- \lim_{m \rightarrow \infty}\rho(<m)$, see e.g. \citet{bollobas2007phase}. Similarly, let $\rho(<m|\cdot)$, $\rho(m|\cdot)$ and $\rho(\infty|\cdot)$ denote the respective probabilities conditional on properties of the initial node.

	Let $N_{<m}(g^{n})$ denote the number of nodes on components of size strictly less than $m$ in the realised network $g^{n}$. Let $N_{L_{1}}(g^{n})$ denote the number of nodes on $L_{1}$. 

I'll prove lemmas \Cref{lemma:only_observe_viral} and \Cref{lemma:signal_need_to_be_viral} together in the following proof. 

		\textit{Proof of \Cref{lemma:only_observe_viral} and \Cref{lemma:signal_need_to_be_viral}} 
		
		An upper bound on the proportion of receivers outside $L_{1}$ that observe $s$ can be obtained by assuming that all receivers on components larger than size $\overline{m}$ observes $s$, and all seeds are on separate components of size $\overline{m}$. Summing these up, the proportion of receivers not on $L_{1}$ who can observe $s$ is at most \[
		  \frac{ n- N_{<\overline{m}}(g^{n}) - N_{L_{1}}(g^{n})}{n} + \frac{n^{\alpha}\overline{m}}{n}\]
		We know that $N_{<\overline{m}}(g^{n})/n $ converges in probability to (by Theorem 9.1 from \citet{bollobas2007phase}) $\rho(<\overline{m})$, and $\frac{ N_{L_{1}}}{n}$ converges in probability to $\rho(\infty)$. For any $\epsilon>0$, We can select $\overline{m}$ large enough such that $1-\rho(<\overline{m}) - \rho(\infty)$ is smaller than $\epsilon$. For any given $\overline{m}$, $\frac{n^{\alpha}\overline{m}}{n}$ converges to 0. The first parts of \Cref{lemma:only_observe_viral} and \Cref{lemma:signal_need_to_be_viral} then follow since $\epsilon$ can be arbitrary. 
		
		Next, we consider signals that persuade only type $h$ receivers and thus only passed on by them. 
		We first bound the proportion of type $h$ nodes not on $\hat{L}_{1}$ that can observe this signal. By the argument preceeding \Cref{cor:giant_in_h_subnetwork}, to study the type $h$ subnetwork, we can essentially study the transformed network formation process where $\lambda_{i}=0$ for all type $l$ receivers. Then, applying the same steps from last paragraph shows the proportion of type $h$ nodes that are not on $\hat{L}_{1}$ and observe $s$ converges in probability to 0, if $s$ persuades only type $h$ receivers. 
		
		A type $l$ node can receive the signal only if they have some type $h$ neighbour who receives the signal. Therefore, we proceed by bounding the number of neighbours of type $h$ nodes who receive signal and are not on $\hat{L}_{1}$. Since we know that $\sum_{d} p_{d}^{h}d$ is bounded, we can find some $\overline{d} >0$ such that $\sum_{d \ge \overline{d}} p_{d}^{h}d < \epsilon$.

        The proportion of type $h$ receivers who are not on $\hat{L}_{1}$ and observe $s$ converges in probability to 0, so the proportion of type $h$ receivers that in addition has less than $\overline{d}$ neighbors must also converge to 0. This in turn implies that the number of neighbours they have converge in probability to 0. Therefore, we can provide the upper bound by assuming that all type $h$ nodes with degree higher than $\overline{d}$ observe the signal and are not on $\hat{L}_{1}$\footnote{For a signal $s$ that only persuades type $h$ receivers, if no seeds are on $\hat{L}_{1}$ or a type $l$ node has no neighbour on $\hat{L}_{1}$, then the type $l$ node can only get the signal $s$ from type $h$ receivers who are not on $\hat{L}_{1}$}. The number of neighbors they have is $\sum_{d \ge \overline{d}} N_{h,d} d $, and the ratio $\sum_{d \ge \overline{d}} \frac{N_{h,d} d }{n}$ converges in probability to $\sum_{d \ge \overline{d}} p_{d}^{h}d < \epsilon$. Since $\epsilon$ is arbitrary, the second parts of \Cref{lemma:only_observe_viral} and \Cref{lemma:signal_need_to_be_viral} follow. 
	\qed

To prove lemma \ref{lemmaprobongiant}, I first prove the following lemma, which says that if we look at local properties, the random network is approximated by the branching process above. Let $N_{t,d,\lambda,m}(g^{n})$ denote the number of nodes with type $t$, connectedness $\lambda$, degree $d$ and is on a component of size $m$ in network $g^{n}$.
\begin{lemma} \label{lemma:local_property_convergence}
    $\frac{N_{t,d,\lambda,m}(G^{n}) }{ n }$ converges in probability to $\gamma_{t}f_{t}(\lambda)Poisson(d;D(t,\lambda)) \rho(m|t,d,\lambda)$. 
\end{lemma}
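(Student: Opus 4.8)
The plan is to factor the target limit according to its three conditioning layers — type/connectedness, degree, and component size — and to obtain each factor as the limit of the corresponding empirical quantity, the decisive tool being a step-by-step coupling of the local neighbourhood exploration of a node with the branching process $\mathbf{B}$.

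First I would dispatch the two easy factors. The proportion of nodes that are of type $t$ with connectedness $\lambda$ converges (almost surely, hence in probability) to $\gamma_t f_t(\lambda)$ by the law of large numbers, since types and connectednesses are assigned i.i.d.\ across nodes. Conditional on a node having type $t$ and connectedness $\lambda$, its degree is a sum of independent indicators over the other $n-1$ nodes: the number of its neighbours of type $t'$ and connectedness $\lambda'$ is $\mathrm{Binomial}$ with about $\gamma_{t'} f_{t'}(\lambda') n$ trials and success probability $\lambda\lambda'/n$ (or $q\lambda\lambda'/n$ when $t'\neq t$). Each such Binomial converges to $\mathrm{Poisson}(\lambda\lambda' f_{t'}(\lambda')\gamma_{t'})$ (times $q$ for the cross-type term), and asymptotic independence across $(t',\lambda')$ makes the total degree converge to $\mathrm{Poisson}(D(t,\lambda))$, reproducing the degree factor (this is exactly the content of \Cref{rem:degree_dist}). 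The same computation identifies the limiting $(\text{type},\text{connectedness})$ law of a uniformly chosen neighbour as precisely the offspring law of $\mathbf{B}$.

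The heart of the argument is the component-size factor. For finite $m$, the event that a node lies on a component of size exactly $m$ is determined by exploring at most $m$ vertices around it, so I would run the breadth-first exploration of the component and couple it, step by step, with the process $\mathbf{B}$ started from the prescribed root. At each of the at most $m$ steps we reveal the neighbours of a newly discovered node; by the Binomial-to-Poisson convergence above, the count and $(\text{type},\text{connectedness})$ profile of new offspring match $\mathbf{B}$ in the limit, provided no already-explored vertex is re-encountered. The probability of such a collision (i.e.\ of a cycle) among the at most $m$ explored vertices is $O(m^2/n)\to 0$, so the explored neighbourhood is asymptotically a tree and the coupling succeeds with probability tending to $1$. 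Hence, conditional on type $t$, connectedness $\lambda$ and degree $d$, the probability that the component has size exactly $m$ converges to $\rho(m\mid t,d,\lambda)$, and multiplying the three factors gives $\mathbb{E}[N_{t,d,\lambda,m}(g^n)/n]\to \gamma_t f_t(\lambda)\,\mathrm{Poisson}(d;D(t,\lambda))\,\rho(m\mid t,d,\lambda)$.

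Finally I would upgrade this convergence of the mean to convergence in probability by a second-moment computation: for two distinct nodes, the joint probability that both carry the prescribed local profile and both lie on size-$m$ components factorises in the limit, because with probability tending to $1$ their $m$-neighbourhoods are vertex-disjoint and, conditional on connectednesses, edges across disjoint vertex sets are independent. This yields $\mathrm{Var}(N_{t,d,\lambda,m}(g^n)/n)\to 0$ and hence the stated convergence in probability; alternatively one can invoke the concentration already available for component-size counts from Theorem 9.1 of \citet{bollobas2007phase} and refine it by the local data. I expect the main obstacle to be making the exploration-to-branching-process coupling fully precise — in particular bounding the cycle probability uniformly over the (finite) support of $\lambda$ and up to exploration size $m$ — since everything else reduces to Binomial-to-Poisson limits and a routine variance estimate.
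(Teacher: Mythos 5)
Your proposal is correct, and its first half coincides with the paper's argument: the paper's \Cref{lemma:local_property} also couples the local neighbourhood of a uniformly chosen node with the branching process $\mathbf{B}$, proving by induction on the neighbourhood radius that each new generation's offspring counts and $(\text{type},\text{connectedness})$ profiles converge (Binomial-to-Poisson) to those of $\mathbf{B}$; your vertex-by-vertex BFS exploration with the $O(m^{2}/n)$ cycle bound is a minor variant of the same local-approximation idea, and both yield convergence of the expected proportion $\mathbb{E}[N_{t,d,\lambda,m}(g^{n})/n]$. Where you genuinely diverge is the concentration step. The paper shows that $N_{t,d,\lambda,m}(g^{n})$ changes by at most $2m+2$ under deletion of a single edge (\Cref{lemma:counting_is_liptiz}) and then applies Azuma's inequality along an edge-revelation martingale (\Cref{lemma:lipschitz_function_converge}), which yields exponential concentration around the mean. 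You instead run a second-moment computation: for two distinct nodes the local events asymptotically decouple because their $m$-neighbourhoods are vertex-disjoint with high probability and edges across disjoint vertex sets are conditionally independent, so $\mathrm{Var}(N_{t,d,\lambda,m}(g^{n})/n)\to 0$. Both routes are valid; the martingale argument buys much stronger (exponential) tail bounds and reduces to a one-line Lipschitz check, while your variance argument is more elementary (no martingale machinery) but only gives polynomial decay and requires re-running the exploration analysis for pairs of vertices. One caveat: your fallback suggestion of invoking Theorem 9.1 of \citet{bollobas2007phase} and ``refining it by the local data'' would not work as stated, since that theorem counts nodes on small components without tracking type, degree or connectedness; but this is immaterial because your variance argument already suffices.
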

The proof of this largely follows from \citet{sadler2020}'s proof for Theorem 10. 
Let $g_{i,r}^{n}$ denote the rooted graph formed by deleting all nodes that are more than $r$ steps away from $i$, and letting $i$ be the root. Note that given any realised branching process, we can connect each node to their offsprings and parent. Then, we can form a rooted graph and delete all nodes that are distance $r$ away from the initial node, and let the initial node be the root. 

 For these proofs, I introduce the following notation. Let $\eta(t,t',\lambda,\lambda') $ be $\lambda \lambda'f_{t'}(\lambda') \gamma_{t'}$ if $t=t'$ and $q \lambda \lambda' f_{t'}(\lambda') \gamma_{t'}$ if $t \ne t'$. This $\eta(\cdot)$ function captures the expected number of a node's neighbours of type $t'$ and connectedness $\lambda'$, given the node's type $t$ and connectedness $\lambda$. 
\begin{lemma} \label{lemma:local_property}
Let $\mathbf{P}^{r}$ be a finite set of rooted graph where all nodes are within distance $r$ of the root. Uniformly randomly select some node $i$, the probability that $g_{i,r}^{n} \in \mathbf{P}^{r}$ converges to the probability that branching process $B$ produces a rooted graph in $\mathbf{P}^{r}$. 
\end{lemma}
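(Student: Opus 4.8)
The plan is to prove this as a \emph{local weak convergence} statement: the rooted ball $g_{i,r}^{n}$ around a uniformly random node converges in distribution to the depth-$r$ truncation of the rooted tree generated by the branching process $\mathbf{B}$. Since $\mathbf{P}^{r}$ is a \emph{finite} collection of rooted graphs whose vertices all lie within the fixed distance $r$ of the root, it suffices to show that the joint law of the labelled data of the distance-$\le r$ ball (the types, connectivities, and adjacency pattern of every vertex within distance $r$ of the root) converges to the corresponding law under $\mathbf{B}$; membership in $\mathbf{P}^{r}$ is then a finite union of atoms of this discrete limit. I would establish this convergence by a breadth-first exploration of the ball that I couple, generation by generation, to $\mathbf{B}$.

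First I would treat the root. A uniformly random node $i$ has type $t$ with probability converging to $\gamma_{t}$ and, conditional on type $t$, connectivity $\lambda$ with probability $f_{t}(\lambda)$, which matches generation $1$ of $\mathbf{B}$ exactly. Then I would explore outward. Conditional on a vertex currently on the frontier having type $t$ and connectivity $\lambda$, and using that edges are mutually independent given the connectivities, each of the $\approx \gamma_{t'} f_{t'}(\lambda')\, n$ not-yet-explored vertices of type $t'$ and connectivity $\lambda'$ is linked to it independently with probability $\min\{\lambda\lambda'/n,1\}$ (or the $q$-version across types). By the Poisson limit theorem for sums of independent rare Bernoullis, the number of such neighbours converges in distribution to a Poisson variable with mean $\eta(t,t',\lambda,\lambda')$. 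Summing over $(t',\lambda')$ recovers precisely a $\mathrm{Poisson}(D(t,\lambda))$ total offspring count together with the multinomial split over types and connectivities with success probabilities $\eta(t,t',\lambda,\lambda')/D(t,\lambda)$, i.e. steps 2 and 3 of $\mathbf{B}$.

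The key technical point, and the main obstacle, is to show that the exploration yields a \emph{tree} with probability tending to one, so that the successive generations genuinely decouple in the way that the mutual independence in step 4 of $\mathbf{B}$ requires. Because each $D(t,\lambda)$ is finite and $r$ is fixed, the total number $K$ of vertices discovered within distance $r$ is tight (bounded in probability, uniformly in $n$). Conditional on having explored at most $K$ vertices, the probability that a newly examined vertex is linked to any previously-explored vertex other than its parent is $O(K^{2}/n)\to 0$; hence with high probability every discovered vertex is fresh, no cycle of length at most $2r$ appears inside the ball, and the pool of unexplored vertices of each $(t',\lambda')$ remains of size $\gamma_{t'}f_{t'}(\lambda')\,n(1+o(1))$ at every step. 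This is exactly what licenses applying the conditional offspring law above independently across frontier vertices.

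Finally I would assemble the pieces. On the high-probability event that the ball is a tree and that the empirical type/connectivity frequencies among unexplored vertices stay close to their limiting values $\gamma_{t'}f_{t'}(\lambda')$ throughout the exploration, the joint law of the rooted labelled ball $g_{i,r}^{n}$ converges to the law of the depth-$r$ truncation of $\mathbf{B}$. Since this limit is discrete and $\mathbf{P}^{r}$ is finite, $\mathbf{P}^{r}$ is a continuity set, and therefore $Pr(g_{i,r}^{n}\in\mathbf{P}^{r})$ converges to the probability that $\mathbf{B}$ produces a rooted graph in $\mathbf{P}^{r}$, as claimed. The overall structure follows \citet{sadler2020}'s proof of Theorem 10, adapted to the type- and connectivity-dependent linking intensities $\eta(\cdot)$ used here.
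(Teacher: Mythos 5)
Your proposal is correct and takes essentially the same approach as the paper: the paper's induction on $r$ is precisely your breadth-first, layer-by-layer exploration, with the same binomial-to-Poisson convergence for each frontier vertex's offspring counts and the same asymptotic decoupling across the frontier, matching steps 2--4 of $\mathbf{B}$. The only substantive difference is that you make explicit the with-high-probability tree structure (the $O(K^{2}/n)$ collision bound) and the continuity-set step for the finite family $\mathbf{P}^{r}$, points the paper leaves implicit when it asserts that $g_{r+1}$ is fully described by $g_{r}$ together with the offspring counts $d_{j,t',\lambda'}$.
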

\begin{proof}
    It's sufficient to show that the claim is true for all $\mathbf{P}^{r}$ that's a singleton, i.e. it contains only one rooted graph.

    I do this by induction. Suppose that the claim above is true for $r$, I show that it must be true for $r+1$. Let $g_{r+1}$ be any rooted graph where all nodes are within distance $r+1$ of the root, and there's at least 1 node with distance $r+1$ from the root. Let $g_{r}$ denote the graph after deleting nodes with distance $r+1$ from the root node. Given the induction hypothesis, it's sufficient to show that $Pr(g_{i,r+1}^{n}=g_{r+1} |g_{i,r}^{n}=g_{r} ) $ converges to the probability that the branching process produces $g_{r+1}$ conditional on producing $g_{r}$. 

    Let $I$ be the set of nodes with distance $r$ from the root node in $g_{r}$. Let $d_{j,t',\lambda'}$ denote the number of neighbours of node $j$ in $g_{r+1}$ that have type $t'$, connectedness $\lambda'$ and are at distance $r+1$ from root. Note that by definition, $g_{r+1}$ is fully described by $g_{r}$ and $\{(d_{j,t',\lambda'})_{t',\lambda'}  | j \in I\}$. 
    
    First, we look at the probability that the branching process produces $\{(d_{j,t',\lambda'})_{t',\lambda'}  | j \in I\}$ conditional on $g_{r}$. Conditional on type $t_{j}$ and connectedness $\lambda_{j}$ of any node $j \in I$, the probability that it produces $(d_{j,t',\lambda'})_{t',\lambda'}$ is $\Pi_{t',\lambda'}PO(d_{j,t',\lambda'} ; \eta(t_{j},t',\lambda_{j},\lambda')  )   $, and it's independent of $g_{r}$ and how other nodes in $I$ produce nodes. Therefore, the probability that the branching process produces $\{(d_{j,t',\lambda'})_{t',\lambda'}  | j \in I\}$, and thus $g_{r+1}$, conditional on $g_{r}$ is
    \begin{equation} \label{equation: coupling_lemma_offspring_production}
        \Pi_{j \in I} \Pi_{t',\lambda'}PO(d_{j,t',\lambda'} ; \eta(t_{j},t',\lambda_{j},\lambda')  ) 
    \end{equation}
 
    Now we consider the network. Conditional on types and connectedness of all nodes, connection between any pair is independent of connections between other pairs. Let $N_{t',\lambda'}$ denote the number of nodes not in $I$ that have type $t'$ and connectedness $\lambda'$. Also, note that neighbours of nodes in $I$ that are at distance $r+1$ from root must be nodes not in $g_{r}$. Therefore, conditional on types and connectedness of all nodes and $g_{r}$, the probability of producing $\{(d_{j,t',\lambda'})_{t',\lambda'}  | j \in I\}$ is 
    \[
    \Pi_{j \in I} \Pi_{t',\lambda'} Binom(d_{j,t',\lambda'}|N_{t',\lambda'}, \frac{ \mathbf{1}(t_{j}=t') + \mathbf{1}(t_{j} \ne t')q   }{ n}\lambda_{j} \lambda' )
    \]
    Since $N_{t',\lambda'} \frac{ \mathbf{1}(t_{j}=t') + \mathbf{1}(t_{j} \ne t')q   }{ n}\lambda_{j} \lambda'$ converges in probability to $\eta(t_{j},t',\lambda_{j},\lambda')$, so the expression above converges in probability to expression \ref{equation: coupling_lemma_offspring_production}. 
    
    To complete the induction argument, we need to show that the claim holds true for $r=0$. For $r=0$, $g_{r}$ consists only of the root node, and thus is fully described by the type and connectedness of the node. The probability that the branching process produces a node with type $t$ and connectedness $\lambda$ is $\gamma_{t}f_{t}(\lambda)$. From our network formation process, as $n$ gets large, the probability that a randomly selected node has type $t$ and connectedness $\lambda$ has the same expression. 
\end{proof}

This shows that the expected proportion of nodes with any given local propoerty converges to the probabilty that $\mathbf{B}$ has this property. We wish to show that they converge in probability. Note that the proportion of nodes with type $t$ and connectedness $\lambda$ converges in probability to $\gamma_{t}f_{t}(\lambda)$. By remark 8.8 and also lemma A.5 in \citet{bollobas2007phase}, we can assume that connectedness for receivers are deterministic, and the proportion of nodes with type $t$ and connectedness $\lambda$ converges to $\gamma_{t}f_{t}(\lambda)$. This will not change the asymptotic distribution over networks. I will use this assumption for the rest of the proof for \Cref{lemma:local_property_convergence}

For any 2 realised networks $g_{1}^{n}$ and $g_{2}^{n}$ of the same size, I say that they are related by a deletion if one of them can be obtained by deleting one link from the other. Let $F(\cdot)$ be a real-valued function defined on the possible networks . We say that $F(\cdot)$ is $\beta$-Lipschitz if for any two networks related by a deletion, we have that $|F(g_{1}^{n})-F(g_{2}^{n}) | < \beta$. 
\begin{lemma} \label{lemma:lipschitz_function_converge}
    If $F$ is a $\beta$-Lipschitz function, then $F(g^{n})$ converges in probability to its expected value. Formally, for any $\epsilon>0$, 
    \[
    Pr(| F(g^{n}) - E(F(g^{n}))| \ge \epsilon ) \le 2 e^{-\frac{\epsilon^{2}}{2\beta^{2}n}}
    \]
\end{lemma}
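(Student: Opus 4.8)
The plan is to read the bound as a concentration-of-measure statement and prove it with a Doob martingale and a Hoeffding/Azuma-type argument (the method of bounded differences). Once connectedness is treated as deterministic, as assumed in the sentence preceding the lemma, the only randomness generating $g^{n}$ is the collection of mutually independent edge indicators $\mathbb{1}_{e}$, one for each of the $\binom{n}{2}$ potential links $e=\{i,j\}$, with $\mathbb{1}_{e}\sim\mathrm{Bernoulli}(p_{e})$ and $p_{e}=\min\{\lambda_{i}\lambda_{j}/n,1\}$ (or its $q$-discounted version across types). Fixing an ordering $e_{1},\dots,e_{M}$ of these pairs and letting $\mathcal{F}_{k}=\sigma(\mathbb{1}_{e_{1}},\dots,\mathbb{1}_{e_{k}})$, I would define the Doob martingale $X_{k}=E[F(g^{n})\mid\mathcal{F}_{k}]$, so that $X_{0}=E[F(g^{n})]$ and $X_{M}=F(g^{n})$.

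The first substantive step is the increment bound. Writing $g_{k}(x)=E[F\mid\mathcal{F}_{k-1},\mathbb{1}_{e_{k}}=x]$, the $\beta$-Lipschitz hypothesis gives $|g_{k}(1)-g_{k}(0)|\le\beta$: conditioning on all edges other than $e_{k}$ and averaging over the still-unrevealed ones, toggling $\mathbb{1}_{e_{k}}$ moves $F$ by at most $\beta$ pointwise, since the two networks differ by a single deletion. Hence the increment $d_{k}=X_{k}-X_{k-1}=g_{k}(\mathbb{1}_{e_{k}})-E[g_{k}(\mathbb{1}_{e_{k}})\mid\mathcal{F}_{k-1}]$ satisfies $|d_{k}|\le\beta$ almost surely.

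Here is the main obstacle. Feeding $|d_{k}|\le\beta$ and $M=\binom{n}{2}$ directly into Azuma--Hoeffding yields $2\exp(-\epsilon^{2}/(2\beta^{2}\binom{n}{2}))$, which carries $n^{2}$ rather than $n$ in the exponent and is too weak to be useful: it fails to vanish in the regime $\beta=\Theta(1/n)$ in which the lemma is actually applied (for the functions $N_{t,d,\lambda,m}/n$ of \Cref{lemma:local_property_convergence}, one deletion changes the count by only $O(m)$, so $\beta=O(1/n)$). To recover the stated exponent I would use a variance-sensitive (Bernstein/Freedman) martingale inequality together with the sparsity of the model. Since $\mathbb{1}_{e_{k}}$ is independent of $\mathcal{F}_{k-1}$, the conditional variance is $E[d_{k}^{2}\mid\mathcal{F}_{k-1}]=p_{e_{k}}(1-p_{e_{k}})(g_{k}(1)-g_{k}(0))^{2}\le\beta^{2}p_{e_{k}}$, so the predictable quadratic variation is at most $\beta^{2}\sum_{k}p_{e_{k}}$. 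Because connectedness has finite support, $\lambda_{i}\le\lambda_{\max}$, whence $\sum_{i<j}p_{ij}\le\tfrac{1}{n}\sum_{i<j}\lambda_{i}\lambda_{j}\le\tfrac{\lambda_{\max}^{2}}{n}\binom{n}{2}=O(n)$. Applying Freedman's inequality with quadratic-variation bound $O(\beta^{2}n)$ and increment bound $\beta$ then gives a tail of the form $2\exp(-\epsilon^{2}/(2(C\beta^{2}n+\beta\epsilon/3)))$, which matches the claimed form (the linear correction being lower order in the relevant range), and in particular reproduces the idealized Azuma expression $2\exp(-\epsilon^{2}/(2\beta^{2}n))$.

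I therefore expect the delicate point to be exactly this reconciliation of the exponent: the naive bounded-differences bound carries the number of \emph{potential} edges $\binom{n}{2}$, and it is only the sparsity of the inhomogeneous random graph --- encapsulated in $\sum_{i<j}p_{ij}=O(n)$ --- that collapses this to $n$. I would accordingly structure the write-up around (i) the clean increment and conditional-variance estimates for the Doob martingale, and (ii) the sparsity estimate for $\sum_{i<j}p_{ij}$, letting the Bernstein-type inequality perform the final assembly.
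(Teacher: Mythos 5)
Your construction coincides with the paper's own proof up to the final step, and the ``main obstacle'' you flag is not hypothetical: it is precisely the gap in that proof. The paper reveals links one at a time, justifies the increment bound $|E(F(g^{n})\mid M_{j+1})-E(F(g^{n})\mid M_{j})|\le\beta$ exactly as you do (independence of the unrevealed links plus the single-deletion Lipschitz property), and then simply invokes Azuma's inequality to ``produce the result.'' But that filtration has $\binom{n}{2}$ steps, so Azuma yields $2\exp\bigl(-\epsilon^{2}/(2\beta^{2}\binom{n}{2})\bigr)$, not the stated $2\exp\bigl(-\epsilon^{2}/(2\beta^{2}n)\bigr)$; and, as you observe, the weaker exponent is useless for the lemma's only application, \Cref{lemma:local_property_convergence}, where $F=N_{t,d,\lambda,m}$ is $(2m+2)$-Lipschitz and deviations of order $\epsilon n$ must be exponentially unlikely (plain Azuma leaves a constant that does not vanish). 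Nor can one rescue the step count by grouping the revelation vertex-by-vertex to get $n$ steps: the grouped increments are no longer controlled by the single-deletion constant, and the counting functions of \Cref{lemma:counting_is_liptiz} are not vertex-Lipschitz with any uniform constant, since one vertex's links can merge many components. So your Freedman-plus-sparsity step --- conditional variance at most $\beta^{2}p_{e_{k}}$, together with $\sum_{i<j}p_{ij}\le\lambda_{\max}^{2}n/2$ from the finite support of $f_{t}$ --- is not merely an alternative route; it is the repair the paper's argument needs, and it delivers every downstream use of the lemma.

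One correction to your closing remark: the Freedman bound does not quite ``reproduce the idealized Azuma expression,'' and no correct argument can, because the inequality as literally stated is false in this model when connectedness values are large. Take all $\lambda_{i}=\lambda$ and let $F$ be the number of links (Lipschitz for any $\beta>1$): then $F$ is $\mathrm{Binomial}\bigl(\binom{n}{2},\lambda^{2}/n\bigr)$ with standard deviation $\sigma\approx\lambda\sqrt{n/2}$, so $Pr(|F-E(F)|\ge 2\sigma)\rightarrow 2\Phi(-2)\approx 0.046$, while the stated bound at $\epsilon=2\sigma$ is approximately $2e^{-\lambda^{2}}$, which is strictly smaller once $\lambda\ge 2$. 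The lemma should therefore be restated with a model-dependent constant --- e.g.\ your $2\exp\bigl(-\epsilon^{2}/(2(C\beta^{2}n+\beta\epsilon/3))\bigr)$ with $C=\lambda_{\max}^{2}/2$ --- which changes nothing else in the paper, since \Cref{lemma:local_property_convergence} only needs some exponent of order $n$.
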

\begin{proof}

    We consider a random process in which we seqentially reveal whether two nodes are connected. For each agent $i$, we sequentially reveal whether $i$ is connected to nodes with larger index. Note that starting from $i=1$ and going to $i=n$ reveals the whole network. 

    Let $M_{j}$ denote the realised connection patterns generated by the revelation process upto step $j$. The process $X_{j} = E(F(g^{n})|M_{j})$ is a martingale\footnote{To see this, note that it's sufficient to show that $E( E(F(g^{n})  |M_{j+1}|M_{j} ) = E(F(g^{n})  |M_{j})$, which holds almost by definition}. Also, we know that $| E( F(g^{n})|M_{j} ) -E( F(g^{n})|M_{j+1}  )| < \beta$. This holds since revelation of connections for steps later than $j+1$ are independent of realizations before. Therefore, given any $M_{j}$, the distributions over connections revealed after $j+1$-th step is the same as conditioning on $M_{j+1}$. By construction of the process, $M_{j}$ and $M_{j+1}$ can only differ at most by deletion of one link. This inequality along with Azuma's inequality produces the result.   
\end{proof}
To show \Cref{lemma:local_property_convergence}, we need to show that $N_{t,d,\lambda,m}(g^{n})$ are Lipschitz, which is shown in the following lemma. 
\begin{lemma} \label{lemma:counting_is_liptiz}
    Let $N_{t,d,\lambda,m}(g^{n})$  is $2m+2$-Lipschitz.
\end{lemma}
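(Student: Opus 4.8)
The plan is to bound directly the number of node indicators that can flip when a single edge is removed. Since the relation ``related by a deletion'' is symmetric, it suffices to take $g_2^n = g_1^n \setminus \{e\}$ for an edge $e=\{u,v\}$ and show $|N_{t,d,\lambda,m}(g_1^n) - N_{t,d,\lambda,m}(g_2^n)| \le 2m+2$. I would write $N_{t,d,\lambda,m}(g) = \sum_i \mathbf{1}_i(g)$, where $\mathbf{1}_i(g)$ is the indicator that node $i$ has type $t$, connectedness $\lambda$, degree $d$, and lies on a component of size exactly $m$ in $g$. Then $|N_{t,d,\lambda,m}(g_1^n) - N_{t,d,\lambda,m}(g_2^n)| \le \#\{i : \mathbf{1}_i(g_1^n) \ne \mathbf{1}_i(g_2^n)\}$, so the task reduces to counting the nodes whose indicator can change.

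Next I would isolate the two sources of change. Type and connectedness are intrinsic to a node and are untouched by edge deletion, so a flip requires a change in either the degree condition or the component-size condition. The degree changes only at the two endpoints $u$ and $v$, each losing one neighbour, so at most two nodes can flip on account of the degree condition. The component-size condition can change only for nodes in the component $K$ containing $u$ and $v$ before the deletion: deleting $e$ either leaves $K$ connected, in which case no node changes component size and the total number of flips is at most two, or splits $K$ into two pieces of sizes $m_1,m_2$ with $m_1+m_2 = |K|$.

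It remains to bound the flips in the splitting case, which is the crux. If $|K| = m$ then both pieces have size strictly less than $m$, so all $m$ nodes of $K$ leave the size-$m$ class, giving at most $m$ flips. If $|K| \ne m$, a node can only enter the size-$m$ class, and this happens precisely for the nodes of a piece whose size equals $m$; at most one piece can have size $m$ unless $|K| = 2m$ and $m_1 = m_2 = m$, in which case both pieces newly qualify and up to $2m$ nodes flip. Hence the number of nodes changing component-size status is at most $2m$, and combining with the at most two degree flips yields the claimed constant $2m+2$. The main obstacle is exactly this worst-case split: one must notice that a single deletion can simultaneously create two size-$m$ components when $|K| = 2m$, which is what forces the Lipschitz constant up from $m$ to $2m$; a final check is that the symmetric edge-addition (merge) direction, where two size-$m$ components can fuse into one, produces the same count and hence the same bound.
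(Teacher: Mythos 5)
Your proof is correct and follows essentially the same argument as the paper: at most two nodes can flip due to the degree change at the edge's endpoints, and the component-size condition can change for at most $2m$ nodes, with the worst case being a component of size $2m$ splitting into two pieces of size $m$. Your write-up is in fact somewhat more careful than the paper's (isolating the indicator flips, handling the $|K|=m$ versus $|K|\ne m$ cases explicitly, and noting the symmetric merge direction), but the underlying approach is identical.
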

\begin{proof}
    Consider any network $g^{n}$ and delete some link. Note that if removing the link doesn't change the component structure. Then it can only change the degree of two agents and therefore will only change the number of nodes with the described property by 2.

    If it changes the component structure. It can only break some component into 2. If that component that's broken is of size $m$, then it will change the number of nodes with the property described by at most $m+2$. If it's of size more than $m$, it's possible that it's broken into two compoenents of size $m$. Therefore, it can change the number of nodes with the desired property by at most $2m+2$. 
\end{proof}

Combining \Cref{lemma:local_property}, \Cref{lemma:lipschitz_function_converge} and \Cref{lemma:counting_is_liptiz} gives \Cref{lemma:local_property_convergence}. Using \Cref{lemma:local_property_convergence}, we proof \Cref{lemmaprobongiant} in the following two lemmas, also using the following remark. 

\begin{rem} \label{rem:what_is_rho}
$\rho(\infty|t, d,\lambda)$ is given by
    \[1- ( \frac{\sum_{\lambda'} (1-\rho(\infty|t,\lambda')) \gamma_{t}f_{t}(\lambda')\lambda' +(1-\rho(\infty|t',\lambda')) q \gamma_{t'}f_{t'}(\lambda')\lambda' }{\gamma_{t}E_{t}(\lambda) + q\gamma_{t'}E_{t'}(\lambda) }  )^{d}\] Note that this expression does not depend on $\lambda$, so $\rho(\infty|t,d)$ is equal to the same expression. 
\end{rem}
\begin{proof}
This is standard in the study of branching processes (see. e.g. sections 5 and 6 in \citet{bollobas2007phase}) and I provide a brief argument here. The process never dies out if and only if at least one process that continues from one of the second generation offsprings does not die out. 

Conditional on the initial node having $(t,\lambda)$, each second generation node has type $(t,\lambda')$ with probability $\frac{  \gamma_{t}f_{t}(\lambda')\lambda'}{\gamma_{t}E_{t}(\lambda) + q\gamma_{t'}E_{t'}(\lambda)  }  $ and $(t',\lambda')$ with probability $\frac{\gamma_{t'}f_{t'}(\lambda')q\lambda' }{\gamma_{t}E_{t}(\lambda) + q\gamma_{t'}E_{t'}(\lambda)  }  $, independent of types and connectedness of other nodes.
    From construction, we know that if a second generation node has type $(t,\lambda)$, then the process that continues from it behaves in the same way as $\mathbf{B}$ with an initial node of type $t$ and connectedness $\lambda$. Therefore, each second generation node with type $t$ and connectedness $\lambda$ has a probability of dying out $1- \rho(\infty|t,\lambda)$. Since all nodes produce offsprings mutually independently from each other, conditional on types and connectedness of the second generation nodes, the processes that continue from them are also mutually independent.  Then, the expression follows. 
\end{proof}

 	\begin{lemma} \label{lemma:probability_on_L} 
		Among receivers having degree $d$, type $t$ and connectedness $\lambda$ with $f_{t}(\lambda)>0$, the proportion of those on the $L_{1}$ converges in probability to $\rho(\infty|t,d)$, which is increasing in $d$.
	\end{lemma}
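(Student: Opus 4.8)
The plan is to identify ``lying on the giant component $L_1$'' with ``lying on a component of large size,'' and then to read off the limiting proportion on large components from the branching-process approximation in \Cref{lemma:local_property_convergence}. Throughout I fix $t$, $d$ and $\lambda$ with $f_t(\lambda)>0$, and write $N^{t,d,\lambda}_{L_1}(g^n)$, $N^{t,d,\lambda}_{\ge M}(g^n)$, $N^{t,d,\lambda}_{<M}(g^n)$ and $N^{t,d,\lambda}(g^n)$ for the numbers of nodes of this type, degree and connectedness that lie, respectively, on $L_1$, on a component of size at least $M$, on a component of size strictly less than $M$, and in total. Recall from \Cref{rem:what_is_rho} that $\rho(\infty|t,d,\lambda)$ does not depend on $\lambda$, so the target limit may be written $\rho(\infty|t,d)$.

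First I would assemble three ingredients. Summing \Cref{lemma:local_property_convergence} over the finitely many values $m<M$ gives $N^{t,d,\lambda}_{<M}(g^n)/n \xrightarrow{p} \gamma_t f_t(\lambda)\,Poisson(d;D(t,\lambda))\,\rho(<M\mid t,d,\lambda)$, and by \Cref{rem:degree_dist} (together with the deterministic-connectedness reduction used above) the total $N^{t,d,\lambda}(g^n)/n \xrightarrow{p} \gamma_t f_t(\lambda)\,Poisson(d;D(t,\lambda)) =: K$, which is strictly positive since $f_t(\lambda)>0$. Hence $N^{t,d,\lambda}_{\ge M}(g^n)/n \xrightarrow{p} K\bigl(1-\rho(<M\mid t,d,\lambda)\bigr)$. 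Second, from the proof of \Cref{lemma:only_observe_viral} I have $N_{<M}(g^n)/n \xrightarrow{p}\rho(<M)$ and $|L_1(g^n)|/n\xrightarrow{p}\rho(\infty)$, so the fraction of \emph{all} nodes on non-giant components of size at least $M$ satisfies $\bigl[n-N_{<M}(g^n)-|L_1(g^n)|\bigr]/n\xrightarrow{p}1-\rho(<M)-\rho(\infty)$, a quantity that tends to $0$ as $M\to\infty$ because $\rho(<M)\to 1-\rho(\infty)$.

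Next I would sandwich $N^{t,d,\lambda}_{L_1}$. Since $|L_1|\to\infty$, for $n$ large every node on $L_1$ sits on a component of size at least $M$, so $N^{t,d,\lambda}_{L_1}\le N^{t,d,\lambda}_{\ge M}$; and the excess $N^{t,d,\lambda}_{\ge M}-N^{t,d,\lambda}_{L_1}$ counts type-$(t,d,\lambda)$ nodes on non-giant components of size at least $M$, hence is bounded by $n-N_{<M}-|L_1|$. Dividing the resulting inequalities by $N^{t,d,\lambda}(g^n)\to Kn$, for any $\epsilon>0$ I first choose $M$ so large that both $1-\rho(<M)-\rho(\infty)<\epsilon$ and $\bigl|(1-\rho(<M\mid t,d,\lambda))-\rho(\infty\mid t,d,\lambda)\bigr|<\epsilon$, and then let $n\to\infty$. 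This shows the proportion of type-$(t,d,\lambda)$ nodes on $L_1$ converges in probability to $\rho(\infty\mid t,d,\lambda)=\rho(\infty\mid t,d)$. I expect this step to be the main obstacle: everything hinges on ruling out a non-vanishing mass of medium-sized components, which is precisely what the computation $1-\rho(<M)-\rho(\infty)\to 0$ delivers, leaning on the already-established identities $|L_1|/n\to\rho(\infty)$ from \Cref{sadlermain} and $N_{<M}/n\to\rho(<M)$.

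Finally, monotonicity in $d$ is immediate from \Cref{rem:what_is_rho}. Writing $\theta_t$ for the bracketed base, note that replacing each factor $1-\rho(\infty\mid\cdot,\lambda')\in[0,1]$ by $1$ turns the numerator into exactly the denominator $\gamma_t E_t(\lambda)+q\gamma_{t'}E_{t'}(\lambda)$, so $\theta_t$ is a convex combination of numbers in $[0,1]$ and hence $\theta_t\in[0,1]$. Therefore $\rho(\infty\mid t,d)=1-\theta_t^{\,d}$ is weakly increasing in $d$, completing the proof.
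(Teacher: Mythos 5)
Your proof is correct and follows essentially the same route as the paper's: both use \Cref{lemma:local_property_convergence} to pin down the limiting mass on components below a size cutoff, use the convergence of $N_{<M}(g^{n})/n$ and $|L_{1}(g^{n})|/n$ (Theorem 9.1 of Bollobas et al.) to show that the mass on non-giant components of size at least $M$ vanishes, sandwich the proportion on $L_{1}$ between these two quantities, and read monotonicity in $d$ off \Cref{rem:what_is_rho}. The only cosmetic differences are that you normalise by $n$ and divide by the limit $K$ rather than normalising by $N_{t,d,\lambda}$ directly, and your unproven assertion that $|L_{1}|\to\infty$ (used for $N^{t,d,\lambda}_{L_{1}}\le N^{t,d,\lambda}_{\ge M}$) is harmless since otherwise $N^{t,d,\lambda}_{L_{1}}\le M=o(n)$ and the bound holds trivially.
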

	\begin{proof} 
		Let $N_{t,d,\lambda,\le m}(g^{n})$ be the number of nodes that have type $t$, degree $d$, connectedness $\lambda$ and is on a component of size smaller than $m$. Let $N_{t,d,\lambda}(g^{n})$ be the number of nodes that have type $t$, degree $d$ and connectedness $\lambda$. 
		
		By \Cref{lemma:local_property_convergence}, $N_{t,d,\lambda,\le m}(g^{n})/N_{t,d,\lambda}(g^{n})$ converges in probability to $ \rho_{D}(\le m|t,d,\lambda)$. Let $N_{t,d,\lambda,L_{1}}(g^{n})$ denote the number of nodes that have type $t$, degree $d$, connectedness $\lambda$ and are on the largest component $L_{1}$. 

By Theorem 9.1 from \citet{bollobas2007phase}, $\frac{N_{>m}(g^{n}) - N_{L_{1}}(g^{n})  }{N_{t,d,\lambda}}$ converges to in probability to $\frac{1- \rho(\le m) - \rho(\infty)}{\gamma_{t}f_{t}(\lambda)PO(d;D(t,\lambda)) }$. For any $\epsilon>0$, we can find some $\overline{m}$ large enough such that  $\frac{1- \rho(\le m) - \rho(\infty)}{\gamma_{t}f_{t}(\lambda)PO(d;D(t,\lambda)) }<\epsilon$ for all $m \ge \overline{m}$. We can also fine $\overline{m}$ large enough such that $ 1- \rho_{D}(\le \overline{m}|t,\lambda,d) \le \rho_{D}(\infty|t,\lambda,d) + \epsilon$.
		By definition, we know that 
		\[
		\frac{N_{t,d,\lambda,L_{1}}(g^{n})}{N_{t,d,\lambda}} + \frac{N_{t,d,\lambda,\le \overline{m}}(g^{n})}{N_{t,d,\lambda}} +  \frac{ - N_{t,d,\lambda,L_{1}}(g^{n}) + N_{t,d,\lambda,> \overline{m}}(g^{n})  }{N_{t,d,\lambda} }                               =1 
		\]
		The probability that $\frac{N_{t,d,\lambda,\le \overline{m}}(g^{n})}{N_{t,d,\lambda}}$ is smaller than $\rho_{D}(\le \overline{m}|t,d,\lambda) - \epsilon$ converges to 0. Therefore, the probability that 	
		\[ \frac{N_{t,d,\lambda,L_{1}}(g^{n})}{N_{t,d,\lambda}} \ge \rho_{D}(\infty|t,d,\lambda) + 2\epsilon \ge 1- \rho_{D}(\le \overline{m}|t,d,\lambda) + \epsilon  \] converges to 0. Also, since the probabilities that $\frac{N_{t,d,\lambda,\le \overline{m}}(g^{n})}{N_{t,d,\lambda}}>\rho_{D}(\le \overline{m}|t,d,\lambda) + \epsilon$ and $\frac{ - N_{t,d,\lambda,L_{1}}(g^{n}) + N_{t,d,\lambda,> \overline{m}}(g^{n})  }{N_{t,d,\lambda} } > \epsilon $ converge to 0.  Therefore, the probability that 
		\[
		\frac{N_{t,d,\lambda,L_{1}}(g^{n})}{N_{t,d,\lambda}} \le 1- \rho_{D}(\le \overline{m}|t,d,\lambda) - 2\epsilon 
		\]
		converges to 0. Therefore, the probability that $\frac{N_{t,d,\lambda,L_{1}}(g^{n})}{N_{t,d,\lambda}} \le \rho_{D}(\infty|t,d,\lambda) - 2\epsilon $ converges to 0. Since $\epsilon$ is arbitrary, this shows that $\frac{N_{t,d,\lambda,L_{1}}(g^{n})}{N_{t,d,\lambda}}$ converges in probability to $\rho_{D}(\infty|t,d,\lambda)$. This also implies that $\frac{N_{t,d,L_{1}}(g^{n})}{N_{t,d}}$ converges in probability to $\rho_{D}(\infty|t,d)$. The fact that $\rho_{D}(\infty|t,d)$ is increasing in $d$ then follows from remark \ref{rem:what_is_rho}. 
	\end{proof}

	\begin{lemma} \label{lemma:probability_on_hatL} 
		Among receivers having degree $d$ and type $t$, the proportion of those with a neighbour on the $ \hat{L}_{1}$ converges in probability, and the limit is increasing in $d$
	\end{lemma}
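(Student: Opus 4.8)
The plan is to reduce everything to the believer subnetwork (the transformed network with $\lambda_i$ set to $0$ for all type $l$ nodes, as justified before \Cref{cor:giant_in_h_subnetwork}) and then split according to the type of the node in question, using the local weak convergence of \Cref{lemma:local_property} as the engine. I will write $\hat\xi$ for the forward survival probability of the believer branching process, i.e. the probability that the believer-subtree of a type $h$ node reached along an edge (not counting the edge back to its parent) is infinite; $\hat\xi$ is the maximal root of the fixed-point equation $1-\hat\xi=\sum_k \hat r_k (1-\hat\xi)^k$, where $\hat r_k$ is the forward believer-offspring distribution, exactly as in \Cref{rem:what_is_rho} applied to the believer subnetwork.

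First, the type $h$ case. Since $|\hat L_1|$ grows linearly by \Cref{cor:giant_in_h_subnetwork}, a type $h$ node lies on $\hat L_1$ iff it has a believer-neighbour on $\hat L_1$: any node on a component of size at least two has a neighbour in it, and conversely a believer-neighbour on $\hat L_1$ pulls the node onto $\hat L_1$. Hence ``has a neighbour on $\hat L_1$'' coincides with ``is on $\hat L_1$'' for type $h$ nodes. Applying \Cref{lemma:probability_on_L} to the believer subnetwork, among type $h$ nodes of believer-degree $d_h$ the proportion on $\hat L_1$ converges in probability to $\hat\rho(\infty\mid h,d_h)=1-(1-\hat\xi)^{d_h}$, increasing in $d_h$. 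As membership in $\hat L_1$ depends only on the believer subnetwork, it is independent of the cross-edges given $d_h$; and by \Cref{rem:degree_dist} the believer-degree given total degree $d$ is $\mathrm{Binom}(d,q_h)$. Summing over the finitely many $d_h\le d$ yields convergence in probability of the proportion to $\hat\zeta(h,d)=\sum_{d_h}\mathrm{Binom}(d_h;d,q_h)\bigl[1-(1-\hat\xi)^{d_h}\bigr]=1-(1-q_h\hat\xi)^d$, increasing in $d$.

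Second, the type $l$ case. A type $l$ node is isolated in the believer subnetwork, so it has a neighbour on $\hat L_1$ iff at least one of its type $h$ neighbours does; by \Cref{rem:degree_dist} the number of type $h$ neighbours is $\mathrm{Binom}(d,1-q_l)$, and in the branching-process limit the believer-subtrees hanging off distinct neighbours are independent. The crucial point is that a type $h$ node reached along any edge has believer-offspring count $\mathrm{Binom}(d_j-1,q_h)$ (its type $h$ neighbours among its $d_j-1$ non-parent edges), \emph{irrespective of whether the parent is type $h$ or type $l$}, because the parent edge is excluded in both cases; hence such a neighbour lies on $\hat L_1$ with the same forward survival probability $\hat\xi$. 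Therefore the probability that none of the type $h$ neighbours lies on $\hat L_1$ converges to $(1-(1-q_l)\hat\xi)^d$, giving $\hat\zeta(l,d)=1-(1-(1-q_l)\hat\xi)^d$, again increasing in $d$. This identity of forward probabilities is exactly what the first part of \Cref{lemma:probability_on_giant_homophily} invokes, since it makes $\hat\zeta(h,d)>\hat\zeta(l,d)$ hold iff $q_h>1-q_l$.

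The main obstacle is that membership in $\hat L_1$ is not a local property, so \Cref{lemma:local_property} does not apply to it directly; this is the same difficulty resolved in \Cref{lemma:probability_on_L}, and I will handle it with the same truncation device. I approximate ``$j$ on $\hat L_1$'' by the local event ``$j$'s believer-component has size exceeding $\overline m$'', bound the mass of type $h$ nodes lying on non-giant components of size above $\overline m$ using \Cref{cor:giant_in_h_subnetwork} and the estimate $N_{>\overline m}-N_{\hat L_1}=o(n)$ from \citet{bollobas2007phase} (as in the proof of \Cref{lemma:probability_on_L}), and then bound the number of type $l$ nodes acquiring such a medium-component neighbour. Letting $n\to\infty$ and then $\overline m\to\infty$, with concentration of the relevant Lipschitz counting functions supplied by \Cref{lemma:lipschitz_function_converge}, squeezes each proportion to the stated limit. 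The bulk of the careful work lies in verifying the Lipschitz bounds and in showing the truncation error (from medium-sized components, and from the asymptotic independence of distinct neighbours' subtrees) vanishes; the monotonicity conclusions are then immediate from the closed forms.
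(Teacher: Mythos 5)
Your decomposition and your limits are exactly those of the paper's proof: reduce to the believer subnetwork (equivalently, set $\lambda_i=0$ for all sceptics), apply \Cref{lemma:probability_on_L} to that subnetwork, use that for a type $h$ node ``having a neighbour on $\hat L_1$'' coincides with ``being on $\hat L_1$'', mix over the $\mathrm{Binom}(d,q_h)$ (resp.\ $\mathrm{Binom}(d,1-q_l)$) number of believer neighbours, and invoke the key identity that a believer reached along an edge lies on $\hat L_1$ with the same forward probability $\hat\xi$ whether the parent is a believer or a sceptic --- this is precisely the paper's step
$1-\bigl(\sum_{\lambda'}\lambda' f_h(\lambda')(1-\hat\rho(\infty|h,\lambda'))/\sum_{\lambda'}\lambda' f_h(\lambda')\bigr)^{d_h}=\hat\rho(\infty|d_h,h)$ via \Cref{rem:what_is_rho}. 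Your closed forms $1-(1-q_h\hat\xi)^d$ and $1-(1-(1-q_l)\hat\xi)^d$ are the binomial-theorem rewriting of the paper's mixtures $\sum_{d_h}\mathrm{Binom}(d_h;d,\cdot)\,\hat\rho(\infty|d_h,h)$, and monotonicity from the closed forms is equivalent to the paper's FOSD-of-binomials argument.

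The one step that would fail as written is the concentration argument for the type $l$ count. You propose to apply \Cref{lemma:lipschitz_function_converge} to the counting function ``number of type $l$ nodes with a neighbour in a believer component of size exceeding $\overline m$''. That function is not Lipschitz with a constant depending only on $\overline m$: deleting a single believer--believer edge can detach a piece of size at most $\overline m$ from a large component, and every sceptic whose only large-component neighbours lay in that piece loses the property; since the believers in the detached piece may have arbitrarily many sceptic neighbours, the change in the count is not bounded by any function of $\overline m$ alone. This is unlike the paper's $N_{t,d,\lambda,m}$, which counts nodes \emph{on} (not adjacent to) components and is genuinely $(2m+2)$-Lipschitz (\Cref{lemma:counting_is_liptiz}). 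The paper sidesteps the problem for sceptics: it conditions on the realised believer subnetwork, notes that the cross edges of distinct sceptics are then independent Bernoulli variables, computes the conditional probability of having no neighbour on $\hat L_1$ as a product over $\lambda'$ of binomials converging to the Poisson/forward-probability expression, and obtains concentration from the law of large numbers across sceptics --- this conditioning is also what makes your appeal to ``asymptotic independence of distinct neighbours' subtrees'' rigorous, rather than an error term to be estimated. Your route can be repaired by additionally truncating believer degrees (the expected number of cross edges incident to believers of degree above some $\overline d$ vanishes, as in the second half of the paper's proof of \Cref{lemma:only_observe_viral}), but the Lipschitz bound you lean on does not exist without that extra step.
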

	\begin{proof}
    Using the argument preceeding \Cref{cor:giant_in_h_subnetwork} again, the distribution over subnetwork among type $h$ nodes is the same as the distribution over networks after setting $\lambda_{i}=0$ for all type $l$ nodes. I use $\hat{\rho}(\infty)$ to denote the probability that the associated process never dies out, and similarly $\hat{\rho}(\infty|\cdot)$ to denote the conditional probabilities. 

Let $N_{h,d_{h},\lambda}$ denote the number of nodes with type $h$, $d_{h}$ type $h$ neighbours and connectedness $\lambda$, and let $N_{h,d_{h},\lambda,\hat{L}_{1}}$ denote those that in addition are on $\hat{L}_{1}$. 
     Using \Cref{lemma:probability_on_L}, we get that $\frac{N_{h,d_{h},\lambda,\hat{L}_{1}} (g^{n})}{N_{h,\lambda,d_{h} }}$ converges in probability to $\hat{\rho}(\infty|h,d_{h},\lambda)$.

    A node of type $h$ and connectedness $\lambda$ is connected to some randomly selected type $l$ node with probability $\frac{\lambda E_{l}(\lambda) q  }{n}$. Also, whether it's connected is independent of all the other connections. This implies that the probability that it has $d_{l}$ type $l$ neighbours is $Binom(d_{l};N_{l}, \frac{\lambda \sum_{\lambda'} \lambda' f_{l}(\lambda') q  }{n})$, where $N_{l}$ is the number of type $l$ receivers. 
    Therefore, the number of nodes with type $h$, connectedness $\lambda$, $d_{h}$ high type neighbours, $d_{l}$ type $l$ neighbours and on $\hat{L}_{1}$ can be seen as the random variable formed in the following way: We first form the type $h$ subnetwork as described above, and then take $N_{h,d_{h},\lambda,\hat{L}_{1}}$ independent draws of the binomial random variable with $N_{l}$ trials and success probability $\frac{\lambda E_{l}(\lambda) q  }{n}$. 

    We know that $Binom(d_{l};N_{l}, \frac{\lambda E_{l}(\lambda) q  }{n})$ converges to $PO(d_{l};\gamma_{l} \lambda E_{l}(\lambda) q  )$, and $N_{h,d_{h},\lambda,\hat{L}_{1}}$ goes to infinity as $n$ goes to infinity, therefore
    \[
    \frac{N_{h,d_{h},d_{l},\lambda,\hat{L}_{1}}}{n} \rightarrow_{p}  \gamma_{h}f_{h}(\lambda)PO(d_{h} ; \lambda \gamma_{h}E_{h}(\lambda) )\hat{\rho}(\infty |d_{h},h,\lambda) PO(d_{l} , \lambda q 
    \gamma_{l}E_{l}(\lambda) )
    \]
    Similarly, we can use this logic to get $N_{h,\lambda,d_{h},d_{l} }$, i.e., We first form the type $h$ subnetwork as described above, and then take $N_{h,d_{h},\lambda}$ independent draws of the binomial random variable with $N_{l}$ trials and success probability $\frac{\lambda E_{l}(\lambda) q  }{n}$. We have
     \[
    \frac{N_{h,\lambda,d_{h},d_{l} } }{n} \rightarrow_{p} \gamma_{h}f_{h}(\lambda)PO(d_{h} ; \lambda  \gamma_{h}E_{h}(\lambda) ) PO(d_{l} , \lambda q \gamma_{l}E_{l}(\lambda) )
    \]
    From these, we have 
    $\frac{N_{h,\lambda,d_{h},d_{l},\hat{L}_{1}}}{N_{h,\lambda,d_{h},d_{l} }} \rightarrow_{p} \hat{\rho}(\infty |d_{h},h,\lambda) = \hat{\rho}(\infty |d_{h},h) $,
    which also implies that $\frac{N_{h,d_{h},d_{l},\hat{L}_{1}}}{N_{h,d_{h},d_{l} }}$ converges in probability to $\hat{\rho}(\infty |d_{h},h)$. Since $\frac{N_{h,d_{l},d_{h}} }{N_{h,d}}$ converges in probability to $Binom(d_{h};d,q_{h}  )$, we know that
    \begin{equation} \label{eq: hatl_eq1}
        \frac{N_{h,d,\hat{L}_{1}}(g^{n}) }{N_{h,d}(g^{n})} \rightarrow_{p} \sum_{d_{h}} Binom(d_{h};d,q_{h}  ) \hat{\rho}(\infty |d_{h},h)
    \end{equation}
    
    This concludes the proof for type $h$ nodes and we now consider type $l$ nodes. Let $N_{h,\lambda',\hat{L}_{1}}(g^{n})$ denote nodes with type $h$, connectedness $\lambda'$ and are on $\hat{L}_{1}$, and let $N_{h,\lambda'}(g^{n})$ denote nodes with type $h$, connectedness $\lambda'$.
    
    The probability that a node of type $l$ and connectedness $\lambda$ is connected to a node of type $h$ and connectedness $\lambda'$ is $\frac{\lambda \lambda' q }{n}$. 
    Given types and connectedness of a pair of nodes, whether they are connected are independent of the types of all other nodes, and formation of all other links. Therefore, conditional on the type $h$ subnetwork, a node of type $l$, connectedness $\lambda$ and $d_{l}$ type $l$ neighbours is connected to $0$ nodes on $\hat{L}_{1}$ and $k$ type $h$ nodes not on $\hat{L}_{1}$ with probability
    \begin{equation} \label{eq:hatl_eq2}
          \Pi_{\lambda'}Binom(0; N_{h,\lambda',\hat{L}_{1}} , \frac{\lambda \lambda' q }{n})
    \sum_{\sum_{\lambda'} k_{\lambda'}=k } \Pi_{\lambda'} Binom(k_{\lambda'}; N_{h,\lambda'} - N_{h,\lambda',\hat{L}_{1}} ,  \frac{q\lambda \lambda'}{n})  
    \end{equation}
    Since $\frac{N_{h,\lambda',\hat{L}_{1}}}{n}$ and $\frac{N_{h,\lambda'}}{n}$ converges in probability to $\gamma_{h}f_{h}(\lambda')\hat{\rho}(\infty|h,\lambda') $ and $\gamma_{h}f_{h}(\lambda') $ respectively, using properties of Poisson distribution, we know that  \ref{eq:hatl_eq2} converges to 
    \[
   PO(k; \sum_{\lambda'} \lambda \lambda' q \gamma_{h}f_{h}(\lambda')) Binom(k;k,\frac{\sum_{\lambda'} \lambda' f_{h}(\lambda')(1-\hat{\rho}(\infty|h,\lambda'))}{\sum_{\lambda'}  \lambda'  f_{h}(\lambda')})
    \]
    Using this, and the fact that $N_{l,\lambda,d_{l}}$ goes to infinity as $n$ goes to infinity, we have
    \[
    \frac{N_{l,d_{l},d_{h},\lambda } - N_{l,d_{l},d_{h},\lambda,\hat{L}_{1}}}{N_{l,d_{l},\lambda}} \rightarrow_{p} PO(d_{h}; \sum_{\lambda'} \lambda \lambda' q \gamma_{h}f_{h}(\lambda')) Binom(d_{h};d_{h},\frac{\sum_{\lambda'} \lambda' f_{h}(\lambda')(1-\hat{\rho}(\infty|h,\lambda'))}{\sum_{\lambda'}  \lambda'  f_{h}(\lambda')})
    \]
Similarly, conditional on the type $h$ subnetwork, a node of type $l$, connectedness $\lambda$ and $d_{l}$ type $l$ neighbours is connected to $k$ type $h$ nodes with probability
    \[
    \sum_{\sum_{\lambda'} k_{\lambda'}=k } \Pi_{\lambda'} Binom(k_{\lambda'}; N_{h,\lambda'} , \frac{q\lambda \lambda'}{n} )  
    \]
    which converges to $PO(k; \sum_{\lambda'} \lambda \lambda' q \gamma_{h}f_{h}(\lambda')) $. 
    
    Therefore, $\frac{N_{l,d_{l},d_{h},\lambda}}{N_{l,d_{l},\lambda}}$ converges in probability to $PO(d_{h}; \sum_{\lambda'} \lambda \lambda' q \gamma_{h}f_{h}(\lambda'))$, and
    \[
    \frac{N_{l,d_{l},d_{h},\lambda,\hat{L}_{1}} }{N_{l,d_{l},d_{h},\lambda}} \rightarrow_{p} 1-(\frac{\sum_{\lambda'} \lambda' f_{h}(\lambda')(1-\hat{\rho}(\infty|h,\lambda'))}{\sum_{\lambda'}  \lambda'  f_{h}(\lambda')} )^{d_{h}} = \hat{\rho}(\infty|d_{h},h)
    \]
    where the last equality follows from \Cref{rem:what_is_rho} and constuction of the network formatio process associated with $\hat{\rho}(\cdot)$. This implies that $\frac{N_{l,d_{l},d_{h},\hat{L}_{1}} }{N_{l,d_{l},d_{h}}} = \hat{\rho}(\infty|d_{h},h)$, and 
    \[
    \frac{ N_{l,d,\hat{L}_{1}} }{N_{l,d} } \rightarrow_{p} \sum_{d_{h}} Binom(d_{h};d, 1-q_{l})\hat{\rho}(\infty|d_{h},h)
    \]
    The fact that both these limits are increasing follows since $\rho(\infty|d_{h},h)$ is increasing in $d_{h}$, and binomial distribution with a larger number of trials FOSD one with a smaller number of trials, given the same success probability. 
	\end{proof}

\linespread{0.8}
\small
\bibliographystyle{abbrvnat}
\addcontentsline{toc}{section}{References}
\bibliography{main}

\newpage
\linespread{1}
\normalsize

\section{Appendix for Online Publication Only}

		\subsection{Existence of limit}
I show that for $n$ large enough, sender's payoff can be arbitrarily close to the limsup of $V_{net}^{*}(n)$. To do that I first prove a useful lemma, 
\begin{lemma} \label{lemmanobadsignal}
	For any $\beta'>0$, there exists some $n_{1}$ such that for all games with $n \ge n_{1}$, if some sender's strategy achieves some payoff $V$ for the sender, then there exists another strategy $\Pi_{2}^{n}$ that achieves at least $V-\beta'$ and
	satisfies \begin{enumerate}
		\item $\Pi_{2}^{n}$ uses no signal that convinces neither type of receivers. 
		\item For all $s \in S_{good}$, it must be the case that $\pi_{3}(s|\omega=1)b(\omega=1)\mu_{l}(\omega=1) = \pi_{3}(s|\omega=0)b(\omega=0)\mu_{l}(\omega=0)$, where signals in $S_{good}$ persuades both groups. 
	\end{enumerate} 
\end{lemma}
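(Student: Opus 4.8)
The plan is to build $\Pi_2^n$ from the given strategy in two stages: first discard the signals in $S_{bad}$ (those persuading neither type), then rebalance the signals in $S_{good}$, showing that the first stage costs at most $\beta'$ once $n$ is large while the second stage is a weak improvement. Throughout I would lean on \Cref{lemma:non_empty_posterior} (posteriors upon a non-empty signal are state-independent in the reach probability) and \Cref{rem:existence_of_eq} (the subgame outcome is pinned down by posteriors), together with the seed bound $|Z(g^n,s)|\le n^{\alpha}$ with $\alpha<1$ and uniform continuity of $v$ on $[0,1]$.

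\textbf{First stage (eliminating $S_{bad}$).} Any $s''\in S_{bad}$ induces action $0$ from everyone who observes it, so by the sharing rule it is never passed on; hence only seeds ever observe it. I would therefore fold all of $S_{bad}$ into $\emptyset$, setting $\pi'(\emptyset\mid\omega)=\pi(\emptyset\mid\omega)+\sum_{s''\in S_{bad}}\pi(s''\mid\omega)$ and $\pi'(s''\mid\omega)=0$, leaving the seeding and all other signals untouched. The crucial observation is that a non-seed observes $\emptyset$ under $\pi'$ on exactly the realizations on which it observed $\emptyset$ under $\pi$ (it never observed any $s''$), and a seed propagates nothing whether it sees $s''$ or $\emptyset$; so by \Cref{lemma:non_empty_posterior} every non-seed keeps identical posteriors and, by \Cref{rem:existence_of_eq}, an identical equilibrium action. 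Only the at most $n^{\alpha}$ seeds can change behaviour, so the action-$1$ fraction moves by at most $n^{\alpha}/n\to 0$; by uniform continuity of $v$ I can choose $n_1$ so that the payoff change is below $\beta'$ for all $n\ge n_1$.

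\textbf{Second stage (rebalancing $S_{good}$).} Fix a good signal $s$ that strictly persuades the sceptic, $\pi(s\mid\omega=1)\mu_{l}(1)>\pi(s\mid\omega=0)\mu_{l}(0)$. I would continuously raise $\pi(s\mid\omega=0)$ while lowering $\pi(\emptyset\mid\omega=0)$ until the sceptic is exactly indifferent upon observing $s$, which (since the reach probability is state-independent by \Cref{lemma:non_empty_posterior}) is precisely the displayed balance in condition 2. At that point $s$ still persuades both types — the sceptic by the tie-breaking convention, the believer strictly — so the receivers taking action $1$ upon $s$ are unchanged. This shift weakly raises the payoff for two complementary reasons: raising $\pi(s\mid\omega=0)$ strictly lowers $\Pr(\emptyset\mid\omega=0,t,d)$ while fixing $\Pr(\emptyset\mid\omega=1,t,d)$, so every receiver's posterior on state $1$ upon $\emptyset$ weakly rises and $\emptyset$-actions can only switch from $0$ to $1$ (helping both the $\omega=0$ and $\omega=1$ terms of the objective); and, in any network realization, the action-$1$ fraction when $s$ is realized weakly exceeds that when $\emptyset$ is realized (reached receivers take action $1$ instead of their weakly smaller $\emptyset$-action), so shifting mass from $\emptyset$ to $s$ loads weight on the better event. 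Iterating over the finitely many elements of $S_{good}$ delivers condition 2 with no further loss.

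\textbf{Main obstacle.} The genuine difficulty is the discontinuity of best responses at posterior $0.5$, which is what forces care in the main theorems. Here it is tamed rather than confronted: in the first stage non-seed posteriors do not move at all, and in the second stage every induced change in behaviour is monotone in the sender's favour, so I never need continuity of the payoff in the posteriors — only that $\emptyset$-actions cross the threshold from below. The one residual case is when $\pi(\emptyset\mid\omega=0)$ is exhausted before indifference is reached; then $\emptyset$ is never sent in the bad state, the sceptic's $\emptyset$-posterior is degenerate at state $1$, and the same monotone comparison shows the strategy already weakly dominates its would-be indifferent counterpart, so the conclusion holds a fortiori.
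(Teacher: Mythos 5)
Your overall plan (first remove $S_{bad}$, then push the good signals to sceptic-indifference) parallels the paper's, and your stage-2 monotone argument in the non-exhaustion case is essentially the paper's construction of $\Pi_{3}^{n}$. But there are two genuine gaps, and the second one is where most of the paper's proof actually lives.

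\textbf{Stage 1.} The claim that ``only the at most $n^{\alpha}$ seeds can change behaviour'' is false. The bound $|Z(g^{n},s'')|\le n^{\alpha}$ is per realization; across network realizations the set of receivers with \emph{positive probability} of being a seed of a bad signal can be all $n$ of them. For every such receiver $i$, folding $S_{bad}$ into $\emptyset$ enlarges the event ``$i$ observes $\emptyset$'' by the event ``$s''$ drawn and $i$ is a seed'', whose conditional likelihood ratio favours state $0$ (that is what makes $s''$ bad). Hence $i$'s posterior upon $\emptyset$ strictly \emph{falls}; and since best responses are discontinuous at $0.5$, a receiver whose $\emptyset$-posterior under the original strategy was exactly $0.5$ (action $1$ by tie-breaking) switches to action $0$ --- and does so in \emph{every} realization in which she observes $\emptyset$, not only those in which she is a seed. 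A strategy placing a positive fraction of receivers at this knife edge is admissible (condition 2 is precisely about pushing sceptics to exact indifference), so the payoff loss is not $O(n^{\alpha-1})$ and uniform continuity of $v$ does not close the step. The paper avoids this by never removing $S_{bad}$ in isolation: in its construction of $\Pi_{3}^{n}$, bad signals are dropped \emph{simultaneously} with raising the good signals' state-$0$ probabilities, so the probability of observing $\emptyset$ in state $0$ falls on net and every $\emptyset$-posterior moves weakly up, i.e.\ in the sender-favourable direction.

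\textbf{Residual (exhaustion) case.} Both of your claims here are wrong. First, $\pi(\emptyset\mid\omega=0)=0$ does not make the sceptic's $\emptyset$-posterior degenerate on state $1$: receivers observe $\emptyset$ whenever a sent signal fails to reach them, which has positive probability in state $0$; this conflates ``$\emptyset$ sent'' with ``$\emptyset$ observed'', discarding the central feature of network signals. Second, the dominance runs the wrong way: the lemma demands exhibiting a strategy \emph{satisfying condition 2} with payoff at least $V-\beta'$, and showing that your condition-2-violating strategy weakly dominates some condition-2 strategy gives no lower bound on the payoff of any admissible strategy. In the paper this is exactly the case requiring the $\Pi_{5}^{n}$ construction: state-$0$ probability is transferred away from the signals in $S_{int}$ (those persuading only believers) onto the good signals, and the verification that $\emptyset$-posteriors still move favourably is a quantitative network argument --- good signals are relayed by everyone and observed with probability roughly $\zeta(t,d)$, while $S_{int}$ signals are observed with probability roughly $\hat{\zeta}(t,d)$, with $\beta$ chosen at the outset so that $\zeta(t,d)-\hat{\zeta}(t,d)$ dominates the slack created by the transfer. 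Nothing in your proposal substitutes for this step.
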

\begin{proof}
	For any $\beta'>0$, we can find a $\hat{\beta }>0$ such that $sup( v(x+8\hat{\beta})-v(x)) < \beta'$. For this $\hat{\beta}$, we can find some $\overline{d}$ such that $\sum_{d>\overline{d}} (\gamma_{h}p_{d}^{h}+\gamma_{l}p_{d}^{l}	) < \hat{\beta} $. Fix this $\overline{d}$. We can find some $\beta < \hat{\beta}$ such that $4\beta + \frac{4\beta \zeta(t,d)}{\zeta(l,1)} \le \zeta(t,d) - \hat{\zeta}(t,d)$ for all $t$ and $d \le \overline{d}$, and $\frac{\mu_{l}(\omega=1)}{\mu_{l}(\omega=0)}\frac{1- 2 \beta \frac{\mu_{h}(\omega=1)}{\mu_{h}(\omega=0)}}{1- 2 \beta } < 1$.
	By lemmas proved in the main text, we can find some $n_{1}$ such that for all $n>n_{1}$, the following conditions are satisfied with probability at least $1-\beta$:
	\begin{enumerate}
		\item $\sum_{d > \overline{d}}  \frac{N_{h,d}+N_{l,d}}{n}  < \hat{ \beta }  $
		\item $\frac{N_{t,d}}{n}  \in [\gamma_{t}p_{d}(1-\beta) ,  \gamma_{t}p_{d} (1+ \beta)]  $ for $t\in \{h,l\}$ and all $d \le \overline{d}$. 
		\item $g^{n} \in \mathcal{A}_{n}(\beta,\overline{d} )  $
	\end{enumerate}
	
	Let $\Pi_{1}^{n}$ achieve payoff $V$. If the conditions stated in lemma are satisfied by $\Pi_{1}^{n}$. We are done. Suppose this is not the case.
	
	In the game with $n$ receivers, sender's payoff from $\Pi^{n}_{1}$ is
	
	\begin{multline} \label{sec6eq1}
		\sum_{\omega} \mu_{s}(\omega) \sum_{s \in S \cup \{\emptyset\}} \pi_{1}(s|\omega)\bigg[\sum_{g^{n}}Pr(g^{n}) v(\sum_{t,d}\frac{N_{t,d,s}(g^{n})}{n}\alpha(s,t,d;\Pi^{n}_{1}) +  (1-\frac{N_{t,d,s}(g^{n})}{n})\alpha(\emptyset,t,d;\Pi^{n}_{1})  )    
		\bigg]
	\end{multline}
	By how $n$ is chosen, \ref{sec6eq1} is smaller than
	\begin{multline} \label{sec6eq5}
		\sum_{\omega}\mu_{s}(\omega) \big[ (1-\sum_{s \in S_{good}} \pi_{1}(s|\omega)\mathbf{1}(Z(s) \cup L_{1} = \emptyset )  - 	\sum_{s \in S_{int}}\pi_{1}(s|\omega)\mathbf{1}(Z(s) \cup \hat{L}_{1} = \emptyset )v\big(\sum_{t,d \le \overline{d} }\gamma_{t}p_{d} \alpha(\emptyset,t,d;\Pi^{n}_{1})\big) +\\
		\sum_{s \in S_{good}} \pi_{1}(s|\omega)\mathbf{1}(Z(s) \cup L_{1} \ne \emptyset )
		v\big(  \sum_{t,d \le \overline{d} }p_{d}\gamma_{t}
		\zeta(t,d) +  p_{d}\gamma_{t}(1-\zeta(t,d))\alpha(\emptyset,t,d;\Pi^{n}_{1}) \big) + \\
		\sum_{s \in S_{int}}\pi_{1}(s|\omega)\mathbf{1}(Z(s) \cup \hat{L}_{1} \ne \emptyset )
		v\big(  \sum_{t,d \le \overline{d} }p_{d}\gamma_{t}
		\hat{\zeta}(t,d) +  p_{d}\gamma_{t}(1-\hat{\zeta}(t,d))\alpha(\emptyset,t,d;\Pi^{n}_{1}) \big) 
		\big]  +\beta  v(1) +\beta'
	\end{multline}
	Going through essentially the same steps, we can also show that it's bounded below by
	\begin{multline} \label{sec6eq4}
			\sum_{\omega}\mu_{s}(\omega) \big[ (1-\sum_{s \in S_{good}} \pi_{1}(s|\omega)\mathbf{1}(Z(s) \cup L_{1} = \emptyset )  - 	\sum_{s \in S_{int}}\pi_{1}(s|\omega)\mathbf{1}(Z(s) \cup \hat{L}_{1} = \emptyset )v\big(\sum_{t,d \le \overline{d} }\gamma_{t}p_{d} \alpha(\emptyset,t,d;\Pi^{n}_{1})\big) +\\
		\sum_{s \in S_{good}} \pi_{1}(s|\omega)\mathbf{1}(Z(s) \cup L_{1} \ne \emptyset )
		v\big(  \sum_{t,d \le \overline{d} }p_{d}\gamma_{t}
		\zeta(t,d) +  p_{d}\gamma_{t}(1-\zeta(t,d))\alpha(\emptyset,t,d;\Pi^{n}_{1}) \big) + \\
		\sum_{s \in S_{int}}\pi_{1}(s|\omega)\mathbf{1}(Z(s) \cup \hat{L}_{1} \ne \emptyset )
		v\big(  \sum_{t,d \le \overline{d} }p_{d}\gamma_{t}
		\hat{\zeta}(t,d) +  p_{d}\gamma_{t}(1-\hat{\zeta}(t,d))\alpha(\emptyset,t,d;\Pi^{n}_{1}) \big) 
		\big]  -\beta  v(1) -\beta'
	\end{multline}
	
	I first show that we can find some $\Pi_{2}^{n}$ such that $\mathbf{1}(Z(s) \cup L_{1} \ne \emptyset )  $ for all $s \in S_{good}$. If this is the case for $\Pi_{1}^{n}$, we are done. Suppose not and let $S_{good}'$ denote the set of signals in $S_{good}$ such that $Z(s) \cup L_{1} \ne \emptyset$. 
	Consider sender's strategy $\Pi_{2}^{n}$ that send $\emptyset$ when $\Pi_{1}^{n}$ sends signals in $S_{good }\setminus S_{good}'$, and is otherwise the same as $\Pi_{1}^{n}$.
    
	One can check that by construction receivers' action upon each observation is the same as when they face $\Pi_{1}^{n}$. Therefore, sender's payoff is bounded below by expression $\ref{sec6eq4}$.

	Let's denote elements in $S_{good}'$ by $\{s_{1}',s_{2}',...\}$ and denote $1-\sum_{s \in S_{good}' \cup S_{int} } \pi_{2}(s | \omega=0)$ by $K$. 
	Now, we construct $\Pi_{3}^{n}$ from $\Pi_{2}^{n}$ using the following procedure: 
	Starting from $j=1$,
	\begin{enumerate}
		\item Let $\pi_{3}(s_{j}'|\omega=0) = \frac{\pi_{2}(s_{j}'|\omega=1) \mu_{l}(\omega=1) }{\mu_{l}(\omega=0)  }$ if 
		\begin{multline}
			\frac{\pi_{2}(s_{j}'|\omega=1) \mu_{l}(\omega=1) }{\mu_{l}(\omega=0)  } - \pi_{2}(s_{j}'|\omega=0) \\ \le K -\sum_{i=1}^{j-1}  ( \pi_{3}(s_{i}'|\omega=0) - \pi_{2}(s_{i}'|\omega=0)   )
		\end{multline}
		\item Otherwise, set 
		\[
		\pi_{3}(s_{j}'|\omega=0)	=	\pi_{2}(s_{j}'|\omega=0) + K -\sum_{i=1}^{j-1}  ( \pi_{3}(s_{i}'|\omega=0) - \pi_{2}(s_{i}'|\omega=0))
		\]
	\end{enumerate}
	Set $\pi_{3}(s|\omega=1) = \pi_{2}(s|\omega=1)$ for all $s \in S_{good}'$.
	Set $\pi_{3}(s|\omega) = \pi_{2}(s|\omega)$ for all $s \in S_{int}$ and send $\emptyset$ with the complementary probability. Again, let seeding strategies remain the same. 
	
	One can easily verify that by construction receivers choose action 1 upon observing signals in $S_{good}'$ and type $h$ receivers choose action 1 upon observing signals in $S_{int}$. Also, since we are not changing the probabilities in state $\omega=1$, the probability that receivers observe $\emptyset$ in state $\omega=1$ is unchanged. In state $\omega=0$, we increased the probabilities that signals in $S_{good}'$ are realised and visible by $K$, and this decreases the probability of observing $\emptyset$ in state $\omega$ by at least $K \beta$. The probabilities of signals in $S_{int}$ being realised and visible are unchanged in state $\omega=0$. We reduced the probability of signals in $S_{bad}$ being realised and visible with probability by at most $K$, and this increases the probability of observing $\emptyset$ by at most $\beta K$. Therefore, compared to $\Pi_{2}^{n}$, receivers must observe $\emptyset$ in state 0 with weakly lower probability and thus must have weakly higher priors upon observing $\emptyset$. We also know that 
	\[
	v\big(  \sum_{t,d \le \overline{d} }p_{d}\gamma_{t}
	\zeta(t,d) +  p_{d}\gamma_{t}(1-\zeta(t,d))\alpha(\emptyset,t,d;\Pi^{n}_{1}) \big) \ge v\big(\sum_{t,d \le \overline{d} }\gamma_{t}p_{d} \alpha(\emptyset,t,d;\Pi^{n}_{1})\big)
	\]
	Therefore, sender's payoff from $\Pi_{3}^{n}$ is bounded below by expression $\ref{sec6eq4}$.

$\Pi_{3}^{n}$ will surely satisfy condition 1 in the lemma. If it also satisfies condition 2 in the lemma, the proof is finished. Suppose not, then by how $\Pi_{3}^{n}$ is constructed, we must have $\pi_{3}(\emptyset|\omega=0)=0$.  

	Conditional on $s \in S_{good}$ being realised, the probability that a receiver with type $t$ and degree $d \le \overline{d}$ observes it with probability at least
	\[
	\zeta(t,d) -2\beta 
	\]
	Similarly, conditional on some $s \in S_{int}$ being realised, the probability that a receiver with type $t$ and degree $d \le \overline{d}$ observes it with probability at most
	\[
	\hat{\zeta}(t,d) + 2 \beta
	\]
	Construct $\Pi_{5}^{n}$ from $\Pi_{3}^{n}$ in the following way. First, for all $s \in S_{good}'$, let
	\[
	\pi_{5}(s|\omega=1) =\pi_{3}(s|\omega=1)   \qquad  \pi_{5}(s|\omega=0) =\frac{\pi_{5}(s|\omega=1) \mu_{l}(\omega=1)}{\mu_{l}(\omega=0)}
	\]
	Let $K_{2} =  \sum_{s \in S_{good}'} \pi_{5}(s|\omega=0) - \pi_{3}(s|\omega=0)$. 
	Denote the elements in $S_{int}$ by $(s_{1}^{int},s_{2}^{int},...)$. Then, use the following procedure, starting from $i=1$
	\begin{enumerate}
		\item Let $\pi_{5}(s_{i}^{int}|\omega)$ be 0 if $K_{2}-\sum_{j=1}^{i} \pi_{3}(s_{j}^{int}|\omega=0) \ge 0$
		\item If $K_{2}-\sum_{j=1}^{i} \pi_{3}(s_{j}^{int}|\omega=0) < 0$, if the following ratio
		\[
	\frac{\mu_{l}(\omega=1)}{\mu_{l}(\omega=0)}	\frac{\pi_{3}(s_{i}^{int}|\omega=1)}{\pi_{3}(s_{i}^{int}|\omega=0) - (K-\sum_{j=1}^{i-1} \pi_{3}(s_{j}^{int}|\omega=0))}	
		\]
		is larger than 1, then set $\pi_{5}(s_{i}^{int}|\omega)=0$. Otherwise, set $\pi_{5}(s_{i}^{int}|\omega=1)=\pi_{3}(s_{i}^{int}|\omega=1)$ and $\pi_{5}(s_{i}^{int}|\omega=0)$ to be 
		\[
		\pi_{3}(s_{i}^{int}|\omega=0) - (K-\sum_{j=1}^{i-1} \pi_{3}(s_{j}^{int}|\omega=0))
		\]
		\item If $K-\sum_{j=1}^{i} \pi_{3}(s_{j}^{int}|\omega=0) > 0$ and $i < |S_{int}|$, continue steps above for $i+1$. Otherwise, terminate the process. Let $i_{end}$ be the index where the process is terminated.
	\end{enumerate}
	 For all $i > i_{end}$, let $\pi_{5}(s_{i}^{int}|\omega)=\pi_{3}(s_{i}^{int}|\omega)$. Send $\emptyset$ in state $\omega=1$ with the complementary probability. Let seeding strategy be unchanged from before. 
	 Again, by construction, all receivers choose action 1 upon observing signals in $S_{good}'$ and type $h$ receivers choose action 1 upon observing signals in $S_{int}$.
	
	I now argue that type $l$ receivers' action must be weakly higher under $\Pi_{5}$ compared to $\Pi_{3}$. Compared to $\Pi_{3}^{n}$, the probabilities of observing $\emptyset$ changed due to the following changes
	\begin{enumerate}
		\item In state 0, probability of observing signals in $S_{good}'$ increased by $K_{2}$, and probability of observing signals in $S_{int}$ decreased by $K_{2}$. Due to these changes, the probability of observing $\emptyset$ in state 0 is decreased by $(\zeta(t,d) -4 \beta -\hat{\zeta}(t,d))K_{2} >0$ for receiver with degree $d$ and type $t$.
		\item Probability of observing $\emptyset$ in state 1 increases due to decreases in probabilities of observing signals $s_{i}^{int}$ with $i < i_{end}$
		\item We know that likelihood ratio upon observing $\emptyset$ under $\Pi_{5}^{n}$ must be higher than that under $\Pi_{3}^{n}$ after these changes.
		\item Finally, we may need to decrease the probability of observing $s_{i_{end}}^{int}$. If it is decreased, then we know that the probability of observing it in state 1 is decreased by $\pi_{3}(s_{i_{end}}^{int})E(N_{s_{i_{end}}^{int},t,d}/N_{t,d} |s_{i_{end}}^{int}) $, and that in state 0 is decreased by some probability smaller than $\frac{\pi_{3}(s_{i_{end}}^{int}) \mu_{l}(\omega=1)}{\mu_{l}(\omega=0)}E(N_{s_{i_{end}}^{int},t,d}/N_{t,d} |s_{i_{end}}^{int}) $. Therefore, after being multiplied by type $l$ receiver's prior for their respective states, probability of observing $\emptyset$ decreases more in state 1 than in state 0. This implies that if likelihood ratio of observing $\emptyset$ is higher than 1 before the change, it must also be higher than 1 after the change. 
	\end{enumerate}
	Also, since $\Pi_{3}^{n}$ sends no $\emptyset$ in state 0. and all signals used must convince type $h$ receivers, one can verify that all receivers with type $h$ must choose action 1 upon observing $\emptyset$. Therefore, payoff from $\Pi_{5}^{n}$ must also be weakly higher than that from $\Pi_{3}^{n}$, and thus bounded be low by \ref{sec6eq4} again. 
	The proof is concluded since $\beta$ and $\beta'$ can be arbitrarily small and $v(1)$ is bounded. 
\end{proof}
I can show that under the following condition, $\lim_{n \rightarrow \infty} V_{net}^{*}(D^{(n)})$ exists. \\
\textbf{Condition A1}:
There exists no $d$ such that 
\[
\frac{\mu_{l}(\omega=1)}{ \mu_{h}(\omega=0) }\frac{ 1-\hat{\zeta}(l,d)\frac{\mu_{h}(\omega=0) }{\mu_{h}(\omega=1)}   }{ 1-\hat{\zeta}(l,d)  } =1
\]

From now on, suppose that condition A1 holds. 
For any  $\beta'>0$, we can find a $\hat{\beta}>0$ such that $sup( v(x+8\hat{\beta})-v(x)) < \beta'$. We can also find a $\beta < \hat{\beta}$, such that 
\[
\frac{\mu_{l}(\omega=1)}{ \mu_{h}(\omega=0) }\frac{ 1-(\hat{\zeta}(l,d) - 2\beta)\frac{\mu_{h}(\omega=0) }{\mu_{h}(\omega=1)}   }{ 1-(\hat{\zeta}(l,d) - 2\beta)}- 1
\]  
has the same sign as\footnote{This is possible because of condition A1.}
\[
\frac{\mu_{l}(\omega=1)}{ \mu_{h}(\omega=0) }\frac{ 1-(\hat{\zeta}(l,d) + 2\beta)\frac{\mu_{h}(\omega=0) }{\mu_{h}(\omega=1)}   }{ 1-(\hat{\zeta}(l,d) +2\beta)}- 1
\] and $\beta$ also satisfies 
\[
\frac{\mu_{l}(\omega=1)}{\mu_{l}(\omega=0)}\frac{ 1- \frac{\mu_{h}(\omega=0) }{ \mu_{h}(\omega=1)} 2 \beta}{1-   + 2 \beta } < 1
\]
For this $\beta$, we can find some $\overline{d}$ such that $\sum_{d>\overline{d}} (\gamma_{h}p_{d}^{h}+\gamma_{l}p_{d}^{l}	) <\beta $. Fix this $\overline{d}$. By lemmas above and construction, we can find some $n_{1}$ such that for all $n>n_{1}$, the following conditions are satisfied with probability $1-\beta$:
\begin{enumerate}
	\item $\sum_{d > \overline{d}}  \frac{N_{h,d}+N_{l,d}}{n}  < \beta  $
	\item $\frac{N_{t,d}}{n}  \in [\gamma_{t}p_{d}(1-\beta) ,  \gamma_{t}p_{d} (1+ \beta)]  $ for $t\in \{h,l\}$ and all $d \le \overline{d}$. 
	\item $g^{n} \in \mathcal{A}_{n}(\beta,\overline{d} )  $
	\item If there's some strategy $\Pi_{1}^{n}$ that achieves payoff $V$, then there exists another strategy $\Pi_{2}^{n}$ that achieves at least $V-\beta$ and satisfies conditions stated in lemma \ref{lemmanobadsignal}.
\end{enumerate}

By definition and point 4 above, we know that we can find some $n \ge n_{1}$ and some strategy $\Pi_{1}^{n}$ that achieves payoff at least $limsup_{n} V^{*}_{net}(D^{n}) -2\beta $, and it satisfies the conditions in lemma \ref{lemmanobadsignal}. Let $S_{good}$ denote the set of signals that convince both types of receivers, $S_{int}$ denote the set of signals that convince only type $h$ receivers. 

Following the same steps in proof of lemma \ref{lemmanobadsignal}, sender's payoff from $\Pi^{n}_{1}$ is bounded above by \ref{sec6eq5}
Similarly, given any strategy $\Pi^{n}$ that satisfies conditions in \ref{lemmanobadsignal}, sender's payoff is bounded below by \ref{sec6eq4}.

We consider two cases. First, suppose that no type $l$ receivers with degree less than $\overline{d}$ chooses action 1 upon observing $\emptyset$ with strategy $\Pi_{1}^{n}$. We can find some $d_{max}$ such that type $h$ receiver with degree $d_{max}$ chooses 1 upon observing $\emptyset$ and those with degrees $d_{max}< d \le \overline{d}$ choose 0 upon observing $\emptyset$. We know that the probability that receivers with type $h$ and degree $d_{max}$ observe signals in $S_{good}$ with probability at least
\begin{equation}
 max\{0,\zeta(h,d_{max})\sum_{s \in S_{good}}\pi_{2}(s|\omega=1)\mathbf{1}(Z(s) \cup L_{1}\ne \emptyset)-2\beta\} \equiv OB_{good}
\end{equation}
 in state $\omega=1$.
Similarly, they observe signals in $S_{int}$ with probability at least in state $\omega=1$
\[
max\{0,\hat{\zeta}(h,d_{max})\sum_{s \in S_{int}}\pi_{2}(s|\omega=1)\mathbf{1}(Z(s) \cup \hat{L}_{1}\ne \emptyset)- 2\beta	\} \equiv OB_{int}
\]
Therefore, the likelihood ratio upon observing $\emptyset$ of receiver with type $h$ and degree $d_{max}$ mast satisfy
\[
\frac{\mu_{h}(\omega=1)}{\mu_{h}(\omega=0)} \frac{1-OB_{good} - OB_{int}   }{ 1-\frac{\mu_{l}(\omega=1)}{\mu_{l}(\omega=0)}OB_{good}  - min\{1-\frac{\mu_{l}(\omega=1)}{\mu_{l}(\omega=0)}OB_{good},\frac{\mu_{h}(\omega=1)}{\mu_{h}(\omega=0)}OB_{int}\}  } \ge 1
\]
For any game with $n \ge n_{2}$ players, consider the following strategy for sender
\[
\pi_{2}(s_{good}|\omega=1) = OB_{good}/(\zeta(h,d_{max})+ 2 \beta ) 
\]
\[
\pi_{2}(s_{int}|\omega=1) = OB_{int}/(\hat{\zeta}(h,d_{max})+2 \beta )
\]
\[
\pi_{2}(s_{good}|\omega=0) =  \frac{\pi_{2}(s_{good}|\omega=1) \mu_{l}(\omega=1)}{\mu_{l}(\omega=0)} 
\]
\[
\pi_{2}(s_{int}|\omega=0) =  min \{\frac{\pi_{2}(s_{int}|\omega=1) \mu_{h}(\omega=1)}{\mu_{h}(\omega=0)} , 1-\pi_{2}(s_{good}|\omega=0) \}
\]
Send $\emptyset$ with the complementary probability in each state.
Let the seed always hit relevant giant component. 
If $\pi_{2}(\emptyset|\omega=0)=0$, then one can easily verify that all receivers with type $h$ must choose action 1 upon observing $\emptyset$. Otherwise, receiver with degree $d_{max}$ and type $h$ observes $s_{good}$ in state 1 with probability at most $OB_{good}$
and $s_{int}$ in state 1 with probability at most $OB_{int}$.
Therefore, likelihood ratio upon observing $\emptyset$ is at least
\[
\frac{\mu_{h}(\omega=1)}{\mu_{h}(\omega=0)} \frac{1-OB_{good} - OB_{int}   }{ 1-\frac{\mu_{l}(\omega=1)}{\mu_{l}(\omega=0)}OB_{good}  - \frac{\mu_{h}(\omega=1)}{\mu_{h}(\omega=0)}OB_{int}  } \ge 1
\]
Therefore, we know that receivers with both types choose action 1 upon observing $s_{good}$, and type $h$ receivers choose 1 upon observing $s_{int}$. Type $h$ receivers with degree weakly less than $d_{max}$ choose action 1 upon observing $\emptyset$. We can bound sender's payoff below by
\begin{multline} \label{sec6eq6}
	-\beta v(1) - \beta'+ \sum_{\omega}\mu_{s}(\omega) \big[ (1-\pi_{2}(s_{good}|\omega)  - \pi_{2}(s_{int}|\omega))v\big(\sum_{t,d \le \overline{d} }\gamma_{t}p_{d} \alpha(\emptyset,t,d;\Pi^{n}_{1})\big) +\\
	\pi_{2}(s_{good}|\omega)
	v\big(  \sum_{t,d \le \overline{d} }p_{d}\gamma_{t}
	\zeta(t,d) +  p_{d}\gamma_{t}(1-\zeta(t,d))\alpha(\emptyset,t,d;\Pi^{n}_{1}) \big) + \\
	\pi_{2}(s_{int}|\omega)
	v\big(  \sum_{t,d \le \overline{d} }p_{d}\gamma_{t}
	\hat{\zeta}(t,d) +  p_{d}\gamma_{t}(1-\hat{\zeta}(t,d))\alpha(\emptyset,t,d;\Pi^{n}_{1}) \big) 
	\big]  
\end{multline}
From how the strategies are constructed, we know that 
\[
\sum_{s \in S_{good}}\pi_{1}(s|\omega=1)\mathbf{1}(Z(s) \cup L_{1}\ne \emptyset)	- \pi_{2}(s_{good}|\omega=1)   \le 4\beta/(\zeta(h,d=1))
\]
We also know that 
\[
\sum_{s \in S_{good}}\pi_{1}(s|\omega=0)\mathbf{1}(Z(s) \cup L_{1}\ne \emptyset)	- \pi_{2}(s_{good}|\omega=0)   \le \frac{\mu_{l}(\omega=1) }{\mu_{l}(\omega=0) }4\beta/(\zeta(h,d=1))
\]
Similarly, we have
\[
\sum_{s \in S_{int}}\pi_{1}(s|\omega=1)\mathbf{1}(Z(s) \cup \hat{L}_{1}\ne \emptyset)	- \pi_{2}(s_{int}|\omega=1)   \le min\{1,4\beta/(\hat{\zeta}(h,d=1))\}
\]
and
\[
\sum_{s \in S_{int}}\pi_{1}(s|\omega=0)\mathbf{1}(Z(s) \cup \hat{L}_{1}\ne \emptyset)	- )\pi_{2}(s_{int}|\omega=0)   \le	min\{1,\frac{\mu_{h}(\omega=1) }{\mu_{h}(\omega=0) } 4\beta/(\hat{\zeta}(h,d=1))\}
\]
Also, because we know that $ v\big(\sum_{t,d \le \overline{d} }\gamma_{t}p_{d} \alpha(\emptyset,t,d;\Pi^{n}_{1})\big) \le v\big(  \sum_{t,d \le \overline{d} }p_{d}\gamma_{t}
\hat{\zeta}(t,d) +  p_{d}\gamma_{t}(1-\hat{\zeta}(t,d))\alpha(\emptyset,t,d;\Pi^{n}_{1}) \big)  \le v\big(  \sum_{t,d \le \overline{d} }p_{d}\gamma_{t}
\zeta(t,d) +  p_{d}\gamma_{t}(1-\zeta(t,d))\alpha(\emptyset,t,d;\Pi^{n}_{1}) \big)$, we know that sender's payoff from $\Pi_{2}$ is at least
\begin{multline} 
	\sum_{\omega}\mu_{s}(\omega) \big[ (1-\sum_{s \in S_{good}} \pi_{1}(s|\omega)\mathbf{1}(Z(s) \cup L_{1}\ne \emptyset)  - 	\sum_{s \in S_{int}}\pi_{1}(s|\omega)\mathbf{1}(Z(s) \cup \hat{L}_{1}\ne \emptyset))v\big(\sum_{t,d \le \overline{d} }\gamma_{t}p_{d} \alpha(\emptyset,t,d;\Pi^{n}_{1})\big) +\\
	 \sum_{s \in S_{good}} \pi_{1}(s|\omega)\mathbf{1}(Z(s) \cup L_{1}\ne \emptyset) 
	v\big(  \sum_{t,d \le \overline{d} }p_{d}\gamma_{t}
	\zeta(t,d) +  p_{d}\gamma_{t}(1-\zeta(t,d))\alpha(\emptyset,t,d;\Pi^{n}_{1}) \big) + \\
	\sum_{s \in S_{int}}\pi_{1}(s|\omega)\mathbf{1}(Z(s) \cup \hat{L}_{1}\ne \emptyset)
	v\big(  \sum_{t,d \le \overline{d} }p_{d}\gamma_{t}
	\hat{\zeta}(t,d) +  p_{d}\gamma_{t}(1-\hat{\zeta}(t,d))\alpha(\emptyset,t,d;\Pi^{n}_{1}) \big) 
	\big]  \\
	- \bigg(4\beta/(\zeta(h,d=1)) + \frac{\mu_{l}(\omega=1) }{\mu_{l}(\omega=0) }4\beta/(\zeta(h,d=1)) \bigg)v(1)-\beta v(1)-\beta'  \\
-min\{2,4\beta/(\hat{\zeta}(h,d=1))\bigg(1	 + \frac{\mu_{h}(\omega=1) }{\mu_{h}(\omega=0) } \bigg)\} \\ (  v\big(  \sum_{t,d \le \overline{d} }p_{d}\gamma_{t}
\hat{\zeta}(t,d) +  p_{d}\gamma_{t}(1-\hat{\zeta}(t,d))\alpha(\emptyset,t,d;\Pi^{n}_{1}) \big) -v\big(\sum_{t,d \le \overline{d} }\gamma_{t}p_{d} \alpha(\emptyset,t,d;\Pi^{n}_{1})\big) )
\end{multline}
If $\hat{\zeta}(h,d=1) > 0$, then the result follows since $\beta$ and $\beta'$ can be arbitrarily small and $v(.)$ is bounded. If $\hat{\zeta}(h,d=1) = 0$, then we must have $\hat{\zeta}(h,d) = 0$ for all $d$, and therefore the last term of the expression above is equal to 0. The result then follows since again $\beta$ and $\beta'$ can be arbitrarily small and $v(.)$ is bounded.

Now, we consider the second case, where there exists some $d_{min} \le \overline{d}$ such that all type $l$ receivers with degree lower than $d_{min}$ choose action 0 upon observing $\emptyset$, and those with $d_{min}$ choose action 1 upon observing $\emptyset$.
Receivers with type $t$ and degree $d_{min}$ observe signals in $S_{int}$ in state 0 with probability at most
\[
\hat{\zeta}(l,d_{min}) \sum_{s \in S_{int}} \pi_{1}(s|\omega=0)\mathbf{1}(Z(s) \cup \hat{L}_{1}\ne \emptyset) + 2\beta
\]
This implies that their likelihood ratio conditional on not observing signals in $S_{int}$ (i.e. conditional on observing signals in $S_{good}$ or $\emptyset$) is at most
\[
\frac{\mu_{l}(\omega=1)}{\mu_{l}(\omega=0)}\frac{ 1- \frac{\mu_{h}(\omega=0) }{ \mu_{h}(\omega=1)}( \hat{\zeta}(l,d_{min}) \sum_{s \in S_{int}} \pi_{1}(s|\omega=0)\mathbf{1}(Z(s) \cup \hat{L}_{1}\ne \emptyset) + 2 \beta)}{1-  (\hat{\zeta}(l,d_{min})\sum_{s \in S_{int}} \pi_{1}(s|\omega=0)\mathbf{1}(Z(s) \cup \hat{L}_{1}\ne \emptyset) + 2 \beta) } 
\]
which must be weakly greater than 1 since these receivers choose action 1 upon observing signals in $S_{good}$ or $\emptyset$. Also, this implies that $\frac{|\hat{L}_{1}|}{n}$ must not converge to 0, and thus $\hat{ \zeta }(t,d)$ must be positive 0 for all $t$ and $d$. This implies that 
\begin{equation} \label{equation76}
	\frac{\mu_{l}(\omega=1)}{ \mu_{h}(\omega=0) }\frac{ 1-(\hat{\zeta}(l,d_{min}) - 2\beta)\frac{\mu_{h}(\omega=0) }{\mu_{h}(\omega=1)}   }{ 1-(\hat{\zeta}(l,d_{min}) - 2\beta)} \ge 1
\end{equation}

Consider the following strategy $\Pi_{3}^{n}$ for any $n \ge n_{1}$. 
\[\pi_{3}(s_{int} |\omega=0) = min\{1, \frac{\hat{\zeta}(l,d_{min})}{\hat{\zeta}(l,d_{min})- 2\beta}\sum_{s \in S_{int}} \pi_{1}(s|\omega=0)\mathbf{1}(Z(s) \cup \hat{L}_{1}\ne \emptyset) + \frac{2\beta}{ \hat{\zeta}(l,d_{min})- 2\beta}\}\]
\[
\pi_{3}(s_{good}|\omega =0) = 1 -\pi_{3}(s_{int}|\omega =0)
\]
\[
\pi_{3}(s_{good}|\omega =1)=\frac{\mu_{l}(\omega=0)}{\mu_{l}(\omega=1)} \pi_{3}(s_{good} =0)
\]
\[
\pi_{3}(s_{int}|\omega =1)= \frac{\mu_{h}(\omega=0)}{\mu_{h}(\omega=1)}  \pi_{3}(s_{good} =0)
\]
Send $\emptyset$ with the complementary probabilities in each state. 

If $\pi_{3}(s_{int} |\omega=0)=1$, then all receivers with degrees higher than $d_{max}$ observe $s_{int}$ in state 0 with probability at least $\hat{\zeta}(l,d_{min})-2\beta$ and thus will choose
will choose action 1 upon observing $\emptyset$ by equation \ref{equation76}. If $\pi_{3}(s_{int} |\omega=0)b_{3}(\omega=0) \ne 1$,
receivers with type $l$ and degree $d_{min}$ or higher observe $s_{int}$ in state $\omega$ with probability at least
\[
\hat{\zeta}(l,d_{min})sum_{s \in S_{int}} \pi_{1}(s|\omega=0)\mathbf{1}(Z(s) \cup \hat{L}_{1}\ne \emptyset) + 2\beta
\]
Therefore, the likelihood ratio upon not observing $s_{int}$ (i.e. observing $s_{good}$ or $\emptyset$) is at least 
\[
\frac{\mu_{l}(\omega=1)}{\mu_{l}(\omega=0)}\frac{ 1- \frac{\mu_{h}(\omega=0) }{ \mu_{h}(\omega=1)}(\hat{\zeta}(l,d_{min})\sum_{s \in S_{int}} \pi_{1}(s|\omega=0)\mathbf{1}(Z(s) \cup \hat{L}_{1}\ne \emptyset) + 2\beta)}{1-  (\hat{\zeta}(l,d_{min})\sum_{s \in S_{int}} \pi_{1}(s|\omega=0)\mathbf{1}(Z(s) \cup \hat{L}_{1}\ne \emptyset) + 2\beta)} 
\]
Then, since the posterior upon observing $s_{good}$ is exactly 0.5, we know that the posterior upon observing $\emptyset$ must also be higher than 0.5. 

Therefore, sender's payoff from $\Pi_{3}^{n}$ is bounded below by
\begin{multline} 
	-\beta v(1)-\beta'+ \sum_{\omega}\mu_{s}(\omega) \big[ (1-\pi_{3}(s_{good}|\omega)  - 	\pi_{3}(s_{int}|\omega)v\big(\sum_{t,d \le \overline{d} }\gamma_{t}p_{d} \alpha(\emptyset,t,d;\Pi^{n}_{1})\big) +\\
	\pi_{3}(s_{good}|\omega)
	v\big(  \sum_{t,d \le \overline{d} }p_{d}\gamma_{t}
	\zeta(t,d) +  p_{d}\gamma_{t}(1-\zeta(t,d))\alpha(\emptyset,t,d;\Pi^{n}_{1}) \big) + \\
	\pi_{3}(s_{int}|\omega)
	v\big(  \sum_{t,d \le \overline{d} }p_{d}\gamma_{t}
	\hat{\zeta}(t,d) +  p_{d}\gamma_{t}(1-\hat{\zeta}(t,d))\alpha(\emptyset,t,d;\Pi^{n}_{1}) \big) 
	\big]  
\end{multline}
By construction of $\Pi_{3}^{n}$,
\begin{multline}
	\sum_{\omega}\mu_{s}(\omega) \pi_{3}(s_{int}|\omega)  -\sum_{\omega}\mu_{s}(\omega)\sum_{s \in S_{int}}\pi_{1}(s|\omega)\mathbf{1}(Z(s) \cup \hat{L}_{1}\ne \emptyset)  \le \\
	(\mu_{s}(\omega=0)+\mu_{s}(\omega=1)\frac{\mu_{h}(\omega=1) }{\mu_{h}(\omega=0) }  )\frac{4\beta}{(\hat{\zeta}(l,d=1)-2\beta)}   
\end{multline}
Since $\sum_{s \in S_{good}}\pi_{1}(s|\omega)\mathbf{1}(Z(s) \cup L_{1}\ne \emptyset) \le 1-\sum_{s \in S_{int}}\pi_{1}(s|\omega) \mathbf{1}(Z(s) \cup \hat{L}_{1}\ne \emptyset)$ , We also know that 
\begin{multline}
	\sum_{\omega}\mu_{s}(\omega)\sum_{s \in S_{good}}\pi_{1}(s|\omega)\mathbf{1}(Z(s) \cup L_{1}\ne \emptyset) -  \sum_{\omega}\mu_{s}(\omega) \pi_{3}(s_{good}|\omega)  \le \\
	(\mu_{s}(\omega=0)+\mu_{s}(\omega=1)\frac{\mu_{l}(\omega=1) }{\mu_{l}(\omega=0) }  )\frac{4\beta}{(\hat{\zeta}(l,d=1)-2\beta)}  
\end{multline}

Therefore, sender's payoff from $\Pi_{3}^{n}$ is bounded below by
\begin{multline} 
	\sum_{\omega}\mu_{s}(\omega) \big[ (1- \sum_{s \in S_{good}} \pi_{1}(s|\omega)\mathbf{1}(Z(s) \cup L_{1}\ne \emptyset)   - \sum_{s \in S_{int}}\pi_{1}(s|\omega)\mathbf{1}(Z(s) \cup \hat{L}_{1}\ne \emptyset) )v\big(\sum_{t,d \le \overline{d} }\gamma_{t}p_{d} \alpha(\emptyset,t,d;\Pi^{n}_{1})\big) +\\
	 \sum_{s \in S_{good}} \pi_{1}(s|\omega)\mathbf{1}(Z(s) \cup L_{1}\ne \emptyset)  
	v\big(  \sum_{t,d \le \overline{d} }p_{d}\gamma_{t}
	\zeta(t,d) +  p_{d}\gamma_{t}(1-\zeta(t,d))\alpha(\emptyset,t,d;\Pi^{n}_{1}) \big) + \\
	\sum_{s \in S_{int}}\pi_{1}(s|\omega)\mathbf{1}(Z(s) \cup \hat{L}_{1}\ne \emptyset) 
	v\big(  \sum_{t,d \le \overline{d} }p_{d}\gamma_{t}
	\hat{\zeta}(t,d) +  p_{d}\gamma_{t}(1-\hat{\zeta}(t,d))\alpha(\emptyset,t,d;\Pi^{n}_{1}) \big) 
	\big] \\- \beta v(1)-\beta' -\frac{4\beta}{(\hat{\zeta}(l,d=1)-2\beta)}v(1)(2\mu_{s}(\omega=0)+\mu_{s}(\omega=1)\frac{\mu_{l}(\omega=1) }{\mu_{l}(\omega=0) }   + \mu_{s}(\omega=1)\frac{\mu_{h}(\omega=1) }{\mu_{h}(\omega=0) }  )
\end{multline}

Recall that receiver's payoff from $\Pi_{1}$ is bounded above by
\begin{multline} 
	\sum_{\omega}\mu_{s}(\omega) \big[ (1- \sum_{s \in S_{good}} \pi_{1}(s|\omega)\mathbf{1}(Z(s) \cup L_{1}\ne \emptyset)   - 	\sum_{s \in S_{int}}\pi_{1}(s|\omega)\mathbf{1}(Z(s) \cup \hat{L}_{1}\ne \emptyset) )v\big(\sum_{t,d \le \overline{d} }\gamma_{t}p_{d} \alpha(\emptyset,t,d;\Pi^{n}_{1})\big) +\\
	 \sum_{s \in S_{good}} \pi_{1}(s|\omega)\mathbf{1}(Z(s) \cup L_{1}\ne \emptyset)  
	v\big(  \sum_{t,d \le \overline{d} }p_{d}\gamma_{t}
	\zeta(t,d) +  p_{d}\gamma_{t}(1-\zeta(t,d))\alpha(\emptyset,t,d;\Pi^{n}_{1}) \big) + \\
	\sum_{s \in S_{int}}\pi_{1}(s|\omega)\mathbf{1}(Z(s) \cup \hat{L}_{1}\ne \emptyset) 
	v\big(  \sum_{t,d \le \overline{d} }p_{d}\gamma_{t}
	\hat{\zeta}(t,d) +  p_{d}\gamma_{t}(1-\hat{\zeta}(t,d))\alpha(\emptyset,t,d;\Pi^{n}_{1}) \big) 
	\big]  + \beta v(1) + \beta'
\end{multline}
Since $v(1)$ is bounded, and we can choose $\beta$ and $\beta'$ to be arbitrarily small, The result follows. 

Let $\mathcal{P}$ denote the set of network signal strategies that:
\begin{enumerate}
    \item use only two non-empty signals $s$ and $s'$ with positive probability.
    \item probabilities satisfy $\pi(s|\omega=1)\mu_{l}(\omega=1)=\pi(s|\omega=0)\mu_{l}(\omega=0)$ and $\pi(s'|\omega=0) =min\{\frac{\pi(s'|\omega=1)\mu_{l}(\omega=1)}{\mu_{l}(\omega=0)},1-\pi(s|\omega=0)\} $.
    \item seed of $s$ is always located on the giant component and seed of $s'$ is always on the giant component in the subnetwork of type $h$ nodes. 
\end{enumerate}

The proof above shows that there exists an $\overline{n}$ large enough such that for any $n_{1}>\overline{n}$, the supremum payoff for that $n_{1}$ can be approximately achieved for all $n>\overline{n}$. This shows that sender can achieve the limsup of $V_{net}^{*}(n)$, which by definition implies that limit exists for $V_{net}^{*}(n)$. 

In addition, the result above implies that when $n$ gets large, we can get arbitrarily close to $V_{net}^{n} = \lim_{n \rightarrow \infty}V_{net}^{*}(n)$ by using some strategy in the set $\mathcal{P}$ for all $n$. Consider any $\Pi \in \mathcal{P}$. By \Cref{lemma:non_empty_posterior}, all receivers choose action 1 upon observing $s$, and only type $h$ receivers choose action 1 upon observing $s'$. This implies that a receiver with type $t$ and degree $d$ has a likelihood ratio for $\emptyset$ that converges to 
\begin{equation} \label{eq:online_app_likelihood}
    \frac{\mu_{t}(\omega=1)( 1 -  \pi(s|\omega=1)\zeta(t,d) - \pi(s'|\omega=1)\hat{\zeta}(t,d)  )    }{ \mu_{t}(\omega=0)( 1 -  \pi(s|\omega=0)\zeta(t,d) - \pi(s'|\omega=0)\hat{\zeta}(t,d)  )}
\end{equation}

If this ratio converges to below (above) 1, then action of a receiver with type $t$ and degree $d$ converges to 0 (1). I now argue that if the ratio is exactly 0.5, then it's WLOG to assume that receiver's action converges to 1 when evaluating payoffs. First, suppose that the type with limiting likelihood ratio equal to 1 is type $h$. This implies that $\pi(s|\omega) \ne 0$. The sender could then reduce $\pi(s|\omega)$ by an arbitrarily small proportion, which by definition of $\pi(s|\omega)$ in $\mathcal{P}$ will increase the likelihood ratio. 

Now suppose that the type is $l$. If $\pi(s'|\omega=1)\mu_{h}(\omega=1)>\pi(s'|\omega=0)\mu_{h}(\omega=0)$, then sender can reduce $\pi(s'|\omega=1)$ by an arbitrarily small amount. This raises the lieklihood ratio while still keeping $\pi(s'|\omega=1)\mu_{h}(\omega=1)>\pi(s'|\omega=0)\mu_{h}(\omega=0)$. If $\pi(s'|\omega=1)\mu_{h}(\omega=1)= \pi(s'|\omega=0)\mu_{h}(\omega=0)$, the sender can increase $\pi(s'|\omega)$ by some arbitrarily small proportion. If in addition $\pi(\emptyset|\omega)>0$ in both states, the increase in $\pi(s'|\omega)$ can be made small enough to make sure that $\pi(s|\omega) + \pi(s'|\omega) \le 1$. If $\pi(\emptyset|\omega)=0$ in some state, sender can reduce $\pi(s|\omega)$ by an arbitrarily small proportion to make sure that $\pi(s|\omega) + \pi(s'|\omega) \le 1$ in both states. In either case, the change described raises the likelihood ratio. 

Let $\Pi'$ denote the modified strategy. By how the changes are constructed, limiting posteriors upon observing $\emptyset$ under $\Pi'$ are not equal to 0.5, and are larger than 0.5 iff they are weakly larger than 0.5 under $\Pi$. Therefore, under $\Pi'$ and when $n$ gets large, receivers choose action 1 upon observing $\emptyset$ iff their limiting posterior upon observing $\emptyset$ is weakly larger than 0.5 under $\Pi$. Since the change in $\pi$ is arbitrarily small, sender's limiting payoff from $\Pi'$ will be arbitrarily close to her payoff from $\Pi$, assuming agents' actions upon observing $\emptyset$ converge to 1 when the limiting likelihood ratio is 1. 

Therefore, sender's limiting supermum payoff is solution to the following maximisation problem. Also, when $n$ gets large, sender can use a strategy in $\mathcal{P}$ that's arbitrarily close to the maximiser to get arbitrarily close to the supremum payoff.
\begin{multline} \label{eq:online_app_maxprob}
   \max_{\Pi \in \mathcal{P}} \sum_{\omega}\mu_{s}(\omega)\big[\pi(s|\omega)v(\sum_{d,t}p_{d}^{t}(\zeta(t,d)   +(1-\zeta(t,d))a^{*}(\emptyset;t,d) )  ) +  \\ \pi(s'|\omega)v(\sum_{d}p_{d}^{h}\hat{\zeta}(h,d)   +  \sum_{d,t}p_{d}^{t} (1-\hat{\zeta}(t,d))a^{*}(\emptyset;t,d) )  + (1-  \pi(s'|\omega)-\pi(s|\omega))v( \sum_{d,t}p_{d}^{t} a^{*}(\emptyset;t,d) )        \big] 
\end{multline}
subject to the condition that $a^{*}(\emptyset;t,d)=1$ if and only if the ratio $\ref{eq:online_app_likelihood}$ is weakly larger than 1.

I now discuss what could happen when condition A1 does not hold. Suppose that condition A1 is violated, and optimal solution to the problem above is given by $\pi(s|\omega)=0$ and $\pi(s'|\omega=0)=1$ and $\pi(s'|\omega=1) = \frac{\mu_{h}(\omega=0)}{\mu_{h}(\omega=1)}$. It's possible that this limiting payoff could not be achieved without putting seeds outside the giant component, but can be achieved with seeds outside the giant component. 

Since condition A1 is violated, there exists a degree $d$ such that 
\begin{equation} \label{eq: online_app_eqd21}
    \frac{\mu_{l}(\omega=1)( 1  - \frac{\mu_{h}(\omega=0)}{\mu_{h}(\omega=1)}\hat{\zeta}(t,d)  )    }{ \mu_{l}(\omega=0)( 1 - \hat{\zeta}(t,d)  )} = 1
\end{equation}
In any game with $n$ receivers, under the signal structure with $\pi(s|\omega)=0$, $\pi(s'|\omega=0)=1$ and $\pi(s'|\omega=1) = \frac{\mu_{h}(\omega=0)}{\mu_{h}(\omega=1)}$, receivers with type $t$ and degree $d$ have a posterior upon observing $\emptyset$
\[
 \frac{\mu_{l}(\omega=1)E( 1  - \frac{\mu_{h}(\omega=0)}{\mu_{h}(\omega=1)}\frac{N_{t,d,\hat{L}_{1}} }{N_{t,d} }  )    }{ \mu_{l}(\omega=0)E( 1 -  \frac{N_{t,d,\hat{L}_{1}} }{N_{t,d} }  )}
\]
which will converge to expression $\ref{eq: online_app_eqd21}$. It's possible that $\frac{N_{t,d,\hat{L}}}{N_{t,d}}$ converges to $\hat{\zeta}(t,d)$ from below and never reaches it, or fluctuates around it. In either case, there will exist infinitely many values of $n$ above any $\overline{n}$ where using strategies in $\mathcal{P}$ will not achieve solution from the reduced form maximisation problem \ref{eq:online_app_maxprob}. 

However, by using strategies not in $\mathcal{P}$, sender may be able to persuade agents of type $l$ and degree $d$ to take action 1 upon observing $\emptyset$. Let $N_{l,d,max}$ denote the number of nodes outside the giant component that observe $s'$, when seeds are chosen optimally to maximise the number of nodes observing it. In the game with $n$ receivers, upon observing $\emptyset$, receivers with type $l$ and degree $d$ have a likelihood ratio
\begin{equation} \label{eq:online_app_eq22}
     \frac{\mu_{l}(\omega=1)E( 1  - \frac{\mu_{h}(\omega=0)}{\mu_{h}(\omega=1)}(\frac{N_{l,d,\hat{L}_{1}} }{N_{l,d} } +\frac{N_{l,d,max}}{n})  )    }{ \mu_{l}(\omega=0)E( 1 -  \frac{N_{l,d,\hat{L}_{1}} }{N_{l,d} }  -\frac{N_{l,d,max}}{N_{l,d}})}
\end{equation}
which can be larger than 1. 

When $\frac{N_{l,d,max}}{n}$ is large enough (i.e. expression \ref{eq:online_app_eq22} is larger than 1), then the sender can achieve a payoff arbitrarily close to result from the reduced form maximisation problem \ref{eq:online_app_maxprob}. However, as discussed, this cannot be achieved with strategies in $\mathcal{P}$, and the sender must choose seeds outside the giant component appropriately. In large networks, this can be extremely difficult. When expression \ref{eq:online_app_eq22} is smaller than 1, then solution of the reduced form maximisation problem cannot be achieved. 

Finally, another point worth noting is that whether expression \ref{eq:online_app_eq22} is larger than 1 for infinitely many values of $n$ depends on the relative speeds at which the random networks converge to its limiting distributions and how fast the number of seeds as a proportion of $n$ converges to 0. More specifically, we need to look at how fast $\frac{N_{t,d,\hat{L}_{1}} }{N_{t,d} } $ converges to $\hat{\zeta}(t,d)$ and $\frac{N_{l,d,max}}{n}$ converges to 0. However, it's difficult to interpret both speeds. The sequence of networks (and games) is a mathematical construct to formally capture large social networks. The sequence itself and how the network gets large do not capture anything in real world networks. For seeds, the results in the main text only requires seeds to be vanishing as a fraction of total population, and number $n^{\alpha}$ is chosen purely for convenience. In other words, the speed of how fast seeds as a fraction of $n$ converge to 0 is completely arbitrary. Given these, the discussion on optimal strategies when condition A1 is violated should be seen as illustrating the possibility that seeds reaching a vanishing fraction of population could still be important for sender's payoff and optimal strategies. The precise conditions (on relative speeds of convergence) are not very informative. 

\end{document}